\newcommand{\sym}[1]{\mathfrak{S}_{#1}}
\newcommand{\seclass}[3]{[[#1,#2:{#3}]]} 
\newcommand{\eclass}[4]{[[#1,#2,#3\,:\,#4]]} 
\def\bb{bb}
\edef\x{\endgroup
    \noexpand\@namedef{\bb\next}{\noexpand\mathbb{\next}}%
  }\x}
\def\cal{cal}
\edef\x{\endgroup
    \noexpand\@namedef{\cal\next}{\noexpand\mathcal{\next}}%
  }\x}
\newcommand{\aut}[1]{\mathrm{Aut}(#1)}
\newcommand{\Pauli}{\mathcal{P}}
\newcommand{\PauliN}{\Pauli_n}
\newcommand{\baseCliff}{\mathcal{C}\ell}
\newcommand{\iCl}[1]{\baseCliff_{#1}^{\infty}}
\newcommand{\Cl}[1]{\baseCliff_{#1}}
\newcommand{\Cln}[2]{\baseCliff_{#1}^{\otimes #2}}
\newcommand{\gcoeff}[2]{\genfrac{[}{]}{0pt}{}{#1}{#2}}
\newcommand{\db}{\backslash\mkern-5mu\backslash}
\newcolumntype{L}[1]{>{\hsize=#1\hsize\RaggedRight} X}
\newif\ifphysics
\newif\ifmath
\newtheorem{theorem}{Theorem}[section]
\newtheorem{lemma}[theorem]{Lemma}
\newtheorem{proposition}[theorem]{Proposition}
\newtheorem{corollary}[theorem]{Corollary}
\newtheorem{definition}{Definition}[section]
\renewenvironment{proof}[1][\proofname]{\par
  \vspace{-\topsep}
  \pushQED{\qed}%
  \normalfont
  \topsep3pt \partopsep3pt 
  \trivlist
  \item[\hskip\labelsep
        \itshape
    #1\@addpunct{.}]\ignorespaces
}{%
  \popQED\endtrivlist\@endpefalse
  \addvspace{6pt plus 6pt} 
}
\title{Small Binary Stabilizer Subsystem Codes}
\author[1]{Andrew Cross}
\author[2]{Drew Vandeth}
\affil[1]{IBM Quantum, IBM T.J. Watson Research Center}
\affil[2]{IBM Quantum, IBM Brisbane}
\begin{document}
\maketitle

\begin{abstract}
    We establish a database consisting of a representative of every binary quantum stabilizer code under local Clifford permutation equivalence for $n\leq 9$. 
\end{abstract}

\section{Introduction}

Several databases and tables exist that list quantum codes. Due to the sheer volume of possible codes, databases such as \texttt{codetables.de}~\cite{Grassl:codetables} by Markus Grassl or \texttt{quantumcodes.info}~\cite{aydin2021databasequantumcodes} by Nuh Aydin et. al. or Magma's QECC database~\cite{MagmaQECC} tend to only list the best or extremal codes~\footnote{An extremal code is a $[[n,k,d]]$ code where either $k$ is maximal  or $d$ is maximal (with possibly $k$ maximal secondly) given $n$.}. Lars Eirik Danielsen's database of self-dual quantum codes~\cite{DanielsenDatabase} on the other hand is a comprehensive database but is restricted to equivalence classes of zero-dimensional quantum codes. Databases like the Error Correction Zoo~\cite{ErrorCorrectionZoo} organized by Victor Albert and Philippe Faist on the other hand focus on listing different types of codes that have been studied in published papers and theses. Other databases exist that list specific types of codes such as twisted codes by Jürgen Bierbrauer and Yves Edel~\cite{Bierbrauer2000},  triorthogonal codes by Jain and Albert~\cite{Jain2024HighdistanceCW} or triorthogonal codes by Nezami and Haah~\cite{Nezami2021ClassificationOS}.

This article is concerned with cataloging and enumerating binary stabilizer codes~\cite{CalderbankEtAl1997} and their generalization to binary stabilizer subsystem codes~\cite{Poulin2005}. Binary stabilizer and subsystem codes are subspaces or subsystems within the Hilbert space $\calH\simeq (\bbC^2)^{\otimes n}$ that are defined via subgroups of the $n$-qubit Pauli group $\Pauli_n$. This relationship allows us to more compactly describe these codes and search for new codes. To reduce the number of codes to catalogue and enumerate we take the approach in~\cite{DanielsenDatabase} or~\cite{YCO07} and only consider equivalence classes of codes under some reasonable definition of equivalence. In particular, we find a representative of each equivalence class of binary stabilizer code under Local Clifford with permutation equivalence. From this catalogue we can then study, catalogue or enumerate subsystem codes. Recent work on the equivalence of codes, and in particular canonical forms of codes, can be found in~\cite{Khesin2024EquivalenceCO}.

This paper is organized as follows. First, Section~\ref{sec:code-equivalence} reviews what it means for two stabilizer codes to be equivalent. Section~\ref{sec:mass-formulas} reviews mass formulas and how they can be used to verify counting results. Section~\ref{sec:canonical-representations} defines a canonical representation for stabilizer codes that can be used to test code equivalence. Section~\ref{sec:search-strategies} describes our approaches for enumerating inequivalent stabilizer codes. Section~\ref{sec:code-properties} defines several code properties that can be computed. Section~\ref{sec:results} presents the results of our search.

\section{Code Equivalence}\label{sec:code-equivalence}

We first briefly discuss Pauli and Clifford groups and their relationship to various definitions of code equivalence. The various definitions of code equivalence can make it easier to discover codes with useful properties. 

\subsection{Pauli and Clifford Groups} 
The Pauli group $\PauliN$ on $n$ qubits is defined as $\langle i, X, Z\rangle^{\otimes n}$ and the complex Clifford group $\iCl{1}$ on a single qubit  is defined as the normalizer of $\Pauli_1$ over the unitary group $\calU(2)$. As the Clifford group $\iCl{1}$ is infinite it causes unnecessary computational problems and so we instead transfer to using the finite Clifford group $\Cl{1} = \langle H, S \rangle$. Since $(HS)^3 = \zeta_8 I$ we can further reduce this group to $\Cl{1}/\langle\zeta_8I,\Pauli_1\rangle$ which is isomorphic to $\sym{3}$ where $\sym{3}$ is the symmetric group on $3$ letters. The effective action of the is group, via conjugation, on the Pauli group $\Pauli_1$ is permutation of $X$, $Z$ and $Y$. 

For notation purposes it will be useful to pick six Clifford operators that, when acting via conjugation, behave as the elements of $\sym{3}$. We take 
\begin{align*}
    I &\longleftrightarrow () & H &\longleftrightarrow (XZ) \\
    R=HS &\longleftrightarrow (XYZ) & S &\longleftrightarrow (XY) \\
    R^{-1}=SH &\longleftrightarrow (ZYX) & V=HSH &\longleftrightarrow (YZ)
\end{align*}

\noindent When we refer to local finite Clifford group $\Cl{1}$ we will in general be referring to $(\Cl{1}/\langle\zeta_8I,\Pauli_1\rangle)^{\otimes n}$ but the context will make this clear.

\subsection{Stabilizer Subsystem Codes}
We quickly revisit the definition of stabilizer groups and codes as it will be convenient to extend the definition of what constitutes as a stabilizer code associated to a given stabilizer group. A subgroup $\mathcal{S}$ of $\PauliN$ such that $-I\not\in \mathcal{S}$ is called a \textit{stabilizer group}\footnote{Often the definition includes the added restriction that $\calS$ be abelian group but this is redundant as the condition $-I\not\in\calS$ implies that $\calS$ is abelian. The reverse condition is not true since $\langle-I\rangle$ is an abelian group that contains $-I$.} 
A subgroup $\mathcal{G}$ of $\PauliN$ such that $\mathcal{G} = \langle iI, \mathcal{G}'\rangle$ for some subgroup $\mathcal{G}'$ of $\PauliN$ is called a \textit{gauge group}. Any subgroup of $\PauliN$ can be extended into a gauge group by simply adjoining the generator $iI$. Instead of associating a single code space to a given group we can instead associate a set of code spaces. Let $\calA$ be any subgroup of $\PauliN$ with $\calG_\calA = \langle\calA,iI\rangle$.  Let $\calS_\calA$ any stabilizer group such that $Z(\calG_\calA)= \langle iI, \calS_\calA\rangle$ where $Z(W)$ represents the center of a group $W$. Then the set of code spaces associated to the group $\calG_\calA$ (and $\calA$) can be defined by
\[
         CS(\calA) := CS(\calG_\calA) := \{\calH_\chi:\chi\in\widehat{\calS_\calA}\} 
\]
\noindent where 
\[
\calH_\chi =\{\ket{\psi}\in\calH:R\ket{\psi}=\chi(R)\ket{\psi},\,\forall R\in \calS_\calA\}.
\]
\noindent The definition of $CS(\calA)$ is independent of the choice of stabilizer group $\calS_\calA$ satisfying the condition $Z(\calA)= \langle iI, \calS_\calA\rangle$. If $\calS$ is a stabilizer group then the code space of $\calS$, as it is typically defined, is the code space in $CS(\calS)$ that is associated with the principal character of $\calS$. Such a code space is called the \textit{principal code space} of $\calS$.

A subsystem code is a stabilizer code in which a subset of the logical operators are chosen not to store information~\cite{poulin-subsystem, HiggottBreuckmannPhysRevX.11.031039}. In this case, the code subspace is decomposed as $\calH_\calS = \calH_\calL \otimes \calH_\calG$ where only $\calH_\calL$ stores information and operations on $\calH_\calG$ are ignored. Here the Pauli operators that act trivially on $\calH_\calL$ form the \textit{gauge group} $\calG$ of the code. 

Let $C_{\PauliN}(A)$ and $N_{\PauliN}(A)$ denote the centeralizer and normalizer of the the group $A$ relative to the group $\PauliN$ respectively. These groups are equal if, and only if, $-I\not\in A$.
If $\calG$ is a gauge group then 
\[
Z(\calG) = \langle iI, \calS_\calG\rangle =C_{\PauliN}(\calG)\cap\calG
\]
\noindent and the operators from $\calG$, up to phase factors, either act trivially on $\calH_\calL\otimes\calH_\calG$ or act non-trivially on only $\calH_\calG$. Logical operators that act non-trivially only on $\calH_\calL$ are called \textit{bare} logical operators and are described by $C_{\PauliN}(\calG)\backslash\calG=N_{\PauliN}(\calG)\backslash{Z(\calG)}=N_{\PauliN}(\calG)\backslash{\calG}$. The \textit{dressed} logical operators, that is bare logical operators that are multiplied by a gauge operator in $\calG\backslash Z(\calG)$, are described by $C_{\PauliN}(\calS_\calG)\backslash\calG$ and act non-trivially on both $\calH_\calL$ and $\calH_\calG$. 

We now introduce some notation that will be needed later. We can represent stabilizer subgroups of $\PauliN$ by minimal generating sets which in turn can be represented as a symplectic binary matrix via the map $\kappa$:
\[
\langle G_1,G_2,\dots,G_k \rangle \xrightarrow{\kappa} [\delta|\sigma] \in \bbF_2^{v\times 2n}
\]
\noindent where $G_j=X_1^{\delta_{j1}}Z^{\sigma_{j1}}\otimes\cdots\otimes X_n^{\delta_{jn}}Z_n^{\sigma_{jn}}$, with $\delta,\sigma\in\bbF_2^{v\times n}$. Since $G_1,G_2,\dots,G_k$ is a minimal generating\footnote{Since $\PauliN$ is a $2$-group the Burnside basis theorem tells us that minimal generating sets (that is, generating sets for which no proper subset also generates $\PauliN$) all have the same cardinality~\cite{MCDOUGALLBAGNALL2011332}} set we have that
$[\delta,\sigma]\in\bbF_2^{v\times 2n,v}$ where $\bbF_2^{v\times 2n, v}$ is the set of $v\times 2n$ binary matrices of rank $v$.

\subsection{LP and CS Equivalent Codes}

Our aim is to find equivalence class representatives for each stabilizer subspace code subject to LC equivalence from which we can then expand that to find the subsystem codes. To do this we first need to look at code set (CS) equivalence and local Pauli (LP) equivalence.

Two different stabilizer groups can generate the same code set and thus those two groups are naturally considered equivalent. In particular, we can use the equivalence relation $\sim_{cs}$ that is defined by $\calA\sim_{cs}\calB$ for $\calA,\calB\leq\PauliN$ if, and only if, $CS(\calA)=CS(\calB)$ to partition the set of stabilizer groups. This is called code set (CS) equivalence and it is equivalent to LP equivalence. That is $\calA\sim_{cs}\calB$ if, and only if, there exists a $W\in\PauliN$ such that $\calA=W\calB W^{-1}$. See Lemma~\ref{appendix:lemma:LPeqSE} in the appendix. Note that the symplectic matrix representation of two CS/LP equivalent codes is the same. While CS/LP equivalence is not that useful for reducing the number of codes we will use CS/LP equivalence to help us count the number of LC equivalent stabilizer code classes. To do this we first note that counting the number of LP equivalence classes of stabilizer codes is known. Let $NLP_{n,k}$ denote the number of different $[[n,k]]$ codes under LP equivalence. Then by Theorem~20 of~\cite{G06} we have that 
\begin{equation}
NLP_{n,k} = \gcoeff{n}{k}_2 \prod_{i=0}^{n-k-1}(2^{n-i}+1) = \prod_{i=0}^{k-1}\frac{2^{n-i}-1}{2^{k-i}-1}\prod_{i=0}^{n-k-1}(2^{n-i}+1) , \label{Nnk}
\end{equation}
\noindent where the square brackets denote the Gaussian coefficients.

\subsection{LC and LU Equivalent Codes}

Both Local Clifford (LC) equivalence and local unitary (LU) equivalence are good equivalence definitions to use when enumerating codes and we know that they are in fact different~\cite{10.5555/2011438.2011446}. Both LC and LU equivalences preserve the distance of codes yet LC equivalence has the nice property that stabilizer groups are mapped to stabilizer groups. Further, the depolarizing noise channel is invariant under the LC equivalence which will be useful when looking for naturally fault-tolerant gates. Of course if you are interested in biased noise then this would be a disadvantage. 

LU equivalence considers two codes different under permutation of qubits thus we can further reduce the number of equivalence classes by including the permutation equivalence ($\Pi$) relation $\sim_{\Pi}$ which is defined as $\calA\sim_{\Pi}\calB$ for $\calA,\calB \leq\PauliN$ if, and only if, there exists some permutation in $\pi\in S_n$ such that $\calA = \calB^\pi$. The union of LC and $\Pi$ equivalence relations will be denoted by LC$\Pi$.

Before defining LC$\Pi$ we introduce the following notation. Since $\Cl{1}/\langle\zeta_8I,\Pauli_1\rangle\simeq  \sym{3}$ we can define the group action of $\sym{3}^n$ on gauge groups of $\PauliN$ in the following way. Let $W\in \sym{3}^n$ and let $C_W$ be the corresponding operator in $(\Cl{1}/\langle\zeta_8I,\Pauli_1\rangle)^{\otimes n}$ under the isomorphism $\Cl{1}/\langle\zeta_8I,\Pauli_1\rangle\simeq  \sym{3}$. Now define the action of $W$ on a gauge group via the conjugation action of $W_C$. The combined group action 
\[
(W,\pi)\calL = W\calL^\pi
\]
\noindent then turns the set $\sym{3}^n\times \sym{n}$  into the group $\sym{3}^n \rtimes \sym{n} = \sym{3}\wr\sym{n}$.

\begin{definition}[LC$\Pi$ Equivalence for Gauge Groups]
Let $\calG$ and $\calL$ be two gauge groups of $\PauliN$. The groups $\mathcal{G}$ and $\mathcal{L}$ are said to be locally permutation Clifford (LC$\Pi$) equivalent, written as $\calG\sim \calL$ if, and only if,
\[
\calG = (W, \pi) \calL
\]
\noindent for some $(W, \pi)\in \sym{3}\wr \sym{n}$
\end{definition}

Our computations do not work directly with gauge groups or stabilizer groups but instead on symplectic representations of those groups. As such the above equivalence definition must be written in terms of symplectic matrices. For this we need three group actions, melded into one, that represent the action of matrices representing a local Clifford conjugation ($\sym{3}^n$), the action of matrices representing a change of basis ($\textrm{GL}_k(\bbF_2)$) (since using a symplectic matrix representation requires fixing a basis) and the action of matrices representing a permutation of qubits ($\sym{n}$). For our purpose the set $\sym{n} \times \sym{3}^n \times \textrm{GL}_k(\bbF_2)$ can be made into the suitable group $\textrm{GL}_k(\bbF_2) \times \sym{3}^n \rtimes_\psi \sym{n}$ and its action on $\bbF_2^{v\times 2n,v}$ can then be defined as
\[
((M,G);\pi) \Gamma := (M, G) (\pi\Gamma) = M \Gamma [\pi^{-1}] [G]
\]
\noindent where $[\pi^{-1}]$ and $[G]$ are the matrix representations of $\pi^{-1}$ and $G$ in $\textrm{GL}_{2n}(\bbF_2)$. The details on these defintions and actions can be found in the Appendix. We can now define LC$\Pi$ for symplectic matrices as follows:

\begin{definition}[LC$\Pi$ for Symplectic Matrices]
Let $\Gamma$ and $\Omega$ be matrices of $\bbF_2^{k\times 2n, v}$. Then $\Gamma$ and $\Omega$ are said to be locally permutation Clifford equivalent, written as $\Gamma\sim\Omega$ if, and only if, 
\[
\Gamma = ((m,G);\pi) \Omega,
\]
\noindent for some $((m,G);\pi) \in (\textrm{GL}_k(\bbF_2) \times \sym{3}^n) \rtimes_\psi \sym{n}$.
\end{definition}

\noindent It now follows that if $\mathcal{A},\mathcal{B}\leq\mathcal{P}_n$ then $\mathcal{A}\sim\mathcal{B}$ if, and only if, $\kappa(\mathcal{A})\sim\kappa(\mathcal{B})$ where $\kappa(\calA)$ is the symplectic matrix representation of $\calA$ relative to the standard basis for $\PauliN$.

\section{Mass Formulas}\label{sec:mass-formulas}

It is an open problem to be able to calculate analytically the number of equivalence classes for the LC$\Pi$ equivalence. However, we still wish to be able to verify that any search algorithm has indeed found the correct number of equivalence classes. To do this we make use of a mass formula relative to the quantity $NLP_{n,k}$ which has been calculated analytically in equation~\eqref{Nnk}. A mass formula is an expression of the form
\begin{equation}
M(n) = \sum_\calC \frac{|\calT|}{|\aut{\calC}|}
\end{equation}
that counts the number of combinatorial objects of size $n$ by summing over the set of equivalence classes of those objects. Here $\calT$ is the full group of allowed transformations that are used to define equivalence, and $\aut{\calC}$ is the automorphism group of an object $\calC$ relative to $\calT$. In our case the group of allowed transformations $\calT$ is $\sym{3}^n \rtimes_{\psi} \sym{n} = \sym{3}\wr \sym{n}$ and
\[
\aut{\calG} = \{\phi\in \sym{3}\wr \sym{n}: \phi\calG=\calG\}.
\]

\noindent where $\calG\leq\PauliN$. Let $\mathcal{SP}_{n,k}$ denote the set of $n$-qubit stabilizer groups that generate $[[n,k]]$ stabilizer codes. Let $\sym{3}\wr\sym{n}\db\mathcal{SP}_{n,k}$ denote the set of orbits (LC$\Pi$ equivalence classes) of $\mathcal{SP}_{n,k}$ under the action of $\sym{3}\wr\sym{n}$ and let $T_{n,k}$ be a transversal of $\sym{3}\wr\sym{n}\db\mathcal{SP}_{n,k}$ (so a set of equivalence class representatives). Then 
\[
NLP_{n,k} = \sum_{\mathcal{G}\in T_{n,k}}\frac{|\sym{3}\wr\sym{n}|}{|\aut{\mathcal{G}}|}
= \sum_{\mathcal{G}\in T_{n,k}}\frac{6^nn!}{|\aut{\mathcal{G}}|}.
\]

\noindent By comparing these two expressions for $NLP_{n,k}$ we have a way to determine if the equivalence count or the size of the automorphism groups are incorrect. If we are confident in the group size calculations then we also have a verification on the equivalence count.

\section{Canonical Representations}\label{sec:canonical-representations}

To calculate a transversal of $\sym{3}\wr\sym{n}\db\mathcal{SP}_{n,k}$ we need an efficient algorithm to determine if two given codes are equivalent. There are two standard approaches to this problem. You can ether develop an algorithm that will determine if two given codes are equivalent or you can develop an algorithm that calculates a canonical representation for each given code and then compare canonical representations. In this paper we use the canonical representation approach and do this by transferring the problem to graphs such that two codes are equivalent if, and only if, the corresponding graphs are isomorphic -- as is done for classical linear codes~\cite{KO06}. Then it is possible to use well-developed algorithms to test for graph isomorphism  \cite{MP13} to determine if two codes are isomorphic.

We use notations and definitions similar to those used in~\cite{MP13}. Let $\mathcal{G}_m$ denote the set of graphs with vertex set $V=\{1,2,...,m\}$.
A coloring of a graph $V$ is a surjective function $c$ from $V$ onto $\{1,2,...,l\}$ for some $l$. Let $\Pi_m$ be the set of colorings of $V$. The group $\sym{m}$ will act on $V$ and its induced structures via exponentiation so that for a $v\in V$ and a $g \in\sym{m}$, $v^g$ is the image of $v$ under $g$ and $v^{g_1g_2}=(v^{g_1})^{g_2}$. Two useful properties are 
\begin{enumerate}
    \item If $c$ is a coloring of $V$, the $c^g$ is the coloring with $c^g(v^g)=c(v)$.
    \item If $(\calG, c)$ is a colored graph,  then $(\calG,c)^g = (\calG^g, c^g)$.
\end{enumerate}

\noindent Two coloured graphs $(\calG,c_\calG)$ and $(\mathcal{H}, c_\mathcal{H})$ are isomorphic if there is some $g\in\sym{m}$ such that $(\mathcal{H},c_\mathcal{H})=(\calG,c_\calG)^g$. Such a $g$ is called an isomorphism. The automorphism group $\aut{\calG,c}$ is the group of isomorphisms of the colored graph $(\calG,c)$ onto itself; that is
\[
\aut{\calG,c} = \{ g\in\sym{m}: (\calG,c)^g = (\calG,c)\}.
\]
\noindent  A canonical form is a function 
$\mathcal{C}:\mathcal{G}_m \times \Pi_m\rightarrow \mathcal{G}_m \times \Pi_m$ such that for all $G\in\mathcal{G}$, $c\in\Pi_m$ and $g\in\sym{m}$,
\begin{enumerate}
    \item $\mathcal{C}(\calG,c)\simeq (\calG,c)$
    \item $\mathcal{C}(\calG^g,c^g)=\mathcal{C}(\calG,c)$.
\end{enumerate}

\noindent The canonical form constructed in~\cite{MP13} is used to determine if two codes are equivalent and we use the nauty software package by the same authors to compute this canonical form. To do this we need to map a code into a graph. Given any subgroup $\calA$ on of $\PauliN$, we construct a vertex-colored graph $\Gamma(\calA)$ in the following way. Let $c_\calA$ be a bi-coloring where we identify black as $1$ and white as $2$. Let $\xi$ be any bijection from the set $\calA$ to $\{1,2,...,|\calA|\}$. For each of the elements $G\in \calA$ create a black vertex $v_{G}=\xi(A)$ of $\Gamma(\calA)$. For each coordinate $j\in [n]$, construct the fully connected triangle in $\Gamma(\calA)$ on white vertices 
\begin{align*}
    u_{j,X}&=t+3(j-1)+1,\\
    u_{j,Y}&=t+3(j-1)+2, \\
    u_{j,Z}&=t+3(j-1)+3.
\end{align*} 
\noindent Finally, create the edge $(v_A,u_{j,P})$ of $\Gamma(\calA)$ between a black and white vertex if the $j$-th coordinate of $G\in \calA$ is $P\in\calP_1$.  Figure~\ref{fig:gammaa} depicts $\Gamma(\calA)$ for the subgroup $\calA=\langle X_1, X_2Y_3\rangle\leq\Pauli_3$.

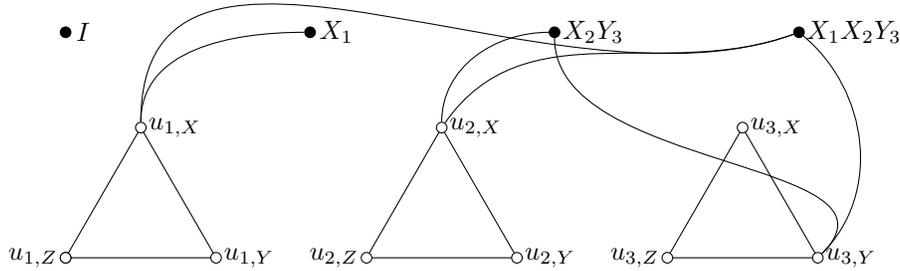
\begin{figure}[h]
    \centering
\begin{tikzpicture}

    \tikzstyle{every node}=[draw,circle,fill=white, minimum size=4pt,
                            inner sep=0pt]
    \draw[fill=white] (0,0) circle[radius=2pt] node (u1z) [label=left:$u_{1,Z}$] {}
        -- ++(0:2.0cm) node (u1y) [label=right:$u_{1,Y}$] {}
        -- ++(120:2.0cm) node (u1x) [label=right:$u_{1,X}$] {}
        -- (u1z);
        
    \draw (4,0) node (u2z) [label=left:$u_{2,Z}$] {}
        -- ++(0:2.0cm) node (u2y) [label=right:$u_{2,Y}$] {}
        -- ++(120:2.0cm) node (u2x) [label=right:$u_{2,X}$] {}
        -- (u2z);
        
    \draw (8,0) node (u3z) [label=left:$u_{3,Z}$] {}
        -- ++(0:2.0cm) node (u3y) [label=right:$u_{3,Y}$] {}
        -- ++(120:2.0cm) node (u3x) [label=right:$u_{3,X}$] {}
        -- (u3z);

    \tikzstyle{every node}=[draw,circle,fill=black, minimum size=4pt,
                            inner sep=0pt]
    
    \fill[black] (0,3) circle[radius=2pt] node (I) [label=right:$I$] {};
    
    \fill[black] (3.25,3) circle[radius=2pt] node (X1) [label=right:$X_1$] {};
    
    \fill[black] (6.5,3) circle[radius=2pt] node (X2Y3) [label=right:$X_2Y_3$] {};
    
    \fill[black] (9.75,3) circle[radius=2pt] node (X1X2Y3) [label=right:$X_1X_2Y_3$] {};

    \draw (X1) to [out=180,in=90] (u1x);
    \draw (X2Y3) to [out=180,in=90] (u2x);
    \draw (X2Y3) to [out=270,in=45] (u3y);
    \draw (X1X2Y3) to [out=200,in=90] (u1x);
    \draw (X1X2Y3) to [out=200,in=55] (u2x);
    \draw (X1X2Y3) to [out=320,in=45] (u3y);
    
\end{tikzpicture}
    \caption{Depiction of the graph $\Gamma(\calA)$ where $\calA = \langle X_1, X_2Y_3\rangle$. The integer labels would be $I=1, X_1=2, ...$ and $u_{1,X}=5, u_{1,Y}=6, ...$.}
    \label{fig:gammaa}
\end{figure}

For computational purposes we want the graphs $\Gamma(\calA)$ to be small. Unfortunately, the above definition of $\Gamma(\calA)$ is anything but small. Ideally, the we would like the graph to have $O(d(\calA)+n)$ vertices instead of the $O(|\calA|+n)$ vertices that $\Gamma(\calA)$ as. Here, $d(\calA)$ denotes the group distance of the group $\calA$ where the group distance of $\calA$ is the minimal number of generators of a group. It is a open problem to find such a small graph. We note that 

\begin{lemma}
Let $\calG$ and $\calL$ be gauge subgroups of $\PauliN$. Then $\calG\sim\calL$ if, and only if, $\Gamma(\calG)\simeq\Gamma(\calL)$ if, and only if, $\Gamma(\calG^+)\simeq\Gamma(\calL^+)$ where $\calG^+$ and $\calL^+$ are any two subgroups of $\PauliN$ such that $\langle \calG^+, iI\rangle = \calG$, $\langle \calL^+, iI\rangle = \calL$ with the elements of $\calG^+$ and $\calL^+$ all having order two.
\end{lemma}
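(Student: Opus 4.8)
The plan is to route both equivalences through a single structural observation about the graphs $\Gamma(\calA)$. Write $\overline{\calA}$ for the multiset of Pauli-type patterns $(\delta_G,\sigma_G)\in\bbF_2^{2n}$ of the elements $G\in\calA$ (forgetting phases): the white vertex $u_{j,P}$ records coordinate $j$ and Pauli label $P\in\{X,Y,Z\}$, and the neighborhood of a black vertex $v_G$ records exactly the pattern of $G$. The observation is that color-preserving isomorphisms of these graphs are precisely the maps induced by $\sym{3}\wr\sym{n}$ on the white class together with a compatible black bijection. Indeed, the white--white edges of $\Gamma(\calA)$ are exactly the $n$ disjoint triangles $\{u_{j,X},u_{j,Y},u_{j,Z}\}$, so they are the connected components of the subgraph induced on the white color class; hence any color-preserving isomorphism permutes these triangles as blocks, yielding a coordinate permutation $\pi\in\sym{n}$ together with a bijection of $\{X,Y,Z\}$ on each triangle, i.e.\ an element $w_j\in\sym{3}$, which assemble into $(W,\pi)\in\sym{3}\wr\sym{n}$ with $W=(w_1,\dots,w_n)$. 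Conversely every $(W,\pi)$ permutes the white triangles this way, and since the single-qubit Clifford group realizes all of $\sym{3}$ on $\{X,Y,Z\}$ this matches an honest local Clifford conjugation. The whole proof then becomes bookkeeping between graph isomorphisms and $\sym{3}\wr\sym{n}$-matchings of patterns.

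With this in hand the first equivalence is immediate. If $\calG\sim\calL$, pick $(W,\pi)$ with $(W,\pi)\calL=\calG$; the induced white-vertex map together with the black-vertex map $v_L\mapsto v_{(W,\pi)L}$ preserves colors and, because conjugation sends the label $P$ at coordinate $j$ to $w_j(P)$ at coordinate $\pi(j)$ (up to a phase the graph ignores), preserves all black--white edges, so $\Gamma(\calG)\simeq\Gamma(\calL)$. Conversely, given an isomorphism, extract $(W,\pi)$ as above. Edge preservation forces each $L\in\calL$ to be sent to a black vertex $v_{G'}$ with $G'\in\calG$ whose pattern equals $(W,\pi)$ applied to the pattern of $L$; thus $G'$ and $(W,\pi)L$ agree at every coordinate up to an overall phase, so $(W,\pi)L=\phi\,G'$ for some $\phi\in\{1,i,-1,-i\}$. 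Since $\calG$ is a gauge group it contains $iI$, hence $\phi I\in\calG$, so $(W,\pi)L\in\calG$; as equality of black-vertex counts forces $|\calL|=|\calG|$, the inclusion $(W,\pi)\calL\subseteq\calG$ is an equality, giving $\calG\sim\calL$.

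For the second equivalence I would note that adjoining the central generator $iI$ changes no pattern (its pattern is trivial), so $\overline{\calG}$ and $\overline{\calG^+}$ have the same underlying set of patterns, and likewise for $\calL$. Moreover the occurrence multiplicity of a pattern is uniform in each case: every pattern occurs $4$ times in a gauge group (its full phase orbit $\{1,i,-1,-i\}$ lies in the group), and in $\calG^+$ every nontrivial pattern occurs with one common multiplicity $m_\calG\in\{1,2\}$, equal to $2$ exactly when $-I\in\calG^+$. Because a graph isomorphism constrains patterns only up to these uniform multiplicities, in both the gauge-group case and the order-two case the existence of an isomorphism is equivalent to the single condition that some $(W,\pi)\in\sym{3}\wr\sym{n}$ carry the pattern set $\overline{\calG}$ onto $\overline{\calL}$; the two conditions therefore coincide, proving $\Gamma(\calG)\simeq\Gamma(\calL)\iff\Gamma(\calG^+)\simeq\Gamma(\calL^+)$.

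The main obstacle I anticipate is the phase and multiplicity bookkeeping in the black color class, not the white structure. One must check that twin black vertices (group elements differing only by a phase, which share a neighborhood) are matched consistently, that the empty-neighborhood vertices coming from the identity pattern are accounted for, and that the comparison is made with a consistent treatment of $-I$ in $\calG^+$ and $\calL^+$ so that the uniform multiplicities $m_\calG,m_\calL$ agree; with the natural convention $-I\notin\calG^+,\calL^+$ (equivalently taking $\calG^+,\calL^+$ to be complements of $\langle iI\rangle$ of order $|\calG|/4$) this is automatic. Granting the structural observation, every remaining step is a direct translation between edges and patterns.
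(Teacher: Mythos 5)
Your proof is correct and follows essentially the same route as the paper's: the white triangles force any color-preserving isomorphism to decompose into a qubit permutation together with a per-triangle element of $\sym{3}$ (i.e., an element of $\sym{3}\wr\sym{n}$ realized by local Cliffords plus a permutation), and the second equivalence rests on the same observation the paper uses, namely that $\calG$ is the union of the four phase translates of $\calG^+$, so patterns occur with uniform multiplicity. You are in fact somewhat more careful than the paper about the phase/multiplicity bookkeeping and about the convention $-I\notin\calG^+$, which the paper's disjoint-union claim silently requires.
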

\begin{proof}
Any isomorphism between the two graphs must preserve the coloring and basic structures such as the triangles. It follows that an isomorphism either permutes that black vertices, permutes the white vertices within a triangle or permutes the triangles. Permuting the white vertices has no effect on the group as they are not ordered. Permuting the white vertices within a triangle is equivalent to a local Clifford operation on the qubit associated to that triangle and permuting the triangles is equivalent to a permutation on the qubits. The first result follows from these observations. The last result follows from the observation that the set $\calG$ is a disjoint union of the sets $\calG^+, (-1) \cdot\calG^+, i\cdot\calG^+$ and $(-i)\cdot\calG^+$.
\end{proof}

The graph $\Gamma(\calG^+)$ has $1/4$ the number of black vertices and $1/4$ number of edges as compared to the graph $\Gamma(\calG)$. 

\section{CWS codes}\label{sec:stabcws}

We recall from~\cite{CSSZ09} that a Code Word Stabilized (CWS) code is completely specified by a graph state $|\calG\rangle$ and a classical code $\mathcal{C}$ and that those CWS codes that are also stabilizer codes correspond to when the classical code $\mathcal{C}$ is a linear code. When combined with Theorem~4 of~\cite{CSSZ09}, which states that the CWS framework includes all stabilizer codes, we have a way to construct all stabilizer codes from CWS codes. To do this we need to be able to transform a given CWS code that represents a stabilizer code into a stabilizer code (under local Clifford equivalence). There are efficient procedures to transform a stabilizer code to or from it's representation as a locally equivalent CWS code in standard form. We briefly outline both of these procedures. 

Suppose we are given an $[[n,k]]$ stabilizer code $S=\langle S_1, \dots, S_{n-k}\rangle$ that we wish to convert to a CWS representation. The encoded state $|\bar{0}^k\rangle$ is stabilized by $\langle S_1, \dots, S_{n-k}, \bar{Z}_1, \dots, \bar{Z}_k\rangle$, and the code space is spanned by the states $\bar{X}_j|\bar{0}^k\rangle$ for $j=1, \dots, k$. There is a local Clifford operator $U$ that maps $|\bar{0}^k\rangle$ to a graph state $|\calG\rangle$. The corresponding word operators are equivalent to $U\bar{X}_jU^\dagger$ for $j=1, \dots, k$ and can be reduced to CWS standard form if so desired.

Going the other direction, suppose we have an $[[n,k]]$ stabilizer code presented as a CWS code in standard form with graph state $|\calG\rangle$ and linear code $\mathcal{L}$. Our goal is to compute a generating set for the stabilizer group of the code. The following algorithm follows from Theorem~4 and Theorem~5 of~\cite{CSSZ09}.  Suppose that $S=\langle g_1,...,g_n\rangle$ defines the graph state $|\calG\rangle$ and that $\mathcal{L}$ is represented as a $k\times n$ matrix over $GF(2)$ with rows $l_1,...,l_k$. For each generator $l_i$ create the associated $Z$-type operator $r_i=Z^{l_i}$. Each $r_i$ anticommutes with at least one $g_j$ and so let $\Omega_i= \{g_{i_1}, g_{i_2}, ..., g_{i_t}\}$ be the set of all generators of $S$ that anticommute with the operator $r_i$. Then for each $g_{i_v}\in\Omega_i$ with $v\not=1$ set $g_{i_v} = g_{i_v}g_{i_1}$  and then set $g_{i_1}=0$. After running through all the linear code generators the stabilizer group of the code will be generated by those $g\in S$ that are non-zero.

\section{Search Strategies}\label{sec:search-strategies}

We approach the problem of finding and enumerating equivalent stabilizer groups in two different ways. The first approach iteratively constructs stabilizer groups and is suited to searching over groups that generate codes that encode many qubits. The second approach uses the codeword stabilized (CWS) code formalism \cite{CSSZ09} and is well-suited to stabilizer groups that generate codes encoding few qubits. Each approach yields a set of equivalence classes of $[[n,k]]$ stabilizer groups for some $n$ and $k$. From these groups the associated code sets can be generated and any associated gauge groups. Although it is not necessary, both approaches may be used together to find all of the equivalent $n$-qubit codes.

The first approach begins with codes that encode many qubits with distance one. There is exactly one $[[n,n,1]]$ principal code and it has a trivial stabilizer group. However, we can begin the search one step later by observing that there are $n$ classes of $[[n,n-1,1]]$ principal codes whose stabilizer groups are generated by a single element of each positive Hamming weight. We now iteratively generate $[[n,k-1]]$ stabilizer groups by appending stabilizers to representatives of each equivalence class of $[[n,k]]$ stabilizer groups and rejecting equivalent groups. For each $[[n,k-1]]$ principal code, we consider no more than $4^n-1$ Pauli operators to add to its stabilizer group. Operators that already belong to the stabilizer group or anticommute with one of its operators do not need to be considered. We test each candidate group against the set of discovered objects and accept the group if it is not equivalent to any previously discovered object.

For the second approach, recall that every stabilizer code is equivalent to a CWS code in standard form whose word operators form a classical linear code. Therefore, it is sufficient to iterate over all pairs $(G,C)$ of graph states and classical linear codes and reject equivalent codes. We reduce the search space by using Danielsen’s classification of graph states on up to 12 qubits \cite{DP06}. Our implementation naively iterates over classical linear codes. It is an open problem to find more efficient ways to iterate over classical codes $\cal{C}$.

\section{Code Properties}\label{sec:code-properties}

In this section we discuss various properties of codes that are calculated for each of the equivalence classes of code/stabilizer groups.
\subsection{Decomposability}
A stabilizer group $\calS\leq\PauliN$ is said to be \textit{strictly decomposable} if it can be factored into a tensor product of two stabilizer groups $\calA\leq\Pauli_{n_1}$ and $\calB\leq\Pauli_{n_2}$ with $n=n_1+n_2$ and $\calS=\calA\otimes\calB$. If a stabilizer group is strictly decomposable then codes in its code set are said to be strictly decomposable. If there exists a permutation $\pi\in\sym{n}$ such that $\calS^\pi$ is strictly decomposable then $\calS$ is said to be \textit{decomposable}. Not all decomposable stabilizer groups are strictly decomposable. For example, the stabilizer group (which corresponds to the principal 3-qubit bit-flip code with an extra qubit appended in the $|+\rangle$ state):
\[
\calS=\langle IIIX, ZZII, IZZI\rangle=\langle ZZI, IZZ\rangle\otimes \langle X\rangle
\]
\noindent is strictly decomposable where as the permutation equivalent group 
$\calS^\pi=\langle IXII, ZIZI, IIZZ\rangle$
\noindent with $\pi=(342)$ is decomposable but not strictly decomposable. Note the strict decomposabilty is invariant under conjugation by local Cliffords and so decomposability is invariant under permutation and conjugation by local Cliffords.

Let $\calS$ be a stabilizer group with symplectic generator matrix $G$ representing a $[[n,k]]$ code set. Extending~\cite{G97, nielsen00} we say that a symplectic generator matrix is in Revised Standard Form (RSF) if $G$ has the form
\begin{equation}
\left[\begin{array}{cccc|cccc}
I & A_1 & A_2 & 0 & B & 0 & C & 0 \\
0 & 0   & 0   & 0 & D & I & E & 0\end{array}\right].
\end{equation}
\noindent where the top row is of rank $r$, and the bottom row is of rank $n-k-r$, and the zeros represent blocks of $zeros$ (that are possibly empty). We note that the process to create the revised standard form involves a permutation of qubits, which may be the identity permutation, and so the stabilizer group for the revised standard form is permutation equivalent to the starting stabilizer group and therefore may not be equal to the original stabilizer group. We extend the terminology for revised standard form naturally to an ordered list of minimal generators for a given stabilizer group.

Proposition~\ref{prop:decomp} is a technical lemma which was inspired by~\cite{S60} that shows how a minimal generating set, that is in revised standard from, is structured when the underlying group is decomposable (See Figure~\ref{fig:decomdiag}). The first condition in Proposition~\ref{prop:decomp} is equivalent to saying that the revised standard form matrix has a zero column in the $n$-th and $2n$-th positions. In this case we say that the group, generators or code has \textit{trivial factors}. The second conditions states that, up to a permutation, the generators are products of generators from one of two disjoint sets but not both.

\begin{figure}[h] 
    \centering
    \begin{tikzcd}
        \calL\otimes\calM \arrow[d]
        & \calS  \arrow[l, "\sim_\pi"'] \arrow[d] \arrow[r, "\sim_\rho"]
        & \calG \arrow[d] \\
        \langle \calL\otimes I^{\otimes n_2},I^{\otimes n_1}\otimes\calM\rangle 
        & \langle S_1, S_2,...,S_r\rangle \arrow[r, "\textrm{RSF}"] 
        & \left(\langle G_1, G_2,...,G_r\rangle, \rho\right)
    \end{tikzcd}
    \caption{Let $S$ be a decomposable stabilizer group with minimal generating set $S_1, S_2,...,S_r$. Generating the revised standard form from the minimal generating set produces an new minimal generating list that is in revised standard form as well as the permutation $\rho$ needed to make the transformation. This means that $\calS\sim_\rho\calG$. Since $\calS$ is decomposable there exists some permutation $\pi$ such that $\calS \sim_\pi \calL\otimes \calM$. The group $\calS \sim_\pi \calL\otimes \calM$ can in turn be generated by $\langle \calL\otimes I^{\otimes n_2},I^{\otimes n_1}\otimes\calM\rangle$ for some positive integers $n_1$ and $n_2$ such that $n=n_1+n_2$. The arrows for the permutation equivalence indicate the application of the permutation, thus $\calS\xrightarrow{\sim_\pi}\calG$ means that $\calS^\pi = \calG$.} 
    \label{fig:decomdiag}
\end{figure}

\begin{proposition}\label{prop:decomp}
Let $\calS\leq\PauliN$ be a stabilizer group with minimal generating list $S_1, S_2,...,S_r$. Let $G_1, G_2,...,G_r$ be a minimal generating set that is in revised standard form that was generated from $S_1, S_2,...,S_r$. Let $\rho\in\sym{n}$ be the permutation such that $\calS\sim_\rho\calG$ where $\calG=\langle G_1, G_2,...,G_r\rangle$. Then $\calS$ is decomposable with $\calS\sim_\pi\calL\otimes\calM$, $\calL\leq\Pauli_{n_1}$, $\calM\leq\Pauli_{n_2}$, $\pi\in\sym{n}$ if, and only if, at least one of the following conditions hold:
\begin{enumerate}
    \item $G_j = G_j'\otimes I$ for some $G_j'\in\Pauli_{n-1}$ for each $j\in [n]$ 
    \item $G_j^{\rho^{-1}\pi}\in \calL\otimes\langle I^{\otimes n_2}\rangle$ or $G_j^{\rho^{-1}\pi}\in \langle I^{\otimes n_2}\rangle\otimes \calM$ for each $j\in [n]$ 
\end{enumerate}
\end{proposition}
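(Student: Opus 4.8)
The plan is to prove both directions, treating ``$\Leftarrow$'' as essentially definitional and concentrating the real work on ``$\Rightarrow$'', where the pivot structure of the revised standard form carries the argument. Throughout I write $\tau=\rho^{-1}\pi$, so that the action convention $\calG^\tau=(\calS^\rho)^{\rho^{-1}\pi}=\calS^\pi$ gives $\calG^\tau=\calL\otimes\calM$ and $G_j^\tau=G_j^{\rho^{-1}\pi}$ is exactly the element appearing in condition~2. I identify each generator with its row in $\bbF_2^{2n}$ under $\kappa$ and view the group as the $\bbF_2$-row space $R$ spanned by those rows. A strict decomposition $\calS^\pi=\calL\otimes\calM$ on the qubit partition $Q_1=\{1,\dots,n_1\}$, $Q_2=\{n_1+1,\dots,n\}$ corresponds to a splitting of the row space of $\calG^\tau$ as $R_1\oplus R_2$, where $R_1$ is supported only on the $X$- and $Z$-columns indexed by $Q_1$ and $R_2$ only on those indexed by $Q_2$; these two column supports are \emph{disjoint}.

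For ``$\Leftarrow$'': if condition~1 holds then $\calG=\calG'\otimes\langle I\rangle$ with $\calG'\le\Pauli_{n-1}$ is strictly decomposable, and since $\calS\sim_\rho\calG$ and strict decomposability is invariant under permutation, $\calS$ is decomposable. If condition~2 holds, then grouping the generators according to which factor their image $G_j^\tau$ lies in exhibits $\calS^\pi=\calG^\tau$ as the internal tensor product $\calL\otimes\calM$, a strict decomposition; hence $\calS$ is decomposable.

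For ``$\Rightarrow$'': assume $\calS^\pi=\calL\otimes\calM$ and split on whether a factor is trivial. If one factor is trivial, say $\calM=\langle I^{\otimes n_2}\rangle$, then the $n_2\ge 1$ qubits of $Q_2$ are acted on trivially by all of $\calS^\pi$, so $\calS$ has at least one trivial-factor qubit. Since the construction of the revised standard form collects every trivial-factor qubit into the final (all-zero) column block, the $n$-th column of the RSF matrix vanishes in both its $X$- and $Z$-positions, which is precisely condition~1. If neither factor is trivial, I prove condition~2 via the structural lemma that a generating set in \emph{reduced echelon form} for a row space $R_1\oplus R_2$ with disjoint column supports is \emph{block-pure}: each row lies entirely in $R_1$ or in $R_2$. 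The argument is the uniqueness-of-reduced-echelon-form argument: because the supports are disjoint, the pivots of $R_1\oplus R_2$ are exactly the pivots of $R_1$ (all in $Q_1$-columns) together with those of $R_2$ (all in $Q_2$-columns), independently of how $\tau$ interleaves the two blocks. A basis row $b_p$ with pivot $p\in Q_1$ is the unique element that is $1$ at $p$ and $0$ at every other pivot; writing $b_p=w_1+w_2$ with $w_i\in R_i$, the component $w_2$ vanishes on all pivots of $R_2$ and hence is $0$, so $b_p=w_1\in R_1$. Applying this to the RSF generators $G_j^\tau$ places each in $\calL\otimes\langle I^{\otimes n_2}\rangle$ or in $\langle I^{\otimes n_1}\rangle\otimes\calM$, which is condition~2.

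The main obstacle is making the echelon lemma precise for the revised standard form rather than for ordinary reduced row echelon form: the RSF reduces the $X$-block first and then the $Z$-block of the residual rows, so I must verify that under this two-stage symplectic pivoting the pivot set still partitions cleanly across $Q_1$ and $Q_2$, and that the commutation constraints---automatic because $\calS$ is abelian---do not interfere with the pivot-by-pivot uniqueness step. A secondary point to check carefully is that the RSF construction genuinely pushes every trivial-factor qubit into the final column block, so that the presence of a trivial factor is faithfully detected at position $n$ as claimed in condition~1.
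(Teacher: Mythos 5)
Your proposal is correct, and its skeleton matches the paper's: sufficiency is treated as definitional, and necessity splits on trivial factors and then exploits the identity blocks of the revised standard form. Where you genuinely differ is in how the main step is executed. The paper distinguishes three cases (X-part of full rank, X-part zero, general), proves the first by contradiction --- a mixed generator $g=g^{\vphantom{1}}_\calL g^{\vphantom{1}}_\calM$ forces $g^{\vphantom{1}}_\calM$ to be a non-identity element with no $X$ or $Y$ on any left qubit, impossible when the left-qubit X-columns carry an identity block --- and then asserts the remaining cases are ``essentially the same.'' You instead prove one uniform lemma: an echelon-type generating set of a row space $R_1\oplus R_2$ with disjoint column supports is block-pure. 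This subsumes all three of the paper's cases at once, makes the paper's sketched general case rigorous, and keeps $\rho$ and $\pi$ explicit via $\tau=\rho^{-1}\pi$ where the paper silently assumes both are the identity; the paper's argument, in exchange, is more concrete and stays in Pauli language. Two points to close out your plan, both easier than you fear. First, the two-stage pivoting obstacle dissolves: you need neither canonical-RREF uniqueness nor any commutation input, only the fact that the RSF matrix restricted to the $n-k$ columns given by the X-columns of the left qubits together with the Z-columns of the middle qubits is the identity (top rows have zero Z-part on the middle block, bottom rows have zero X-part), so projection of the row space onto these columns is injective and kills $w_2$ exactly as you argue, for any interleaving by $\tau$. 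Second, your Case~A rests on the RSF pushing unsupported qubits into the trailing zero columns; the paper relies on the very same assertion (it states condition~1 is equivalent to the $n$-th and $2n$-th columns vanishing), and in any event your pivot lemma already yields condition~2 in that case, so the disjunction holds either way. One last bookkeeping item: to pass from row-space membership $\kappa(G_j^\tau)\in R_1$ to the group membership $G_j^\tau\in \calL\otimes\langle I^{\otimes n_2}\rangle$ demanded by condition~2, invoke injectivity of $\kappa$ on stabilizer groups (which holds because $-I\notin\calS$); the paper elides signs in the same way.
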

\begin{proof}
The sufficiency of the conditions follows by definition. To prove the necessity of the conditions we assume without loss of generality that both $\pi$ and $\rho$ are the identity permutations and so $\calG = \calS = \calL\otimes\calM$. If $\calM = \calM'\otimes I$ then the first condition is true. We therefore suppose that $\calM$ is not equal to $\calM'\otimes \langle I^{\otimes t}\rangle$ for any positive integer $t$.

Before proceeding it will be useful to establish some nomenclature. If the X-part of the standard form generator matrix $G$ for $G_1, G_2,...,G_r$ has full rank, then $G$ has the form
\begin{equation}
\left[\begin{array}{cc|cc}
I & A & B & C\end{array}\right].
\end{equation}
\noindent If instead the X-part of the generator matrix is zero, the standard form is
\begin{equation}
\left[\begin{array}{cc|cc}
0 & 0 & I & E \end{array}\right].
\end{equation}
\noindent In both cases, we refer to the qubits in the support of the identity matrix as left qubits and the other qubits as right qubits. Otherwise, the revised standard form is generally (under current assumptions):
\begin{equation}
\left[\begin{array}{ccc|ccc}
I & A_1 & A_2 & B & 0 & C \\
0 & 0   & 0   & D & I & E\end{array}\right].
\end{equation}
\noindent In this general case, we have left qubits in the support of the X-part identity matrix, middle qubits in the support of the Z-part identity matrix, and right qubits are the rest.

Consider the first case where the X-part is full rank. Assume that there is a generator $g$ that is a product of generators from both $\calL$ and $\calM$, so $g=g^{\vphantom{1}}_\calL g^{\vphantom{1}}_\calM$ where 
$g^{\vphantom{1}}_\calL\in \calL\otimes\langle I^{\otimes n_2}\rangle$ and
$g^{\vphantom{1}}_{\calM}\in \langle I^{\otimes n_2}\rangle\otimes \calM$. The standard form implies that $g=X_iX_RZ_LZ_R$ where $X_i$ acts on the $i$th (left) qubit. Assume without loss of generality that the $i$th qubit is in the support of $\calL$ and so $g^{\vphantom{1}}_{\calM}=g^{\vphantom{1}}_\calL g=X'_RZ'_LZ'_R$. However, this is a contradiction because every element of $\calL\otimes \calM$ has X or Y operators on at least one left qubit. Therefore each generator belongs to either
$\calL\otimes\langle I^{\otimes n_2}\rangle$ or $\langle I^{\otimes n_2}\rangle\otimes \calM$. Essentially the same argument holds for the second case where the X-part is zero. Finally, in the general case, the argument is still essentially the same due to the presence of the identity blocks. The left qubits paritition into a set supported on $\calL$ and another set supported on $\calM$, which partitions the top generators into those belonging to $\calL\otimes\langle I^{\otimes n_2}\rangle$ and those belonging to $\langle I^{\otimes n_2}\rangle\otimes \calM$. The middle qubits partition as well, which partitions the bottom generators.
\end{proof}

By repeated application of Proposition~\ref{prop:decomp} we see that if $\calS$ is a stabilizer group then $\calS$ can be in some sense factored into indecomposable factors. This factorization is unique up to permutations of the qubits within each factor and permutation of the factors themselves. From this we can define the \textit{length} of a stabilizer group $\calS$ as the number of non-trivial indecomposable factors of $\calS$. We denote this by the integer $l$. Hence, a stabilizer group $\calS$ with no trivial factors is indecomposable if, and only if, $l(\calS)=1$. We now show how $l$ can be calculated.

Let $\calS\leq\PauliN$ be a stabilizer group with minimal generating list $S_1, S_2,...,S_r$. Let $G_1, G_2,...,G_r$ be a minimal generating set that is in revised standard form that was generated from $S_1, S_2,...,S_r$ with $\calG = \langle G_1, G_2,...,G_r \rangle$. If $\calG$ has trivial factors then it is decomposable.  Suppose that $\calG$ does not have trivial factors or if we want to know if the non-trivial factor is decomposable. In this case we then construct a graph $\Lambda=\Lambda(G_1, G_2,...,G_r)$. The vertices of $\Lambda$ correspond to the generators $G_j$. A pair of vertices are connected if the corresponding generators have a qubit in the intersection of their supports. Note that if we had generated a graph from an different revised standard form then the two graphs would be isomorphic. We can define $\Lambda(\calS)$ as some representative of this class of isomorphic graphs. With this notation we see that $\textrm{len}(\calS)=cc(\Lambda(\calS))$ where $cc(\Lambda(\calS))$ is the number of connected components of the graph $\Lambda(\calS)$.

\begin{corollary}
If $\calS$ is a stabilizer group then $\mathrm{len}(\calS)=cc(\Lambda(\calS))$.
\end{corollary}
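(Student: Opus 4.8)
The plan is to set up a one-to-one correspondence between the connected components of $\Lambda(\calS)$ and the non-trivial indecomposable factors of $\calS$, so that counting either side gives the same number. First I would reduce to the case where $\calS$ has no trivial factors: a trivial factor is a qubit acted on by no generator, so it touches no edge of $\Lambda(\calS)$ and is excluded from $\mathrm{len}$ by definition; deleting the corresponding zero columns leaves both $cc(\Lambda(\calS))$ and $\mathrm{len}(\calS)$ unchanged. By repeated application of Proposition~\ref{prop:decomp} (as in the paragraph preceding the corollary) I would write $\calS\sim_\pi\calF_1\otimes\cdots\otimes\calF_l$, with the $\calF_i$ non-trivial, indecomposable, and supported on disjoint qubit sets $Q_1,\dots,Q_l$ that partition $[n]$, so that $l=\mathrm{len}(\calS)$. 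The revised standard form generators $G_1,\dots,G_r$ defining $\Lambda(\calS)$ stay fixed, and I only use that a global relabelling of qubits preserves the incidence ``two generators share a qubit,'' hence preserves the edge set of $\Lambda(\calS)$.

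The key step, and the main obstacle, is to pin each generator $G_j$ to a single factor, i.e. to show $\mathrm{supp}(G_j)\subseteq Q_{i(j)}$ for a unique index $i(j)$. For this I would apply Proposition~\ref{prop:decomp} to each of the $l$ bipartitions $\calS\sim\calF_i\otimes\bigl(\bigotimes_{i'\neq i}\calF_{i'}\bigr)$. Since all trivial factors have been removed, condition~1 of the proposition cannot hold, so condition~2 is forced for every bipartition: each $G_j$ is supported (up to the relabelling permutation) either entirely within $Q_i$ or entirely within its complement. Ranging over all $i$ and using that the $Q_i$ partition $[n]$, a short argument then shows that $\mathrm{supp}(G_j)$ meets exactly one $Q_i$. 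The delicate part is precisely the verification that it is condition~2 rather than condition~1 that applies to each split, which is exactly what the preliminary removal of trivial factors secures; the bookkeeping with the permutations $\rho,\pi$ only affects labels, not the shared-qubit incidence that defines the edges of $\Lambda(\calS)$.

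With each generator assigned to one factor, both inequalities follow quickly. Every edge of $\Lambda(\calS)$ arises from a shared qubit $q$, which lies in a unique $Q_i$; both endpoints then have support meeting $Q_i$ and so are assigned to factor $i$, meaning edges join only same-factor generators. Hence every connected component lies inside a single factor, and since each non-trivial $\calF_i$ supplies at least one generator, all $l$ factors are represented, giving $cc(\Lambda(\calS))\geq l$. For the reverse bound I would argue by contradiction: if the generators of some $\calF_i$ split into two non-empty sets $P,P'$ lying in different components, then $P$ and $P'$ share no qubit, so $\mathrm{supp}(P)$ and $\mathrm{supp}(P')$ are disjoint and cover $Q_i$, exhibiting $\calF_i=\langle P\rangle\otimes\langle P'\rangle$ as a strict decomposition and contradicting the indecomposability of $\calF_i$. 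Thus each factor is a single component and $cc(\Lambda(\calS))\leq l$. Combining the two bounds yields $cc(\Lambda(\calS))=l=\mathrm{len}(\calS)$.
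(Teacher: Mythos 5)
Your proposal is correct and takes essentially the same approach as the paper, which states this corollary as an immediate consequence of Proposition~\ref{prop:decomp} and the construction of $\Lambda(\calS)$: generators pin to unique indecomposable factors, edges never cross factors, and indecomposability prevents a factor from splitting across components. Your write-up simply supplies the details the paper leaves implicit (removing trivial factors so that condition~2 of the proposition is forced, then proving the two inequalities between $cc(\Lambda(\calS))$ and $\mathrm{len}(\calS)$), and those details are sound.
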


\begin{corollary}
A stabilizer group $\calS$, with no trivial factors, is decomposable if, and only if, the graph $\Lambda(\calS)$ is disconnected.
\end{corollary}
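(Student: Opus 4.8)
The plan is to obtain this statement as a direct consequence of the preceding Corollary, which gives $\mathrm{len}(\calS) = cc(\Lambda(\calS))$, combined with the characterization of indecomposability recorded in the text immediately before it. Concretely, I would chain three equivalences: first, that a group with no trivial factors is decomposable exactly when $\mathrm{len}(\calS) \geq 2$; second, that $\mathrm{len}(\calS) = cc(\Lambda(\calS))$; and third, that a graph fails to be connected exactly when its number of connected components is at least two.

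First I would unpack the length. By repeated application of Proposition~\ref{prop:decomp}, $\calS$ admits a factorization into non-trivial indecomposable pieces, unique up to permutation, and $\mathrm{len}(\calS)$ counts these pieces; the text notes that, under the no-trivial-factors hypothesis, $\calS$ is indecomposable if and only if $\mathrm{len}(\calS) = 1$. Taking the contrapositive and using that $\mathrm{len}(\calS)$ is a positive integer, decomposability is therefore equivalent to $\mathrm{len}(\calS) \geq 2$. I would then substitute the equality $\mathrm{len}(\calS) = cc(\Lambda(\calS))$ from the previous Corollary to rewrite this as $cc(\Lambda(\calS)) \geq 2$, and finally invoke the elementary fact that a graph is connected precisely when it has a single connected component, so $cc(\Lambda(\calS)) \geq 2$ is exactly the statement that $\Lambda(\calS)$ is disconnected.

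The only step demanding genuine attention — and the nearest thing to an obstacle — is confirming that the no-trivial-factors hypothesis really does make decomposability coincide with $\mathrm{len}(\calS) \geq 2$ rather than merely with the existence of some factorization. Here I would observe that any factorization $\calS^\pi = \calL \otimes \calM$ produced by Proposition~\ref{prop:decomp} must have both tensor factors genuinely supported, since an all-identity factor on some qubit would resurface as a trivial factor and violate the hypothesis; hence every factorization splits $\calS$ into at least two non-trivial indecomposable components and forces $\mathrm{len}(\calS) \geq 2$. All remaining ingredients — the well-definedness of $\Lambda(\calS)$ up to isomorphism and its equality with the component count — are already supplied by the previous Corollary, so once this point is pinned down the result is immediate.
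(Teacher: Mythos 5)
Your proposal is correct and follows essentially the same route as the paper: the paper states this corollary without a separate proof, treating it as an immediate consequence of the preceding corollary $\mathrm{len}(\calS)=cc(\Lambda(\calS))$ together with its observation that a group with no trivial factors is indecomposable exactly when its length equals $1$. Your additional check that decomposability under the no-trivial-factors hypothesis forces $\mathrm{len}(\calS)\geq 2$ (because an all-identity tensor factor would reappear as a trivial factor) is precisely the detail the paper leaves implicit, and you handle it correctly.
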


\subsection{CSS Codes}

A stabilizer group $\calS\leq\PauliN$ is said to be \textit{CCS} if there exists a set of generators for $\calS$ that can be partitioned into only $X$-type stabilizers or $Z$-type stabilizers. If $\calS$ is CSS then so is $\calS^\pi$ for any $\pi\in\sym{n}$ and so the CSS property is invariant with respect to permutation equivalence. The CSS property of the group is also invariant under LP equivalence. Thus we can naturally extend the definition of CSS to code sets and individual codes. The CSS property is not however invariant with respect to LC equivalence and thus not invariant for LC$\Pi$ equivalence. Instead, we say that a LC$\Pi$ equivalence class is CSS if, and only if, there exists some LC$\Pi$ equivalent group that is CSS.

Given a list of $n-k$ independent stabilizer generators, we would like to decide if the corresponding stabilizer group is CSS, and if so, find a set of $n-k$ independent generators for the group in CSS form. The following lemma, thanks to Theodore Yoder, provides a way to quickly recognize stabilizer groups that are CSS.

\begin{lemma}\label{lem:TedCSS}
Let $G=\left[G_X|G_Z\right]$ be the generator matrix of stabilizer group. The group is CSS if, and only if,
\begin{equation*}
\mathrm{rank}(G_X) + \mathrm{rank}(G_Z) = n-k.
\end{equation*}
\end{lemma}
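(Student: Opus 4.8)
The plan is to reduce the statement to a bookkeeping identity about the row space of $G$ over $\bbF_2$ via the rank--nullity theorem. I would write each generator as a symplectic vector $(x,z)\in\bbF_2^{2n}$ and let $V\subseteq\bbF_2^{2n}$ be the row space of $G=[G_X\,|\,G_Z]$; since the generating set is minimal, $\dim V=n-k$. Inside $V$ I single out the purely $X$-type and purely $Z$-type elements,
\[
V_X=\{(x,z)\in V: z=0\},\qquad V_Z=\{(x,z)\in V: x=0\}.
\]
The first step is to observe that the CSS property is exactly the statement $V=V_X\oplus V_Z$. Indeed, a generating set partitioned into $X$-type and $Z$-type operators is precisely a spanning set of $V$ drawn from $V_X\cup V_Z$, which forces $V=V_X+V_Z$; the intersection $V_X\cap V_Z$ is trivial since only the zero vector has both parts zero, so the sum is direct. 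Conversely, a basis of $V_X$ together with a basis of $V_Z$ is a generating set partitioned into $X$-type and $Z$-type operators. Hence the group is CSS if, and only if, $\dim V_X+\dim V_Z=n-k$.

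Next I would compute $\dim V_X$ and $\dim V_Z$ by applying rank--nullity to the two coordinate projections restricted to $V$. Let $\pi_X:V\to\bbF_2^n$ and $\pi_Z:V\to\bbF_2^n$ send $(x,z)$ to $x$ and to $z$ respectively. Their images are the row spaces of $G_X$ and $G_Z$, so $\dim\mathrm{im}\,\pi_X=\mathrm{rank}(G_X)$ and $\dim\mathrm{im}\,\pi_Z=\mathrm{rank}(G_Z)$, while their kernels are exactly $V_Z$ and $V_X$. Rank--nullity then gives
\[
\dim V_Z=(n-k)-\mathrm{rank}(G_X),\qquad \dim V_X=(n-k)-\mathrm{rank}(G_Z).
\]
Substituting these into the criterion $\dim V_X+\dim V_Z=n-k$ and simplifying yields $\mathrm{rank}(G_X)+\mathrm{rank}(G_Z)=n-k$, which is the claimed equivalence.

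The only genuine content is the first step --- recognizing that ``admits a generating set split into $X$-type and $Z$-type generators'' is the same as the row space splitting as $V_X\oplus V_Z$; everything afterward is rank--nullity. I would additionally remark that $\mathrm{rank}(G_X)$ and $\mathrm{rank}(G_Z)$ are well defined independently of the chosen minimal generating set, since changing it multiplies $G$ on the left by an element of $\textrm{GL}_{n-k}(\bbF_2)$, which preserves the rank of each half. Note also that commutativity of the stabilizer group is never invoked, so the lemma is in fact a statement about an arbitrary rank-$(n-k)$ binary matrix of width $2n$.
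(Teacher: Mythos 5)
Your proof is correct, and it takes a genuinely different route from the paper's. The paper proves the forward direction the same way you justify well-definedness of the ranks (a change of generating set is left multiplication by an element of $\mathrm{GL}_{n-k}(\bbF_2)$, so each half's rank is invariant), but its converse goes through the revised standard form: it permutes qubits to bring $G$ to the block form with an identity in the X-part of the top rows and an identity in the Z-part of the bottom rows, argues that the rank hypothesis forces the top rows' Z-parts to lie in the row space of the bottom block, and then row-reduces to exhibit an explicit CSS generating set. You avoid the standard form entirely: the single observation that CSS is equivalent to the row space splitting as $V=V_X\oplus V_Z$, combined with rank--nullity applied to the two coordinate projections ($\ker\pi_X=V_Z$, $\mathrm{im}\,\pi_X=\mathrm{rowspace}(G_X)$, and symmetrically), handles both directions at once and needs no qubit permutation, no commutativity, and no minimality beyond $\dim V=n-k$. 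What your argument buys is brevity, symmetry, and the observation that the lemma holds for any rank-$(n-k)$ binary matrix of width $2n$; what the paper's buys is an algorithmic byproduct that your proof does not directly give, namely that once the rank equation holds, the (revised) standard form generator matrix is itself reducible to CSS form by row operations --- a fact the paper invokes immediately after the lemma when describing how to search for CSS representatives. (If you wanted the constructive output too, you would add that bases of $V_X$ and $V_Z$ are computable by Gaussian elimination, which is the same cost.) One small point worth a sentence in a polished write-up: passing between generating sets of the group and spanning sets of $V$ uses that $\kappa$ restricted to a stabilizer group is injective (since $-I\notin\calS$), so bases of $V_X$ and $V_Z$ lift to actual $X$-type and $Z$-type stabilizers; the paper's proof glosses over the same sign issue at the matrix level.
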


\begin{proof}
If the operators defined by $G$ generate a CSS group, then there is another generator matrix $G'$ with $n_x$ independent X-type generators and $n_z$ independent Z-type generators such that $n_x+n_z=n-k$. Therefore,
\begin{align*}
\mathrm{rank}(G_X) + \mathrm{rank}(G_Z) & = \mathrm{rank}(G'_X) + \mathrm{rank}(G'_Z) \\
& = n_x + n_z = n-k.
\end{align*}
Conversely, suppose $G$ satisfies
\begin{equation*}
\mathrm{rank}(G_X) + \mathrm{rank}(G_Z) = n-k.
\end{equation*}
Reduce $G$ to revised standard form 
\begin{equation}
G' := \left[\begin{array}{ccc|ccc}
I & A_1 & A_2 & B & 0 & C \\
0 & 0   & 0   & D & I & E\end{array}\right].
\end{equation}
As the CSS property is preserved under permutations we may do this and maintain the CSS property. The matrix $G'$ has $n-k$ rows. There are $\mathrm{rank}(G_X)$ non-zero rows of the left submatrix since these rows are clearly linearly independent. The remaining rows are purely Z-type stabilizers, and there are $\mathrm{rank}(G_Z)=n-k-\mathrm{rank}(G_X)$ of these. Therefore the rows of the submatrix $B0C$ must be linear combinations of the rows of the $DIE$ submatrix. We can take suitable linear combinations to reduce the first $\mathrm{rank}(G_X)$ rows to be purely X-type stabilizers. We conclude that the group is CSS.
\end{proof}

Not all stabilizer codes are CSS or equivalent to a CSS code. For example the codes associated with the group $\langle XZ, ZX\rangle$ are not CSS but are equivalent to a CSS code and the codes associated with the group $\langle ZIXZ, YXYI, IZZX\rangle$ are not CSS and are not equivalence to any CSS code. Currently we do not have a fast way to determine if a given LC$\Pi$ equivalence class is CSS or not. We can find a CSS representative of an equivalence class, if one exists, by iterating over all $6^n$ single-qubit Clifford equivalent groups. For each, we test the rank equation of Lemma~\ref{lem:TedCSS}. If a group satisfies the rank equation, its standard form generator matrix is in CSS form.

\subsection{GF(4)-linearity} 

Stabilizer codes can be viewed as additive codes over the Galois field GF(4) \cite{CRSS98}. We represent GF(4) as the quadratic extension $\bbF_2[\omega] = \{0, 1, \omega, \omega^2\}$ where $\omega^2+\omega+1=0$. There is a group homomorphism $\lambda:{\calP}_1\rightarrow \textrm{GF(4)}$ from the single-qubit Pauli group to GF(4) as an additive group. It is defined by setting $\lambda(i^a X)=1$ and $\lambda(i^a Z)=\omega$ and extending to all elements by $\lambda(P_1P_2)=\lambda(P_1)+\lambda(P_2)$. The homomorphism extends to the $n$-qubit Pauli group by acting on each term in the tensor product. That is $\lambda(A_1\otimes A_2\otimes\dots\otimes A_n) = (\lambda(A_1),\dots,\lambda(A_n))\in\textrm{GF}(4)^n$. In some cases, stabilizer codes correspond not just to additive but to linear codes over GF(4). 

\begin{definition}
A stabilizer group/code is GF(4)-linear if the image of the stabilizer group under $\lambda$ is a linear code, i.e. if it is closed under addition and scalar multiplication by elements of GF(4).
\end{definition}

Let $R=HS$ be the Clifford gate that acts as the permutation $(XYZ)$ via conjugation. The action of $R$ and its inverse by conjugation on the Pauli group corresponds to multiplication by $\omega^2$ and $\omega$, respectively, in GF(4). For example,
\begin{equation}
\lambda(RPR^{-1}) = \omega^2\lambda(P).
\end{equation}
The next lemma gives us a way to recognize GF(4)-linear stabilizer codes using the language of stabilizers.

\begin{lemma}
A stabilizer code is GF(4)-linear if and only if $R^{\otimes n}$ maps the stabilizer group to itself by conjugation.
\end{lemma}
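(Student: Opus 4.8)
The plan is to push the whole equivalence through the additive homomorphism $\lambda$ and reduce everything to a single closure condition. First I would observe that for any stabilizer group $\calS\leq\PauliN$ the image $\lambda(\calS)$ is automatically closed under addition: $\lambda$ is a homomorphism onto $(\mathrm{GF}(4),+)$ and $\calS$ is a group, so $\lambda(\calS)$ is an additive subgroup of $\mathrm{GF}(4)^n$. Consequently GF(4)-linearity is equivalent to closure of $\lambda(\calS)$ under scalar multiplication, and since multiplication by $0$ and by $1$ are automatic and the multiplicative group $\{1,\omega,\omega^2\}$ is generated by $\omega^2$, this in turn is equivalent to the single condition $\omega^2\,\lambda(\calS)\subseteq\lambda(\calS)$. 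The engine of the proof is the identity $\lambda(RPR^{-1})=\omega^2\lambda(P)$ recorded above, which applied coordinatewise to every element of $\calS$ gives $\lambda\bigl(R^{\otimes n}\,\calS\,(R^{\otimes n})^{-1}\bigr)=\omega^2\,\lambda(\calS)$.

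With that identity in hand both directions are short. For the forward direction, if $R^{\otimes n}\,\calS\,(R^{\otimes n})^{-1}=\calS$ then applying $\lambda$ yields $\lambda(\calS)=\omega^2\lambda(\calS)$, which is exactly the closure condition, so the code is GF(4)-linear. For the converse, suppose $\lambda(\calS)$ is GF(4)-linear, so $\omega^2\lambda(\calS)\subseteq\lambda(\calS)$. Multiplication by $\omega^2$ is an $\bbF_2$-linear bijection of $\mathrm{GF}(4)^n$ and $\lambda(\calS)$ is finite, so the inclusion is an equality, i.e. $\omega^2\lambda(\calS)=\lambda(\calS)$; combined with the engine identity this reads $\lambda\bigl(R^{\otimes n}\,\calS\,(R^{\otimes n})^{-1}\bigr)=\lambda(\calS)$.

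The hard part, and the one place I would slow down, is passing from equality of $\lambda$-images back to the statement that $R^{\otimes n}$ maps $\calS$ to itself. The map $\lambda$ forgets phases, so a priori $\lambda$-equality only controls the symplectic representation $\kappa(\calS)$ and not the signs of operators; indeed conjugation by $R^{\otimes n}$ can legitimately send an individual generator to its negative, so at the level of \emph{signed} Pauli operators $R^{\otimes n}$ need not fix $\calS$ on the nose. The reconciling fact I would establish first is that $\lambda$ is injective on any stabilizer group: since $-I\notin\calS$, every element of $\calS$ squares to $I$ and hence carries phase $\pm1$, so $\lambda(P)=\lambda(Q)$ with $P,Q\in\calS$ forces $PQ^{-1}=\pm I\in\calS$ and therefore $PQ^{-1}=I$. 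Thus $\lambda(\calS)$, equivalently the symplectic matrix $\kappa(\calS)$ used throughout this paper, is a faithful invariant of $\calS$, and the conclusion that $R^{\otimes n}$ maps the stabilizer group to itself is precisely the identity $\lambda(R^{\otimes n}\,\calS\,(R^{\otimes n})^{-1})=\lambda(\calS)$ established above, read at the level of that representation.
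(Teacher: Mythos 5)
The $\lambda$-level core of your argument coincides with the paper's proof: both directions ride on the identity $\lambda(RPR^{-1})=\omega^2\lambda(P)$, and your first direction (conjugation-invariance implies linearity) is the paper's converse almost verbatim. The problem is in the step you yourself flagged as the hard part. Injectivity of $\lambda$ restricted to a single stabilizer group is true, and your argument for it is fine, but it is not the statement your conclusion requires. To pass from $\lambda(\calS')=\lambda(\calS)$, where $\calS'=R^{\otimes n}\calS(R^{\otimes n})^{-1}$, to $\calS'=\calS$, you would need the assignment $\calS\mapsto\lambda(\calS)$ to be injective on the collection of stabilizer groups, and it is not: $\langle Z\rangle$ and $\langle -Z\rangle$ are distinct stabilizer groups with identical $\lambda$-images, and $\lambda$ is injective on each of them. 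So the claim that $\lambda(\calS)$, equivalently $\kappa(\calS)$, is a ``faithful invariant'' of $\calS$ is false --- the symplectic matrix forgets exactly the signs --- and your final sentence does not close the gap; it silently replaces the lemma's conclusion by its image under $\lambda$, which is a reinterpretation of the statement rather than a proof of it.

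Moreover, the gap cannot be closed at the level of signed operators, because the strict-sign statement is false. Take $\calS=\langle X\otimes X,\,Z\otimes Z\rangle=\{I\otimes I,\,X\otimes X,\,Z\otimes Z,\,-Y\otimes Y\}$, which is the paper's own first GF(4)-linear entry (the $[[2,0,2]]$ code in Table~\ref{tab:all_gf_lin_indecom}); its image $\lambda(\calS)$ is the GF(4)-span of $(1,1)$, so $\calS$ is GF(4)-linear. Any single-qubit Clifford $R$ realizing the cycle $(XYZ)$ satisfies $RXR^{-1}=\pm Y$, and the sign cancels in the tensor square, so $R^{\otimes 2}(X\otimes X)(R^{\otimes 2})^{-1}=Y\otimes Y$, which is not in $\calS$ because $-Y\otimes Y$ is. Hence conjugation by $R^{\otimes 2}$ preserves $\lambda(\calS)$ but never preserves $\calS$ itself, no matter which phase convention is adopted for $R$. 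The lemma is therefore only correct when ``maps the stabilizer group to itself'' is read modulo phases, i.e., at the level of $\lambda$ or $\kappa$ --- which is how the paper implicitly reads it, consistent with its reduction of the Clifford group modulo $\langle\zeta_8 I,\Pauli_1\rangle$; the paper's one-sentence forward direction glosses over precisely the sign issue you identified. Your instinct to slow down there was right; the injectivity argument simply does not resolve it.
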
\label{lem:gf4}

\begin{proof}
Suppose we have a GF(4)-linear code with stabilizer group $\calS$. Since conjugation by $R$ corresponds to multiplication by $\omega^2\in\textrm{GF(4)}$, we see that $R^{\otimes n}\calS(R^{-1})^{\otimes n}=\calS$. Conversely suppose that $R^{\otimes n}\calS(R^{-1})^{\otimes n}=\calS$. Then for any $j$, 
\[
\omega^j\lambda(\calS)=\lambda((R^{-j})^{\otimes n} \calS (R^j)^{\otimes n})=\lambda((R^{2j})^{\otimes n}\calS(R^{-2j})^{\otimes n})=\lambda(\calS). \tag*{\qedhere}
\]
\end{proof}

\begin{corollary}
    If $S$ is LC$\Pi$ equivalent to an GF(4)-linear stabilizer group of type $[[n,k]]$ then $n\equiv k\pmod{2}$.
\end{corollary}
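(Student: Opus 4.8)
The plan is to reduce the congruence to a parity statement about the $\bbF_2$-dimension of the image of the stabilizer group under the homomorphism $\lambda$. First I would record that LC$\Pi$ equivalence preserves the type $[[n,k]]$: conjugation by local Cliffords and permutation of qubits change neither $n$ nor the number $n-k$ of independent stabilizer generators. It therefore suffices to establish $n\equiv k\pmod 2$ for the GF(4)-linear group itself, which I will call $\calS$ and which has type $[[n,k]]$; the given group $S$ need not be touched again.

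Next I would pass to linear algebra over $\bbF_2$. The map $\lambda$ descends to an $\bbF_2$-linear isomorphism from $\PauliN$ modulo its subgroup of scalar phases onto $\mathrm{GF}(4)^n$, both being $\bbF_2$-spaces of dimension $2n$. Since $\calS$ is a stabilizer group we have $-I\notin\calS$, hence also $\pm iI\notin\calS$ (each squares to $-I$), so $\calS$ meets the phase subgroup trivially and $\lambda$ is injective on $\calS$. As an $[[n,k]]$ stabilizer group has exactly $n-k$ independent generators, $|\calS|=2^{n-k}$, and thus $\lambda(\calS)$ is an $\bbF_2$-subspace of $\mathrm{GF}(4)^n$ with $\dim_{\bbF_2}\lambda(\calS)=n-k$.

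Finally I would invoke GF(4)-linearity. By the definition preceding this corollary it means precisely that $\lambda(\calS)$ is closed under scalar multiplication by $\mathrm{GF}(4)$, that is, it is a $\mathrm{GF}(4)$-subspace of $\mathrm{GF}(4)^n$. Because $\mathrm{GF}(4)$ is a degree-two extension of $\bbF_2$, every $\mathrm{GF}(4)$-subspace has $\bbF_2$-dimension equal to twice its $\mathrm{GF}(4)$-dimension, and so $n-k=\dim_{\bbF_2}\lambda(\calS)$ is even. This is exactly $n\equiv k\pmod 2$.

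The argument is short, and its only real content is the parity observation in the last step. The point I would be most careful about is the dimension bookkeeping in the middle step: one must verify that $\lambda$ is genuinely injective on $\calS$, so that the $n-k$ generators map to $n-k$ linearly independent vectors and $\dim_{\bbF_2}\lambda(\calS)=n-k$ holds on the nose rather than being strictly smaller. This injectivity is precisely what the defining condition $-I\notin\calS$ buys us.
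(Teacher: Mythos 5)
Your proof is correct, but it takes a genuinely different route from the paper's. The paper's proof never touches dimensions over $\mathrm{GF}(4)$: it passes to the GF(4)-linear representative $S'$, invokes the preceding lemma (Lemma~\ref{lem:gf4}) to conclude that $S'$ is invariant under conjugation by $R^{\otimes n}$, and then counts orbits of that conjugation action inside $S'$ --- every non-identity element lies in an orbit of size $3$ and the identity is fixed, so $|S'|\equiv 1\pmod 3$, i.e.\ $2^{n-k}\equiv 1\pmod 3$, and Euler's theorem forces $n-k$ to be even. You instead argue directly from the paper's definition of GF(4)-linearity: $\lambda(\calS)$ is an $\bbF_2$-subspace of $\mathrm{GF}(4)^n$ of dimension exactly $n-k$ (your careful point about injectivity of $\lambda$ on $\calS$, via $-I\notin\calS$ hence $\pm iI\notin\calS$, is exactly what makes this exact rather than an inequality), and a $\mathrm{GF}(4)$-subspace has even $\bbF_2$-dimension since $[\mathrm{GF}(4):\bbF_2]=2$. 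Your version is more elementary and self-contained --- it needs neither the $R^{\otimes n}$ characterization nor modular arithmetic --- and it isolates the structural reason for the parity. The paper's version buys something else: it stays entirely in stabilizer/conjugation language and exercises the lemma it has just proved, which is also the computational test actually used in the database search. The two arguments are dual to one another: under $\lambda$, conjugation by $R^{\otimes n}$ becomes multiplication by $\omega^2$, and the paper's size-$3$ orbits are precisely the scalar orbits $\{v,\omega v,\omega^2 v\}$ whose counting re-derives the even-dimension fact you use.
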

\begin{proof}
Let $S'$ be the GF(4)-linear LC$\Pi$ representative of $S$. Then by Lemma~\ref{lem:gf4} $S'$ is invariant under conjugation by $R^{\otimes n}$. Thus, if $g\in S'$ then the orbit of $g$ under conjugation by $R^{\otimes n}$ is of order 3 if $g$ is not the identity and is of order 1 if $g$ is the identity. Hence $|S'|\equiv 1\pmod{3}$. But $|S'|=|S|=2^{n-k}$ and so $2^{n-k} \equiv 1 \pmod{3}$. It follows by Euler's theorem~\cite{euler_theoremata_1763} that $n-k\equiv 0\pmod{2}$.
\end{proof}

The property of GF(4)-linearity is preserved under qubit permutations ($\Pi$) but it is not preserved under local Clifford conjugation (LC). However, if a $GF(4)$-linear representative exists, we can find it by iterating over all $6^n$ single-qubit local Clifford equivalent stabilizer groups. For each, we test if $R^{\otimes n}$ takes stabilizer generators to stabilizers by repeated conjugation. If the test passes for one of the LC equivalent stabilizer groups, we have found a $GF(4)$-linear representative. We will call an equivalence class GF(4)-linear at least one of the representatives of that class is GF(4)-linear.

Suppose we have a stabilizer group that is LC equivalent to a GF(4)-linear code by local equivalence $U_1$
and also to a CSS code by LC equivalence $U_2$. The following lemma tells us that finding a GF(4)-linear representative won't automatically give us the CSS representative.

\begin{lemma}
There is a non-CSS GF(4)-linear stabilizer group that is LC equivalent to a CSS group.
\end{lemma}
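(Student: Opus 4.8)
The plan is to prove this existence statement constructively by exhibiting a single explicit stabilizer group and verifying the three required properties with the tools already developed. I would take the two-qubit Bell-pair code with stabilizer group $\langle XX, ZZ\rangle$, which is manifestly CSS, and conjugate only its first qubit by the gate $R$ (so the local Clifford is $R\otimes I$). Since $R$ acts as $(XYZ)$ by conjugation, this sends $XX\mapsto YX$ and $ZZ\mapsto XZ$, producing the candidate $\calS=\langle YX, XZ\rangle$. By construction $\calS$ is LC equivalent to the CSS group $\langle XX, ZZ\rangle$, which already secures the third clause and guarantees that $\calS$ is a legitimate stabilizer group (it is the image of one under a local Clifford), so I need not separately check commutativity or that $-I\notin\calS$.

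Next I would verify GF(4)-linearity using Lemma~\ref{lem:gf4}: it suffices to check that conjugation by $R^{\otimes 2}$ maps $\calS$ to itself. A direct computation gives $YX\mapsto ZY$ and $XZ\mapsto YX$, and since $ZY$ is the third nonidentity element $YX\cdot XZ$ of the group, $R^{\otimes 2}$ permutes the three nonidentity elements cyclically and hence fixes $\calS$ setwise. Equivalently, one can observe that the image $\lambda(\calS)=\{0,(\omega^2,1),(1,\omega),(\omega,\omega^2)\}$ is precisely the one-dimensional GF(4)-subspace spanned by $(1,\omega)$, so $\calS$ is GF(4)-linear in the sense of the definition.

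Finally I would show $\calS$ is not CSS by applying the rank criterion of Lemma~\ref{lem:TedCSS}. Writing the generator matrix $[G_X\mid G_Z]$ for $\{YX,XZ\}$, both $G_X$ and $G_Z$ have rank $2$, so $\mathrm{rank}(G_X)+\mathrm{rank}(G_Z)=4\neq 2=n-k$ and $\calS$ fails the criterion. Here I would briefly remark that these two ranks are invariants of the group rather than of the chosen generating set, because changing generators multiplies $G_X$ and $G_Z$ on the left by the same invertible matrix; this is what makes Lemma~\ref{lem:TedCSS} a statement about $\calS$ itself and not merely about a particular presentation, and it is the only point requiring any care. Everything else is routine single- and two-qubit Clifford bookkeeping, so the sole ``difficulty'' is selecting the example, and conjugating one qubit of a Bell pair by $R$ is about the smallest witness one could hope for.
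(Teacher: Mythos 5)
Your proposal is correct, and the underlying construction coincides with the paper's: the paper takes an arbitrary nontrivial self-orthogonal CSS code and conjugates a single qubit by $R$; your $\langle XX, ZZ\rangle$ with $R\otimes I$ is exactly the minimal instance of this construction (indeed the $[[2,0,2]]$ Bell code heads the paper's table of GF(4)-linear codes). Where you genuinely diverge is in the verification. The paper argues in generality: the image is non-CSS because it contains a mixed-type element while $Z_i\notin\calS$, and GF(4)-linearity survives because $R_i$ commutes with $R^{\otimes n}$, giving $R^{\otimes n}(R_i\calS R_i^{-1})(R^{-1})^{\otimes n}=R_i\calS R_i^{-1}$. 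You instead verify a single witness concretely: non-CSS-ness via the rank criterion of Lemma~\ref{lem:TedCSS} --- and your remark that $\mathrm{rank}(G_X)$ and $\mathrm{rank}(G_Z)$ are invariants of the group rather than of the presentation is a point the paper leaves implicit, and is precisely what makes that criterion well-posed --- and GF(4)-linearity directly from the definition by computing $\lambda(\calS)=\mathrm{GF}(4)\cdot(1,\omega)$. Your route buys a completely explicit, checkable witness; the paper's buys generality, showing that every nontrivial self-orthogonal CSS code produces such examples, which is more informative than bare existence.

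One caveat on your first GF(4)-linearity check: the claim that $R^{\otimes 2}$ fixes $\calS=\langle YX, XZ\rangle$ setwise is only true modulo phases. With the paper's $R=HS$, conjugation acts as $X\mapsto -Y$, $Y\mapsto -Z$, $Z\mapsto X$, so for your group $\calS=\{I, YX, XZ, -ZY\}$ one gets $XZ\mapsto -YX\notin\calS$; the sign-corrected lift $\langle YX,-XZ\rangle$ is exactly invariant, but the one you wrote down is not. This is the same looseness about signs present in the paper's own Lemma~\ref{lem:gf4}, whose forward direction read literally with exact phases can fail, so it is not a gap relative to the paper's standard of rigor; and your second verification, via the image $\lambda(\calS)$, is phase-blind and airtight, so the proof stands on that leg regardless.
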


\begin{proof}
Consider any self-orthogonal CSS code with stabilizer group $\calS=\langle G_1,\dots,G_m\rangle$ such that each $G_i$ is either $X$-type
or $Z$-type. By definition, $\calS$ is GF(4)-linear. We obtain a locally equivalent
code by conjugating the generators by the Clifford gate $R_i$ where $i$ is any coordinate in the support of some generator $G_j$. Without loss of generality, suppose $G_j$ is an X-type Pauli operator. Then $R_iG_jR_i^{-1}=-iZ_iG_j$ is in the image of the generating set. If we assume that $\calS$ is non-trivial in the sense that $Z_i$ is not in the stabilizer, then the locally equivalent code $R_iSR_i^{-1}$ is not CSS. However, conjugating the stabilizer of a GF(4)-linear code by $R_i$ produces another GF(4)-linear code. Indeed,
\begin{equation}
R^{\otimes n}(R_i\calS R_i^{-1})(R^{-1})^{\otimes n} = R_iR^{\otimes n}S(R^{-1})^{\otimes n}R_i^{-1} = R_iSR_i^{-1},
\end{equation}
so the locally equivalent code is GF(4)-linear.
\end{proof}

Suppose we have a GF(4)-linear code that is locally equivalent (LC) to a CSS code. The following
lemma implies that we can put a GF(4)-linear code into CSS form by applying local
Clifford gates and the resulting code will still be GF(4)-linear.

\begin{lemma}
Any CSS code that is locally equivalent (LC) to a GF(4)-linear code is GF(4)-linear.
\end{lemma}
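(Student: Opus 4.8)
The plan is to reduce GF(4)-linearity to a concrete invariance condition and then exploit the rigidity of the CSS form. Let $\calT$ denote the CSS stabilizer group in question and let $U=\bigotimes_i U_i$ be the local Clifford with $U\calT U^{-1}=\calS$, where $\calS$ is GF(4)-linear. First I would use Lemma~\ref{lem:gf4} to rewrite the hypothesis as $R^{\otimes n}\calS(R^{-1})^{\otimes n}=\calS$. Substituting $\calS=U\calT U^{-1}$ and conjugating back by $U$ gives $V\calT V^{-1}=\calT$, where $V:=U^{-1}R^{\otimes n}U=\bigotimes_i U_i^{-1}RU_i$. Thus $\calT$ is invariant under conjugation by the local Clifford $V$, whose tensor factors $V_i=U_i^{-1}RU_i$ are single-qubit Cliffords.

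Next I would identify the action of each $V_i$. Working in $\Cl{1}/\langle\zeta_8 I,\Pauli_1\rangle\simeq\sym{3}$, the factor $V_i$ is a conjugate of the $3$-cycle $(XYZ)$ by the permutation of $U_i$; since the conjugates of a $3$-cycle in $\sym{3}$ are exactly the two $3$-cycles, $V_i$ acts on its coordinate as either $R$ or $R^{-1}$. Passing to GF(4) through $\lambda$ (which annihilates phases), conjugation by $V$ becomes a coordinatewise scaling $v\mapsto s\ast v$, where $s_i=\omega^2$ on the coordinates acting like $R$ and $s_i=\omega$ on those acting like $R^{-1}$. The invariance $V\calT V^{-1}=\calT$ therefore becomes $s\ast L=L$ for the additive code $L:=\lambda(\calT)$.

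Finally I would bring in the CSS structure, writing $L=\{a+\omega b:a\in C_X,\ b\in C_Z\}$ for binary codes $C_X,C_Z$. A short coordinatewise computation of $s\ast(a+\omega b)$, using $\omega^2=1+\omega$ and $\omega^3=1$, shows how $(a,b)$ transforms. Applying it to a purely $X$-type word $(a,0)$: on a coordinate scaled by $\omega^2$ one gets $\omega^2 a_i=a_i+\omega a_i$, and on a coordinate scaled by $\omega$ one gets $\omega a_i$, so in either case the resulting $Z$-part equals $a_i$. Hence the transformed word has $Z$-part exactly $a$, and $s\ast L=L$ forces $a\in C_Z$; as this holds for all $a\in C_X$ we get $C_X\subseteq C_Z$. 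The symmetric computation on $(0,b)$ yields $C_Z\subseteq C_X$. Thus $C_X=C_Z$, and I would close by checking that $L$ is then closed under multiplication by $\omega$ (so GF(4)-linear), whence $\calT$ is GF(4)-linear.

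The main obstacle is that $V$ need not scale every coordinate the same way—some by $\omega$, some by $\omega^2$—so one cannot directly read off $R^{\otimes n}$-invariance of $\calT$. The step I would be most careful about is precisely the last: the CSS splitting into pure $X$- and $Z$-type words makes the per-coordinate choice irrelevant, since feeding a pure $X$-word through the mixed scaling lands its $Z$-part on $a$ no matter which coordinates were scaled by $\omega$ versus $\omega^2$. It is this sign-insensitivity, special to CSS codes, that rescues full GF(4)-linearity and that I would check most explicitly.
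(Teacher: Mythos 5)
Your proof is correct and follows essentially the same route as the paper's: both pull the $R^{\otimes n}$-invariance of the GF(4)-linear code back through the local Clifford $U$ to obtain invariance of the CSS group under $V=U^{-1}R^{\otimes n}U$, observe that each factor $V_i$ is Clifford-conjugate to $R$ (hence acts as a three-cycle, i.e.\ as scaling by $\omega$ or $\omega^2$), and then use the CSS splitting to force $C_X=C_Z$. Your GF(4)-coordinate computation is simply a more explicit rendering of the paper's operator-language step (``$A'$ is a tensor product of $Y$ and $Z$ on the support of $A$, so its $Z$-type factor has support equal to that of $A$''), with the reverse inclusion and the final closure-under-$\omega$ check written out rather than dispatched by ``without loss of generality.''
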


\begin{proof}
Let $\calS$ be the stabilizer of a CSS code and let $U$ be a local Clifford operator
that maps $\calS$ to a GF(4)-linear code $\calL$. Take any element $A\in \calS$. It's image
in the GF(4)-linear code is $B=UAU^{-1}$. Let $\bar{R}=R^{\otimes n}$ and then repeatedly conjugating $B$ by $\bar{R}$
we generate two additional elements $B'=\bar{R}B\bar{R}^{-1}$ and $B''=\bar{R}^{-1} B\bar{R}$ in $\calL$.
All of $B$, $B'$, and $B''$ have the same support as $A$. Therefore, there are
three distinct operators $A$, $A'$, and $A''$ in $\calS$ that have the same support.
Furthermore,
\begin{equation}
A'=U^{-1} B'U=U^{-1} \bar{R} B \bar{R}^{-1} U=\left(U^{-1} \bar{R}U\right) A \left(U^{-1} \bar{R}^{-1} U\right), 
\end{equation}
so the Clifford operation mapping $A$ to $A'$ to $A''$ is Clifford-conjugate to $R$
on each qubit. Suppose without loss of generality that $A$ is X-type, then $A'$ is
a tensor product of $Y$ and $Z$ on the support of $A$. Since $\calS$ is CSS,
$A'$ must be a product of $X$ and $Z$ type operators. Therefore, there is a Z-type
operator with the same support as $A$. Hence the corresponding classical code is
self-orthogonal and $\calS$ is GF(4)-linear.
\end{proof}

\section{Results}\label{sec:results}

\subsection{Stabilizer code database}

The search algorithms for LC$\Pi$ equivalence classes for $n<10$ yielded a total of $754,006$ classes and details of the breakdown can be found in Table~\ref{table:totalcount}. Table~\ref{table:totalcountfine} further breaks these totals down into the number of equivalence classes for each $[[n,k]]$ pair. Of the $754,006$ equivalence classes we find that $686,904$, or $91\%$, are indecomposable as shown in Table~\ref{table:totalcountfineindecom}. A generating set is said to be even in weight if all generators have even weight/support. Evenness is invariant under LC$\Pi$ equivalence and so a class is called even if its generating sets are even. About $3\%$, or 21,273, of the 686,904 indecomposable classes are even. Of those there are only $5121$ of distance $2$. Of the $686,904$ indecomposable equivalence classes, $271,910$, or $39.6\%$, encode at least one qubit and have either distance $2$ or $3$ as shown in Table~\ref{table:indecomp_kg0_dg1}. Of these $271,910$ equivalence classes only $4,328$ are CSS with only $21$ being distance $3$ as shown in Table~\ref{table:indecompcss_kg0_dg1}. The smallest code that is not equivalent to any CCS code is the indecomposable code $\eclass{4}{1}{2}{8}$ and the smallest error correcting code that is not equivalent to any CSS code is the indecompoasble code $\eclass{5}{1}{3}{21}$.

Of the $686,904$ indecomposable codes only $24$ are GF(4)-linear. A representative from each GF(4)linear class are given in Table~\ref{tab:all_gf_lin_indecom}. A detailed count table for GF(4)-linear classes can be found in Table~\ref{table:GF4countfineindecom} in the table appendix. The table appendix contains a range of tables detailing the database. Complete tables of the database for $n\leq 6$ are included in the Appendix and include sizes of the automorphism groups which can be used to verify the class count via the Mass formulas. The full database is available through the Qiskit-QEC package~\cite{qiskitqec}. See Section~\ref{qiskit-qec} in the Appendix on how to access the database. 

\begin{table}
	\centering
	\begin{tabular}{c|cccccccccc}
            n & 1 & 2 & 3 & 4 & 5 & 6 & 7 & 8 & 9 \\ 
            \midrule
            $N_n$ & 2 & 5 & 12 & 35 & 112 & 474 & 2757 & 28642 & 721967 \\ 
            $M_n$ & 2 & 2 & 4 & 13 & 46 & 245 & 1765 & 22773 & 662054 \\ 
\end{tabular}
        \captionsetup{width=0.55\linewidth}
	\caption{Number of equivalence classes $N_n$ of $n$-qubit stabilizer codes and the number $M_n$ that are indecomposable.}\label{table:totalcount}
\end{table}

\begin{table}
    \centering
    \begin{tabular}{c|cccccccccc}
        $n\backslash k$ & 0 & 1 & 2 & 3 & 4 & 5 & 6 & 7 & 8 & 9 \\ 
        \midrule
        1 & 1 & 1 & 0 & 0 & 0 & 0 & 0 & 0 & 0 & 0 \\ 
        2 & 2 & 2 & 1 & 0 & 0 & 0 & 0 & 0 & 0 & 0 \\ 
        3 & 3 & 5 & 3 & 1 & 0 & 0 & 0 & 0 & 0 & 0 \\ 
        4 & 6 & 13 & 11 & 4 & 1 & 0 & 0 & 0 & 0 & 0 \\ 
        5 & 11 & 36 & 40 & 19 & 5 & 1 & 0 & 0 & 0 & 0 \\ 
        6 & 26 & 115 & 185 & 109 & 32 & 6 & 1 & 0 & 0 & 0 \\ 
        7 & 59 & 448 & 1075 & 852 & 267 & 48 & 7 & 1 & 0 & 0 \\ 
        8 & 182 & 2371 & 10010 & 11422 & 3963 & 614 & 71 & 8 & 1 & 0 \\ 
        9 & 675 & 20128 & 181039 & 353569 & 146658 & 18445 & 1344 & 99 & 9 & 1 \\ 
    \end{tabular}
    \caption{Number of equivalence classes for each $[[n,k]]$.}\label{table:totalcountfine}
\end{table}

\begin{table}
    \centering
    \begin{tabular}{c|cccccccccc}
        $n\backslash k$ & 0 & 1 & 2 & 3 & 4 & 5 & 6 & 7 & 8 & 9 \\ 
        \midrule
        1 & 1 & 1 & 0 & 0 & 0 & 0 & 0 & 0 & 0 & 0 \\ 
        2 & 1 & 1 & 0 & 0 & 0 & 0 & 0 & 0 & 0 & 0 \\ 
        3 & 1 & 2 & 1 & 0 & 0 & 0 & 0 & 0 & 0 & 0 \\ 
        4 & 2 & 6 & 4 & 1 & 0 & 0 & 0 & 0 & 0 & 0 \\ 
        5 & 4 & 17 & 18 & 6 & 1 & 0 & 0 & 0 & 0 & 0 \\ 
        6 & 11 & 63 & 107 & 53 & 10 & 1 & 0 & 0 & 0 & 0 \\ 
        7 & 26 & 284 & 754 & 556 & 131 & 13 & 1 & 0 & 0 & 0 \\ 
        8 & 101 & 1767 & 8328 & 9417 & 2834 & 306 & 19 & 1 & 0 & 0 \\ 
        9 & 440 & 17143 & 167595 & 331296 & 131035 & 13852 & 668 & 24 & 1 & 0 \\ 
    \end{tabular}
    \caption{Number of equivalence classes of each $[[n,k]]$ that are indecomposable.}\label{table:totalcountfineindecom}
\end{table}

\begin{table}
\centering
\begin{tabular}{cccccccccc}
\multicolumn{1}{c}{} & \multicolumn{6}{c}{distance 2} & \multicolumn{3}{c}{distance 3} \\
$n\backslash k$ & 1 & 2 & 3 & 4 & 5 & 6 & 1 & 2 & 3 \\
\cmidrule(lr){2-7}\cmidrule(lr){8-10}
4 & 2 & 1 & - & - & - & -  & - & - & - \\
5 & 4 & 2 & - & - & - & -  & 1 & - & - \\
6 & 27 & 25 & 5 & 1 & - & - & 1 & - & - \\
7 & 128 & 209 & 62 & 6 & - & - & 16 & - & - \\
8 & 964 & 3450 & 2043 & 255 & 11 & 1 & 157 & 20 & 1 \\ 
9 & 9395 & 94048 & 128405 & 23844 & 757 & 12 & 3411 & 4425 & 221 \\
\end{tabular}
\captionsetup{width=0.6\linewidth}
\caption{Number of classes of indecomposable codes of distance 2 (left) and distance 3 (right). \label{table:indecomp_kg0_dg1}}
\end{table}

\begin{table}
\centering
\begin{tabular}{cccccccccc}
\multicolumn{1}{c}{} & \multicolumn{6}{c}{distance 2} & \multicolumn{3}{c}{distance 3} \\
$n\backslash k$ & 1 & 2 & 3 & 4 & 5 & 6 & 1 & 2 & 3 \\
\cmidrule(lr){2-7}\cmidrule(lr){8-10}
4 & 1 & 1 & - & - & - & -  & - & - & - \\
5 & 3 & 1 & - & - & - & -  & - & - & - \\
6 & 12 & 10 & 2 & 1 & - & - & - & - & - \\
7 & 41 & 37 & 13 & 2 & - & - & 1 & - & - \\
8 & 168 & 244 & 114 & 31 & 3 & 1 & 1 & - & - \\ 
9 & 717 & 1475 & 1082 & 305 & 40 & 3 & 19 & - & - \\
\end{tabular}
\captionsetup{width=0.5\linewidth}
\caption{Number of classes of indecomposable CSS codes of distance 2 (left) and distance 3 (right).} \label{table:indecompcss_kg0_dg1} 
\end{table}

\begin{table}
\renewcommand{\arraystretch}{1.25}
\centering
\begingroup
\footnotesize
\begin{tabularx}{\textwidth}{l|l|l|L{34}}
\toprule
$[[n,k]]$ & $\mathrm{Idx}$ & $|\mathrm{Aut}(S)|$ & $S$ \tabularnewline
\specialrule{1.5pt}{1pt}{1pt}
\midrule $[[2,0,2]]$  & 1 & 12 & $\langle Z_{0}Z_{1}$, $X_{0}X_{1}\rangle$ \\ 
\midrule $[[4,2,2]]$  & 9 & 144 & $\langle Z_{0}Z_{1}Z_{2}Z_{3}$, $X_{0}X_{1}X_{2}X_{3}\rangle$ \\ 
\midrule $[[5,1,3]]$  & 21 & 360 & $\langle Y_{0}Y_{1}Z_{2}Z_{3}$, $Y_{0}Z_{1}Y_{2}Z_{4}$, $X_{0}Z_{2}X_{3}Z_{4}$, $X_{0}Z_{1}Z_{3}X_{4}\rangle$ \\ 
\midrule $[[6,0,4]]$  & 19 & 2160 & $\langle Y_{0}Z_{1}Z_{2}Z_{5}$, $Z_{0}Y_{1}Z_{3}Z_{5}$, $Z_{0}Y_{2}Z_{4}Z_{5}$, $Z_{1}Y_{3}Z_{4}Z_{5}$, $Z_{2}Z_{3}Y_{4}Z_{5}$, $X_{0}Z_{3}Z_{4}Y_{5}\rangle$ \\ 
\midrule $[[6,2,2]]$  & 126 & 288 & $\langle Z_{0}Z_{1}Z_{2}Z_{3}$, $X_{0}X_{1}X_{2}X_{3}$, $Z_{0}Z_{1}Z_{4}Z_{5}$, $X_{0}X_{1}X_{4}X_{5}\rangle$ \\ 
\midrule $[[6,4,2]]$  & 29 & 4320 & $\langle Z_{0}Z_{1}Z_{2}Z_{3}Z_{4}Z_{5}$, $Y_{0}Y_{1}Y_{2}Y_{3}Y_{4}Y_{5}\rangle$ \\ 
\midrule $[[7,1,3]]$  & 226 & 1008 & $\langle Z_{0}Z_{1}Z_{3}Z_{6}$, $Z_{0}Z_{2}Z_{3}Z_{5}$, $Y_{1}Y_{2}Y_{3}Y_{4}$, $Z_{3}Z_{4}Z_{5}Z_{6}$, $Y_{0}Y_{1}Y_{4}Y_{5}$, $Y_{0}Y_{2}Y_{4}Y_{6}\rangle$ \\ 
\midrule $[[7,3,2]]$  & 499 & 432 & $\langle X_{3}X_{4}Z_{5}Z_{6}$, $Z_{3}Z_{4}Y_{5}Y_{6}$, $Y_{0}X_{1}Y_{2}Y_{3}Z_{4}Z_{5}$, $X_{0}Z_{1}X_{2}Z_{4}X_{5}Z_{6}\rangle$ \\ 
\midrule $[[8,0,4]]$  & 125 & 8064 & $\langle X_{0}X_{1}X_{2}X_{3}$, $Z_{0}Z_{1}Z_{5}Z_{6}$, $Z_{0}Z_{2}Z_{5}Z_{7}$, $Z_{0}Z_{3}Z_{6}Z_{7}$, $Z_{4}Z_{5}Z_{6}Z_{7}$, $X_{1}X_{2}X_{4}X_{5}$, $X_{1}X_{3}X_{4}X_{6}$, $X_{2}X_{3}X_{4}X_{7}\rangle$ \\ 
\midrule $[[8,2,2]]$  & 4584 & 288 & $\langle X_{2}X_{3}Z_{4}Z_{5}$, $Z_{2}Z_{3}Y_{4}Y_{5}$, $Z_{4}Z_{5}Y_{6}Z_{7}$, $Y_{2}Y_{3}Z_{6}X_{7}$, $X_{0}Z_{1}Y_{2}Z_{3}Z_{4}Z_{7}$, $Z_{0}Y_{1}Z_{2}Y_{4}Z_{6}Z_{7}\rangle$ \\ 
 & 4492 & 2160 & $\langle Y_{3}Y_{4}Z_{5}Z_{6}$, $Z_{3}Y_{5}Z_{6}Z_{7}$, $Z_{4}Z_{5}Y_{6}Z_{7}$, $X_{3}Z_{4}Z_{6}Y_{7}$, $Y_{0}Z_{1}Y_{2}Z_{3}Z_{4}Z_{7}$, $Z_{0}X_{1}Z_{2}Y_{3}Z_{5}Z_{7}\rangle$ \\ 
 & 4926 & 2304 & $\langle Z_{0}Z_{1}Z_{2}Z_{3}$, $X_{0}X_{1}X_{2}X_{3}$, $Z_{0}Z_{1}Z_{4}Z_{5}$, $X_{0}X_{1}X_{4}X_{5}$, $Z_{0}Z_{1}Z_{6}Z_{7}$, $X_{0}X_{1}X_{6}X_{7}\rangle$ \\ 
\midrule $[[8,2,3]]$  & 4947 & 1728 & $\langle Y_{0}Y_{1}Z_{4}Z_{7}$, $X_{2}X_{3}Z_{5}Z_{6}$, $Z_{2}Z_{3}Y_{5}Y_{6}$, $Z_{0}Z_{1}X_{4}X_{7}$, $Y_{0}Z_{1}Y_{2}Z_{3}Y_{4}Z_{6}$, $X_{0}Z_{1}X_{2}Z_{3}X_{5}Z_{7}\rangle$ \\ 
\midrule $[[8,4,2]]$  & 2206 & 1152 & $\langle Z_{0}Z_{1}Z_{2}Z_{3}$, $X_{0}X_{1}X_{2}X_{3}$, $X_{0}X_{1}X_{4}X_{5}X_{6}X_{7}$, $Z_{0}Z_{1}Z_{4}Z_{5}Z_{6}Z_{7}\rangle$ \\ 
 & 3041 & 1152 & $\langle Z_{0}Z_{1}X_{2}X_{3}X_{6}X_{7}$, $X_{0}X_{1}X_{2}X_{3}X_{4}X_{5}$, $Z_{0}Z_{1}Z_{2}Z_{3}Z_{4}Z_{5}$, $Y_{0}Y_{1}Z_{2}Z_{3}Z_{6}Z_{7}\rangle$ \\ 
\midrule $[[8,6,2]]$  & 67 & 241920 & $\langle Z_{0}Z_{1}Z_{2}Z_{3}Z_{4}Z_{5}Z_{6}Z_{7}$, $X_{0}X_{1}X_{2}X_{3}X_{4}X_{5}X_{6}X_{7}\rangle$ \\ 
\midrule $[[9,1,3]]$  & 10201 & 1152 & $\langle X_{0}X_{1}Z_{4}Z_{5}$, $X_{0}X_{2}Z_{4}Z_{6}$, $X_{0}X_{3}Z_{4}Z_{7}$, $Z_{0}Z_{1}Y_{4}Y_{5}$, $Z_{0}Z_{2}Y_{4}Y_{6}$, $Z_{0}Z_{3}Y_{4}Y_{7}$, $Y_{0}Y_{4}Z_{5}Z_{6}Z_{7}Z_{8}$, $Y_{0}Z_{1}Z_{2}Z_{3}Z_{4}Y_{8}\rangle$ \\ 
 & 9652 & 4320 & $\langle Y_{0}Y_{1}Z_{7}Z_{8}$, $Y_{2}Y_{3}Z_{4}Z_{5}$, $Y_{2}Z_{3}Y_{4}Z_{6}$, $X_{2}Z_{4}X_{5}Z_{6}$, $X_{2}Z_{3}Z_{5}X_{6}$, $Z_{0}Z_{1}X_{7}X_{8}$, $X_{0}Z_{1}X_{2}Z_{3}Z_{4}Z_{8}$, $Z_{0}Y_{2}Z_{5}Z_{6}Y_{7}Z_{8}\rangle$ \\ 
\midrule $[[9,3,2]]$  & 118847 & 144 & $\langle X_{0}X_{1}X_{2}X_{3}$, $Z_{0}Z_{1}Z_{2}Z_{3}$, $Z_{0}Y_{1}X_{2}Y_{4}Z_{5}X_{6}$, $X_{0}X_{1}Z_{4}Y_{5}X_{7}X_{8}$, $Y_{0}Y_{2}X_{4}Y_{6}Z_{7}Z_{8}$, $Y_{0}X_{1}Z_{2}X_{4}Y_{5}Z_{6}\rangle$ \\ 
 & 163595 & 576 & $\langle Y_{0}X_{1}Y_{2}Y_{3}$, $Y_{5}Y_{6}Z_{7}Z_{8}$, $Z_{0}Y_{1}Z_{2}Z_{3}$, $Z_{5}Z_{6}X_{7}X_{8}$, $Z_{0}Z_{1}Y_{2}Y_{4}Y_{5}Y_{6}$, $X_{0}X_{1}Z_{2}Z_{4}X_{7}X_{8}\rangle$ \\ 
 & 131752 & 864 & $\langle Z_{0}Z_{1}Z_{2}Z_{3}$, $X_{0}X_{2}X_{4}X_{8}$, $Y_{0}Y_{1}Y_{2}Y_{3}$, $Y_{1}Y_{3}Y_{4}Y_{8}$, $Y_{0}Y_{1}Y_{4}Y_{5}Y_{6}Y_{7}$, $Z_{1}Z_{2}Z_{5}Z_{6}Z_{7}Z_{8}\rangle$ \\ 
\midrule $[[9,3,3]]$  & 170235 & 1296 & $\langle Y_{0}X_{1}X_{2}Y_{3}X_{4}X_{5}$, $Z_{0}Z_{1}X_{2}X_{6}Y_{7}X_{8}$, $Z_{0}Z_{2}Z_{3}Z_{4}Z_{6}Y_{7}$, $X_{1}Z_{2}Y_{3}Z_{5}Z_{7}Z_{8}$, $X_{0}Z_{1}Z_{2}X_{3}Z_{4}Z_{5}$, $X_{0}X_{2}X_{3}X_{4}X_{6}Z_{7}\rangle$ \\ 
\midrule $[[9,5,2]]$  & 14986 & 1296 & $\langle Y_{0}X_{1}X_{2}X_{3}X_{4}X_{5}$, $Z_{0}Z_{1}X_{2}X_{6}X_{7}X_{8}$, $X_{0}Z_{1}Z_{2}Z_{3}Z_{4}Z_{5}$, $Y_{0}Y_{1}Z_{2}Z_{6}Z_{7}Z_{8}\rangle$ \\ 
 & 8643 & 8640 & $\langle X_{0}X_{1}X_{2}X_{3}$, $Z_{0}Z_{1}Z_{2}Z_{3}$, $Y_{0}Z_{1}X_{2}X_{4}X_{5}X_{6}X_{7}X_{8}$, $X_{0}Y_{1}Z_{2}Z_{4}Z_{5}Z_{6}Z_{7}Z_{8}\rangle$ \\ 
\bottomrule
\end{tabularx}
\endgroup
\caption{All indecomposable GF(4)-linear codes with $n<10$. \label{tab:all_gf_lin_indecom}}
\end{table}

\paragraph{Distance distribution} Table~\ref{table:Grassl} from \texttt{codetables.de}~\cite{Grassl:codetables} by Markus Grassl shows the bounds on the minimum distance of binary qubit quantum codes $[[n,k,d]]$. We expand on this table by seeing how the distance of a code is distributed. Figure~\ref{fig:distdistall} in the appendix shows the distribution of the distances of all binary non-equivalent stabilizer codes. Figure~\ref{fig:distdistindec} in the appendix shows the distribution of the distances of all binary non-equivalent indecomposable stabilizer codes. Figure~\ref{fig:maxdprobind} show the probability that you will find a code of maximum minimum distance for each given $[[n,k]]$ and Figure~\ref{fig:mindisthighprobind} shows the minimum code distances with highest probability of being found for indecomposable codes.

\begin{figure}
\centering
  \includegraphics[width=.5\textwidth]{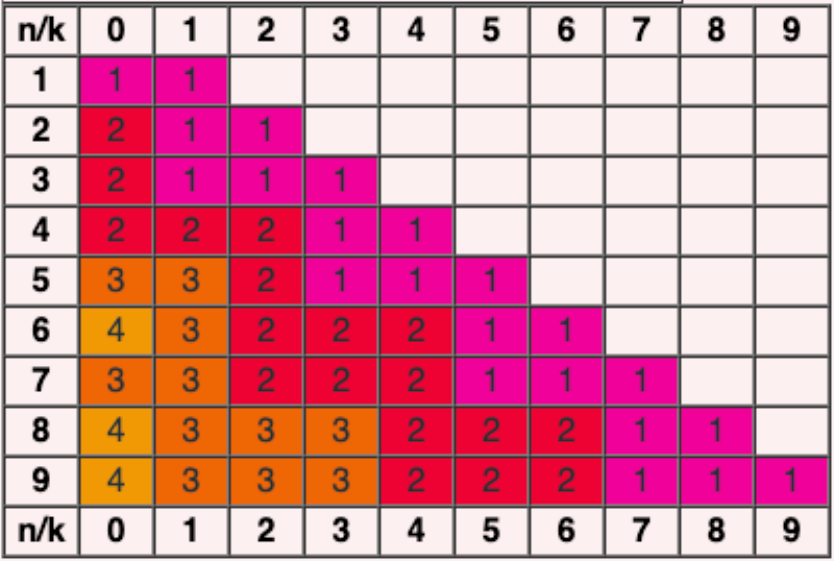}
  \caption{Bounds on the minimum distance of binary qubit quantum codes $[[n,k,d]]$ from~\cite{Grassl:codetables}
  \label{table:Grassl}}
\end{figure}

\begin{figure}
\centering
    \begin{minipage}{0.45\textwidth}
        \centering
        \includegraphics[width=\textwidth]{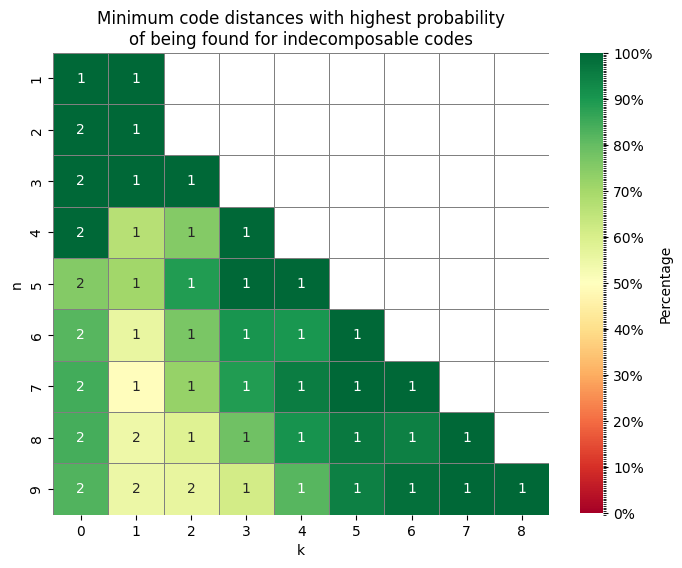}
        \caption{Minimum code distances with highest probability of being found for indecomposable codes.
        \label{fig:mindisthighprobind}}
    \end{minipage}
    \hspace{0.05\textwidth} 
    \begin{minipage}{0.45\textwidth}
        \centering
        \includegraphics[width=\textwidth]{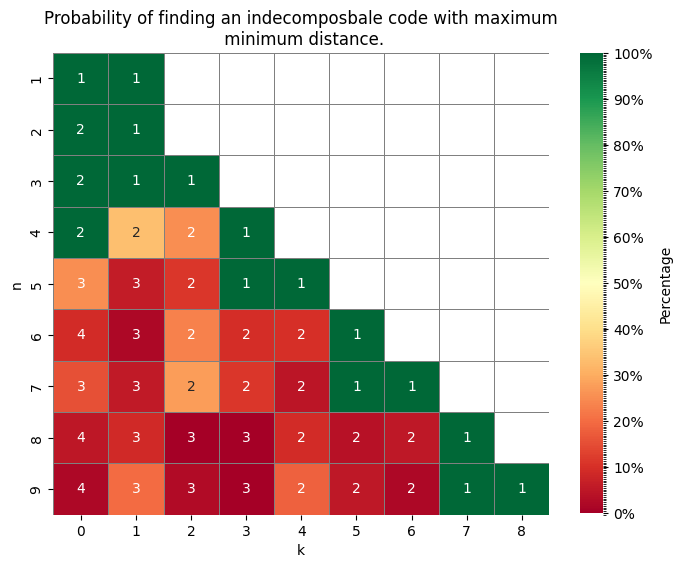}
        \caption{The probabilities of finding an indecomposable code with the maximum minimum distance.
        \label{fig:maxdprobind}}
    \end{minipage}
\end{figure}

We now describe the smallest codes and highlight some families of larger codes. In what follows, $S$ denotes a stabilizer group, $|\aut{S}|$ is the order of the code's automorphism group, and $w(x)=\sum_{P\in S} x^{\mathrm{wt}(P)}$ is the code's weight enumerator, where $\mathrm{wt}(P)$ counts the number of non-identity factors in the Pauli operator $P$. In order to specify a particular code in the database we will use the notation
$\seclass{n}{k}{i}$ or $\eclass{n}{k}{d}{i}$ where $i$ is the index of the code in the set of $[[n,k]]$ codes. 

\paragraph{Error-detecting codes on 4 qubits} The four-qubit error-detecting codes are listed within Table~\ref{tab:codes4}. These well-known codes \cite{vgw96,gbp97,kpeb18} are the smallest examples of color and surface codes. Each code's stabilizer generators correspond to faces of a planar graph as shown in Fig.~\ref{fig:codes4}. An arrow $P\rightarrow C$ indicates that a parent-child relationship exists between the child class $C$ and the parent class $P$. Such as relationship exists if given any code $R_C$ representing class $C$ there exists a code $R_P$ representing class $P$ such that $R_C\subset R_P$. In terms of stabilizer groups this implies that the associated stabilizer groups satisfy the relationship $S_P \leq S_C$. That is, there exists some set of generators $G$ such that $\langle S_P, G\rangle = S_C$. For example, if  
$S_P =\langle Z_{0}Z_{1}Z_{2}Z_{3}$, $X_{0}X_{1}X_{2}X_{3}\rangle$ then $S_P$ is a class representative of the parent class $[[4,2,2]]:9$ with child class
$[[4,1,2]]:8$  represented by the group $\langle S_P, X_{0}Y_{1}Z_{2}\rangle$ as shown in~\cite{Jordan_2006}. Figure~\ref{fig:pcrel-4qubits-indecom} shows the parent child relationship between all indecomposable four qubit equivalence classes.

\begin{table}
\renewcommand{\arraystretch}{1.25}
\centering
\begin{tabularx}{\textwidth}{l|l|l|l|l}
\toprule
$[[n,k,d]]$ & $\mathrm{Idx}$ & $|\mathrm{Aut}(S)|$ & $S$ & $w(x)$ \\ 
\specialrule{1.5pt}{1pt}{1pt}
\midrule $[[4,0,2]]$  & 2 & 32 & $\langle X_{0}X_{2}$, $Z_{1}Z_{3}$, $Z_{0}Z_{2}Z_{3}$, $X_{1}X_{2}X_{3}\rangle$ & $1 + 2x^{2} + 8x^{3} + 5x^{4}$ \\ 
 & 3 & 192 & $\langle Z_{0}Z_{3}$, $Z_{1}Z_{3}$, $Z_{2}Z_{3}$, $X_{0}X_{1}X_{2}X_{3}\rangle$ & $1 + 6x^{2} + 9x^{4}$ \\ 
\midrule $[[4,1,2]]$  & 8 & 24 & $\langle Z_{0}X_{2}Z_{3}$, $Y_{0}X_{1}Y_{2}$, $Z_{1}Z_{2}X_{3}\rangle$ & $1 + 4x^{3} + 3x^{4}$ \\ 
 & 6 & 32 & $\langle X_{0}X_{1}$, $X_{2}X_{3}$, $Z_{0}Z_{1}Z_{2}Z_{3}\rangle$ & $1 + 2x^{2} + 5x^{4}$ \\ 
\midrule $[[4,2,2]]$  & 9 & 144 & $\langle Z_{0}Z_{1}Z_{2}Z_{3}$, $X_{0}X_{1}X_{2}X_{3}\rangle$ & $1 + 3x^{4}$ \\ 
\bottomrule
\end{tabularx}
\caption{All $[[4,k,2]]$ indecomposable equivalence classes with the $[[4,1,2]]$ and $[[4,2,2]]$ classes being the only error-detecting equivalence classes with four qubits. We have given a CSS representative if one exists. The one four qubit decomposable class with $d=2$ (not shown) is equivalent to two Bell pairs. \label{tab:codes4}}
\end{table}

\definecolor{cZstab}{rgb}{1.0, 0.13, 0.32}
\definecolor{cXstab}{rgb}{0.55, 0.71, 0.0}
\definecolor{cYstab}{rgb}{0.13, 0.67, 0.8}

\tikzset{
    strong/.style={
        line width=1.4pt 
    }
}
\begin{figure}
\centering
\begin{tikzpicture}


\begin{scope}[shift={(0,2)}] 
    \fill[cZstab] (0.5,0.5) circle (1);
    \draw[strong] (0.5,0.5) circle (1);

    \fill[cXstab] (0,0) rectangle (1,1);
    \draw[strong] (0,0) rectangle (1,1);

    \foreach \corner in {(0,0), (1,0), (0,1), (1,1)}
        \filldraw[black] \corner circle (2pt);

        \node at (0.5,0.5) {$X$};
        
        \node at (0.5,1.25) {$Z$};
    
\end{scope}

\begin{scope}[shift={(4,0)}] 
    
    \fill[cXstab] (0,0) rectangle (1,1);
    
    
    \fill[cZstab] (1,1) arc[start angle=90, end angle=-90, radius=0.5];
    \draw[strong] (1,1) arc[start angle=90, end angle=-90, radius=0.5];
    
    
    \fill[cZstab] (0,0) arc[start angle=270, end angle=90, radius=0.5];
    \draw[strong] (0,0) arc[start angle=270, end angle=90, radius=0.5];
    
    \draw[strong] (0,0) rectangle (1,1);
    
    \foreach \corner in {(0,0), (1,0), (0,1), (1,1)}
        \filldraw[black] \corner circle (2pt);
\end{scope}

\begin{scope}[shift={(8,0)}]
    \fill[cXstab] (0,0) rectangle (1,1);
    \draw[strong] (0,0) rectangle (1,1);
    
    
    \fill[cZstab] (1,1) arc[start angle=90, end angle=-90, radius=0.5];
    \draw[strong] (1,1) arc[start angle=90, end angle=-90, radius=0.5];
    
    
    \fill[cZstab] (0,0) arc[start angle=270, end angle=90, radius=0.5];
    \draw[strong] (0,0) arc[start angle=270, end angle=90, radius=0.5];
    
    
    \fill[cZstab] (0,1) arc[start angle=180, end angle=0, radius=0.5];
    \draw[strong] (0,1) arc[start angle=180, end angle=0, radius=0.5];
    
    \foreach \corner in {(0,0), (1,0), (0,1), (1,1)}
        \filldraw[black] \corner circle (2pt);
\end{scope}

\begin{scope}[shift={(3.5,4)}]
    \coordinate (A) at (0,0);
    \coordinate (B) at (2,0);
    \coordinate (C) at (1,{sqrt(3)}); 

    \coordinate (center) at (barycentric cs:A=1,B=1,C=1);

    \coordinate (H) at (barycentric cs:A=1,C=1,center=1);
    \coordinate (I) at (barycentric cs:C=1,B=1,center=1);
    \coordinate (J) at (barycentric cs:A=1,B=1,center=1);

    \coordinate (W) at ($(A)!0.5!(C)$);
    \coordinate (L) at ($(C)!0.5!(center)$);
    \coordinate (X) at ($(C)!0.5!(B)$);
    \coordinate (M) at ($(B)!0.5!(center)$);
    \coordinate (Y) at ($(A)!0.5!(B)$);
    \coordinate (N) at ($(A)!0.5!(center)$);

    \fill[cYstab] (A) -- (W) -- (H) -- (N) -- cycle;
    \fill[cYstab] (N) -- (H) -- (L) -- (center) -- cycle;
    \fill[cXstab] (W) -- (H) -- (L) -- (C) -- cycle;
    \fill[cZstab] (C) -- (L) -- (I) -- (X) -- cycle;
    \fill[cZstab] (L) -- (I) -- (M) -- (center) -- cycle;
    \fill[cXstab] (X) -- (B) -- (M) -- (I) -- cycle;
    \fill[cZstab] (B) -- (M) -- (J) -- (Y) -- cycle;
    \fill[cXstab] (M) -- (J) -- (N) -- (center) -- cycle;
    \fill[cZstab] (Y) -- (J) -- (N) -- (A) -- cycle;
    
    \draw[thick] (A) -- (B) -- (C) -- cycle;
    
    \draw[thick] (A) -- (center);
    \draw[thick] (B) -- (center);
    \draw[thick] (C) -- (center);
    
    \foreach \point in {A,B,C}
        \filldraw[black] (\point) circle (2pt); 
    
    \filldraw[black] (center) circle (2pt); 
\end{scope}

\node at (4,4.5) {$Y$};
\node at (4.35,5.15) {$X$};
\node at (4.4,4.75) {$Y$};

\begin{scope}[shift={(7.5,4)}]
    \coordinate (A) at (0,0);
    \coordinate (B) at (2,0);
    \coordinate (C) at (1,{sqrt(3)}); 
    
    \coordinate (center) at (barycentric cs:A=1,B=1,C=1);
    
    \fill[cXstab] (A) to[bend left=60] (C);
    \fill[cZstab] (A) -- (C) -- (center) -- cycle;
    \fill[cXstab] (A) -- (B) -- (center) -- cycle;
    \fill[cZstab] (B) to[bend right=80] (center);
    
    \draw[thick] (B) -- (A) -- (C);
    
    \draw[thick] (B) to[bend right=80] (center);
    
    \draw[thick] (A) to[bend left=60] (C);

    \draw[thick] (A) -- (center);
    \draw[thick] (B) -- (center);
    \draw[thick] (C) -- (center);

    \foreach \point in {A,B,C}
        \filldraw[black] (\point) circle (2pt); 
    
    \filldraw[black] (center) circle (2pt); 
\end{scope}


\node at (0.5,1.1) {$[[4,2,2]]:9$};
\node at (4.5,3.5) {$[[4,1,2]]:8$};
\node at (4.5,-0.5) {$[[4,1,2]]:6$};
\node at (8.5,3.5) {$[[4,0,2]]:2$};
\node at (8.5,-0.5) {$[[4,0,2]]:3$};

\draw[->, thick] (2,2.5) -- (3,1.5); 
\draw[->, thick] (2,2.5) -- (3,3.5);
\draw[->, thick] (5.5,5) -- (7,5);
\draw[->, thick] (6,0.5) -- (7,0.5);
\draw[->, thick] (6,2) -- (7,3);

\end{tikzpicture}

\caption{Representatives of all equivalence classes of indecomposable four qubit $d=2$ codes drawn as planar graphs whose faces correspond to stabilizer generators. An arrow $P\rightarrow C$ indicates that a parent-child relationship exists between the child class $C$ and the parent class $P$. Such as relationship exists if given any code $R_C$ representing class $C$ there exists a code $R_P$ representing class $P$ such that $R_C\subset R_P$.\label{fig:codes4}}
\end{figure}

\begin{figure}
\centering
  \includegraphics[width=.7\textwidth]{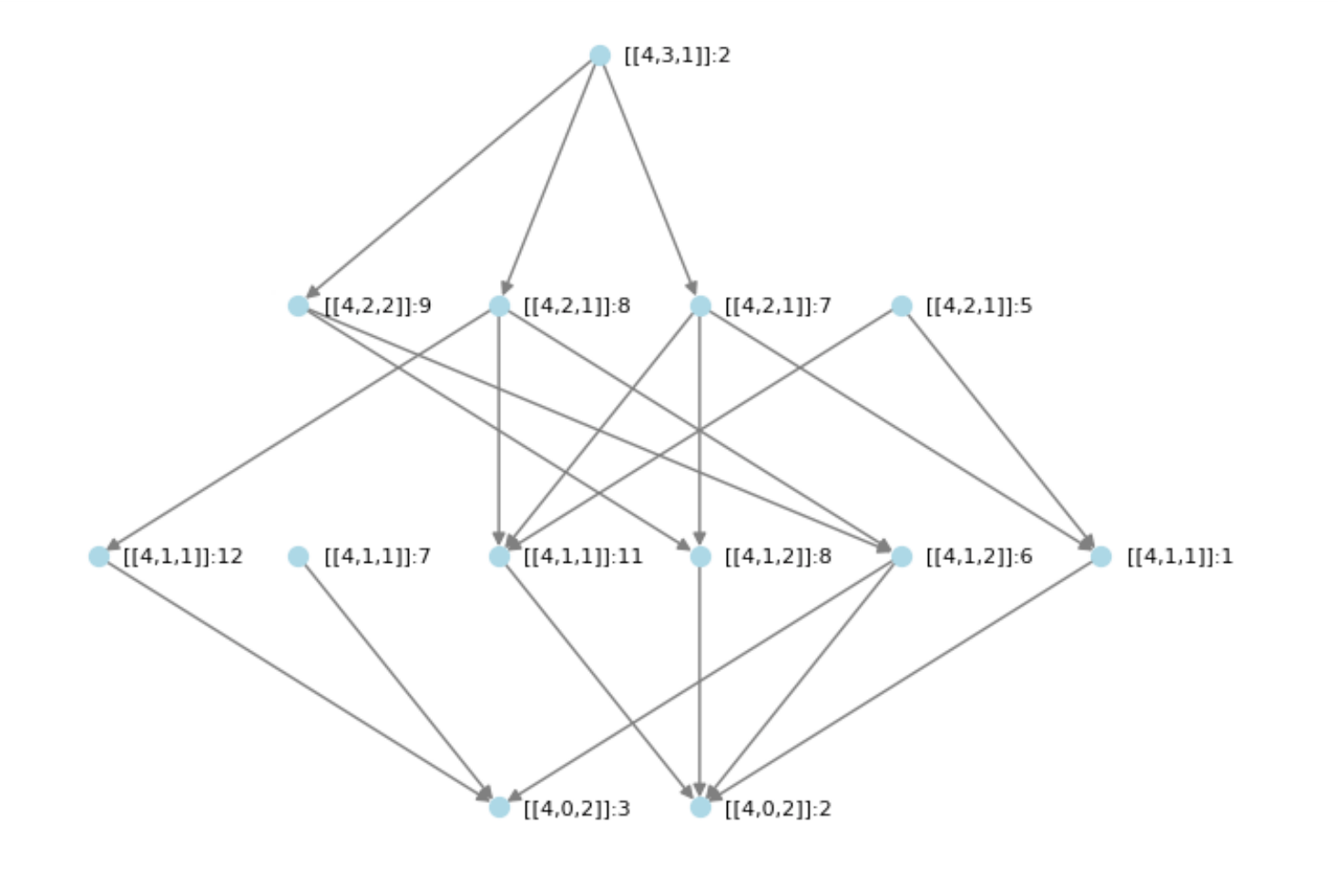}
  \caption{Parent child relationship between all indecomposable four qubit codes.\label{fig:pcrel-4qubits-indecom}}
\end{figure}

\paragraph{Error-detecting codes on 5 qubits} All the seven qubit equivalence classes with $d\geq 2$ are listed in Table~\ref{tab:codes5}. Seven of these are error-detecting codes. Two decomposable $[[5,1,2]]$ codes are not listed. They are obtained by appending a single qubit onto each of the $[[4,1,2]]$ codes and have $|\aut{S}|=48$ and $64$. One decomposable $[[5,2,2]]$ code is not listed. It is obtained by appending a single qubit onto the $[[4,2,2]]$ and has $|\aut{S}|=288$. The $\eclass{5}{1}{3}{7}$ code is famously the smallest stabilizer code correcting one error \cite{lmpz96,bdsw96}, but the other codes are less studied. With exception of the $[[5,1,3]]$ code, each code's stabilizer generators correspond to faces of a planar graph as shown in Fig.~\ref{fig:codes5}. All five qubit codes are contained in the Tables in the Appendix.

\begin{table}
\renewcommand{\arraystretch}{1.25}
\centering
\begingroup
\footnotesize
\begin{tabularx}{\textwidth}{l|l|l|l|l}
\toprule
$[[n,k,d]]$ & $\mathrm{Idx}$ & $|\mathrm{Aut}(S)|$ & $S$ & $w(x)$ \\ 
\specialrule{1.5pt}{1pt}{1pt}
\midrule $[[5,0,2]]$  & 4 & 32 & $\langle Z_{0}Z_{4}$, $Z_{1}Z_{2}$, $Z_{1}Z_{3}Z_{4}$, $X_{1}X_{2}X_{3}$, $X_{0}X_{3}X_{4}\rangle$ & $1 + 2x^{2} + 8x^{3} + 13x^{4} + 8x^{5}$ \\ 
 & 2 & 96 & $\langle Z_{0}Z_{4}$, $Z_{1}Z_{4}$, $X_{2}X_{3}$, $Z_{2}Z_{3}Z_{4}$, $X_{0}X_{1}X_{3}X_{4}\rangle$ & $1 + 4x^{2} + 6x^{3} + 11x^{4} + 10x^{5}$ \\ 
 & 6 & 1920 & $\langle Z_{0}Z_{4}$, $Z_{1}Z_{4}$, $Z_{2}Z_{4}$, $Z_{3}Z_{4}$, $X_{0}X_{1}X_{2}X_{3}X_{4}\rangle$ & $1 + 10x^{2} + 5x^{4} + 16x^{5}$ \\ 
\midrule $[[5,0,3]]$  & 7 & 120 & $\langle X_{0}Z_{1}Z_{2}$, $Z_{0}X_{1}Z_{3}$, $Z_{0}X_{2}Z_{4}$, $Z_{1}X_{3}Z_{4}$, $Z_{2}Z_{3}X_{4}\rangle$ & $1 + 10x^{3} + 15x^{4} + 6x^{5}$ \\ 
\midrule $[[5,1,2]]$  & 18 & 8 & $\langle Z_{0}Z_{1}Z_{3}$, $Z_{2}Z_{3}Z_{4}$, $X_{1}X_{2}X_{3}$, $X_{0}X_{3}X_{4}\rangle$ & $1 + 4x^{3} + 7x^{4} + 4x^{5}$ \\ 
 & 20 & 8 & $\langle X_{1}X_{2}$, $X_{1}Z_{3}Z_{4}$, $Y_{0}X_{3}Y_{4}$, $X_{0}Y_{1}Z_{2}Y_{3}\rangle$ & $1 + x^{2} + 3x^{3} + 6x^{4} + 5x^{5}$ \\ 
 & 14 & 16 & $\langle Z_{1}Z_{4}$, $X_{2}X_{3}$, $X_{0}X_{1}X_{4}$, $Z_{0}Z_{2}Z_{3}Z_{4}\rangle$ & $1 + 2x^{2} + 2x^{3} + 5x^{4} + 6x^{5}$ \\ 
 & 9 & 96 & $\langle Z_{0}Z_{4}$, $Z_{1}Z_{4}$, $Z_{2}Z_{3}$, $X_{0}X_{1}X_{2}X_{3}X_{4}\rangle$ & $1 + 4x^{2} + 3x^{4} + 8x^{5}$ \\ 
\midrule $[[5,1,3]]$  & 21 & 360 & $\langle Y_{0}Y_{1}Z_{2}Z_{3}$, $Y_{0}Z_{1}Y_{2}Z_{4}$, $X_{0}Z_{2}X_{3}Z_{4}$, $X_{0}Z_{1}Z_{3}X_{4}\rangle$ & $1 + 15x^{4}$ \\ 
\midrule $[[5,2,2]]$  & 27 & 12 & $\langle Z_{0}Z_{1}X_{4}$, $X_{0}X_{2}Z_{3}Z_{4}$, $X_{1}Z_{2}X_{3}Z_{4}\rangle$ & $1 + x^{3} + 3x^{4} + 3x^{5}$ \\ 
 & 26 & 48 & $\langle Z_{1}Z_{4}$, $Z_{0}Z_{2}Z_{3}Z_{4}$, $X_{0}X_{1}X_{2}X_{3}X_{4}\rangle$ & $1 + x^{2} + 2x^{4} + 4x^{5}$ \\ 
\bottomrule
\end{tabularx}
\endgroup
\caption{All $[[5,k]]$ indecomposable equivalence classes with $d\geq 2$ with the $[[5,1,2]]$, $[[5,2,3]]$ and $[[5,2,2]$ classes being the only error-detecting equivalence classes with five qubits. We have given a CSS representative if one exists.  \label{tab:codes5}}
\end{table}

\begin{figure}
    \centering
\begin{tikzpicture}
\begin{scope}[shift={(3.5,5)}]
    \node[
       regular polygon,
       regular polygon sides=5,
       shape border rotate=0,
       minimum size=1.5cm, 
       overlay 
    ] (P) {};

    \coordinate (O) at (0,0);
    \coordinate (M) at ($(P.corner 3)!0.5!(P.corner 4)$);
    \coordinate (C) at ($(O)!0.5!(M)$);
    \coordinate (d) at (0,-0.32);

    \fill[cXstab] (d) ellipse (1 and 0.8);
    \draw[strong] (d) ellipse (1 and 0.8);

    \fill[cXstab] (P.corner 1) to[bend left=90] (P.corner 5);
    \draw[strong] (P.corner 1) to[bend left=90] (P.corner 5);
    
    \fill[cZstab] (P.corner 1) -- (P.corner 2) -- (P.corner 3) -- (P.corner 4) -- (P.corner 5) -- cycle;
    \draw[strong] (P.corner 1) -- (P.corner 2) -- (P.corner 3) -- (P.corner 4) -- (P.corner 5) -- cycle;

    \foreach \point in {P.corner 1, P.corner 2, P.corner 3, P.corner 4, P.corner 5}
        \filldraw[black] (\point) circle (2pt); 

    \node at (0,-1.5) {$[[5,2,2]]:26$};

\end{scope}

\begin{scope}[shift={(0.5,1)}]
\node[
       regular polygon,
       regular polygon sides=5,
       shape border rotate=0,
       minimum size=1.5cm, 
       overlay 
    ] (P) {};

    \coordinate (O) at (0,0);
    \coordinate (M) at ($(P.corner 3)!0.5!(P.corner 4)$);
    \coordinate (C) at ($(O)!0.5!(M)$);
    \coordinate (d) at (0,-0.32);

    \fill[cXstab] (P.corner 1) to[bend left=90] (P.corner 5);
    \draw[strong] (P.corner 1) to[bend left=90] (P.corner 5);

    \fill[cXstab] (P.corner 1) to[bend right=90] (P.corner 2);
    \draw[strong] (P.corner 1) to[bend right=90] (P.corner 2);

    \fill[cXstab] (P.corner 3) to[bend right=90] (P.corner 4);
    \draw[strong] (P.corner 3) to[bend right=90] (P.corner 4);
    
    \fill[cZstab] (P.corner 1) -- (P.corner 2) -- (P.corner 3) -- (P.corner 4) -- (P.corner 5) -- cycle;
    \draw[strong] (P.corner 1) -- (P.corner 2) -- (P.corner 3) -- (P.corner 4) -- (P.corner 5) -- cycle;

    \foreach \point in {P.corner 1, P.corner 2, P.corner 3, P.corner 4, P.corner 5}
        \filldraw[black] (\point) circle (2pt); 

    \node at (0,-1.5) {$[[5,1,2]]:9$};

\end{scope}

\begin{scope}[shift={(3,0.5)}]
    \fill[cXstab] (0,0) rectangle (1,1);


    \fill[cZstab] (1,1) -- (0.5,1.5) -- (0,1) -- cycle;
    \draw[strong] (1,1) -- (0.5,1.5) -- (0,1) -- cycle;


    \fill[cZstab] (0,0) to[bend right=90] (1,0);
    \draw[strong] (0,0) to[bend right=90] (1,0);


    \fill[cXstab] (0.5,1.5) to[bend left=90] (1,1);
    \draw[strong] (0.5,1.5) to[bend left=90] (1,1);

    \draw[strong] (0,0) rectangle (1,1);
    
    \foreach \corner in {(0,0), (1,0), (0,1), (1,1), (0.5,1.5)}
        \filldraw[black] \corner circle (2pt);

    \node at (0.5,-1) {$[[5,1,2]]:14$};
\end{scope}

\begin{scope}[shift={(6,0.5)}]
    \fill[cZstab] (0.5,0.5) -- (0.5,1) -- (0,1) -- (0, 0.5) -- cycle;
    \fill[cXstab] (0.5,0.5) -- (0, 0.5) -- (0,0) -- (0.5,0) -- cycle;
    \fill[cYstab] (0.5,1) -- (1,1) -- (1,0) -- (0.5,0) -- cycle;

    \fill[cXstab] (0,1) to[bend left=90] (1,1);

    \coordinate (L) at (1.5, -0.5);
    \coordinate (A) at (0,0);
    \coordinate (B) at (1,0);
    \coordinate (C) at (1,1);
    \coordinate (ro) at (barycentric cs:C=1,B=1,L=1);
    \coordinate (lo) at (barycentric cs:A=1,B=1,L=1);
    \coordinate (D) at ($(A)!0.5!(L)$);
    \coordinate (E) at ($(C)!0.5!(L)$);
    \coordinate (F) at ($(C)!0.5!(B)$);
    \coordinate (G) at ($(A)!0.5!(B)$);
    \coordinate (H) at ($(B)!0.5!(L)$);

    \fill[cXstab] (C) -- (E) -- (ro) -- (F) -- cycle;
    \fill[cZstab] (E) -- (L) -- (H) -- (ro) -- cycle;
    \fill[cZstab] (H) -- (ro) -- (F) -- (B) -- cycle;

    \fill[cYstab] (L) -- (H) -- (lo) -- (D) -- cycle;
    \fill[cYstab] (D) -- (lo) -- (G) -- (A) -- cycle;
    \fill[cXstab] (H) -- (B) -- (G) -- (lo) -- cycle;

    \draw[strong] (A) -- (L) -- (C);
    \draw[strong] (L) -- (B);
    \draw[strong] (0,1) to[bend left=90] (C);
    \draw[strong] (A) rectangle (C);

    \foreach \corner in {(A), (B), (C), (L)}
        \filldraw[black] \corner circle (2pt);

    \node at (0.5,-1) {$[[5,1,2]]:20$};

\end{scope}

\begin{scope}[shift={(5.5,4)}]
    \coordinate (A) at (0,0);
    \coordinate (B) at (2,0);
    \coordinate (C) at (1,2.5); 

    \coordinate (center) at (barycentric cs:A=1,B=1,C=1);

    \coordinate (J) at (barycentric cs:A=1,B=1,center=1);

    \coordinate (M) at ($(B)!0.5!(center)$);
    \coordinate (Y) at ($(A)!0.5!(B)$);
    \coordinate (N) at ($(A)!0.5!(center)$);

    \coordinate (D) at ($(C)!0.5!(center)$);
    \coordinate (E) at ($(C)!0.5!(D)$);
    \coordinate (F) at ($(D)!0.5!(center)$);
    \coordinate (G) at ($(C)!0.5!(B)$);
    \coordinate (H) at ($(E)!0.5!(M)$);
    \coordinate (K) at ($(A)!0.5!(C)$);
    \coordinate (I) at ($(K)!0.5!(F)$);

    \fill[cZstab] (B) -- (M) -- (J) -- (Y) -- cycle;
    \fill[cXstab] (M) -- (J) -- (N) -- (center) -- cycle;
    \fill[cZstab] (Y) -- (J) -- (N) -- (A) -- cycle;

    \fill[cXstab] (B) -- (G) -- (H) -- (M) -- cycle;
    \fill[cZstab] (M) -- (center) -- (F) -- (H) -- cycle;
    \fill[cZstab] (F) -- (H) -- (E) -- cycle;
    \fill[cXstab] (C) -- (G) -- (H) -- (E) -- cycle;

    \fill[cZstab] (C) -- (K) -- (I) -- (E) -- cycle;
    \fill[cXstab] (E) -- (F) -- (I) -- cycle;
    \fill[cXstab] (K) -- (I) -- (N) -- (A) -- cycle;
    \fill[cZstab] (N) -- (center) -- (F) -- (I) -- cycle;

    \draw[thick] (A) -- (B) -- (C) -- cycle;
    
    \draw[thick] (A) -- (center);
    \draw[thick] (B) -- (center);
    \draw[thick] (C) -- (center);
    
    \foreach \point in {A, B, C, D, center}
        \filldraw[black] (\point) circle (2pt); 

    \node at (1,-0.5) {$[[5,2,2]]:27$};
     
\end{scope}

\begin{scope}[shift={(9,0.5)}]

    \coordinate (A) at (0,0);
    \coordinate (B) at (1,0);
    \coordinate (C) at (1,1);
    \coordinate (D) at (0,1);
    \coordinate (center) at (0.5,0.5);

    \fill[cZstab] (D) -- (A) -- (center) -- cycle;
    \fill[cZstab] (C) -- (B) -- (center) -- cycle;
    \fill[cXstab] (D) -- (C) -- (center) -- cycle;
    \fill[cXstab] (A) -- (B) -- (center) -- cycle;

    \draw[strong] (A) rectangle (C);
    \draw[strong] (D) -- (B);
    \draw[strong] (C) -- (A);

    \foreach \corner in {A, B, C, D, center}
        \filldraw[black] (\corner) circle (2pt);

    \node at (0.5,-1) {$[[5,1,2]]:18$};

\end{scope}
\draw[->, thick] (3.5,3.25) -- (1.5,2.25);
\draw[->, thick] (3.5,3.25) -- (3.5,2.25); 
\draw[->, thick] (3.5,3.25) -- (5.5,2.25);

\draw[->, thick] (6.5,3.25) -- (4.5,2.25);
\draw[->, thick] (6.5,3.25) -- (6.5,2.25);
\draw[->, thick] (6.5,3.25) -- (8.5,2.25);

\end{tikzpicture}

\caption{Representatives of all equivalence classes of indecomposable five qubit $d=2$  codes with $k\geq 1$ drawn as planar graphs whose faces correspond to stabilizer generators. An arrow $P\rightarrow C$ indicates that a parent-child relationship exists between the child class $C$ and the parent class $P$. Such as relationship exists if given any code $R_C$ representing class $C$ there exists a code $R_P$ representing class $P$ such that $R_C\subset R_P$.\label{fig:codes5}}

\end{figure}

\paragraph{Error-correcting codes on 6 qubits}

There are 59 indecomposable error-detecting codes on six qubits, a number that is already too large to present in a concise manner. Therefore, we focus attention on the error-correcting codes from this point forward. There are exactly two inequivalent $[[6,1,3]]$ codes \cite{CRSS98,swokl08} $[[6,1,3]]:68$ and $[[6,1,3]]:87$. The class $[[6,1,3]]:87$ is decomposable and can be obtained from $[5,1,3]]:21$ by appending a single qubit. The other is the degenerate indecomposable class $[[6,1,3]]:68$ with the properties
\begin{align}
    S & = \langle X_{0}Z_{5}, X_{1}X_{2}Z_{3}Z_{4}, Y_{1}Y_{3}Z_{4}Z_{5}, X_{1}Z_{2}X_{4}Z_{5}, Z_{0}Y_{1}Z_{2}Z_{3}Y_{5}\rangle, \\
    w(x) & = 1 + x^{2} + 11x^{4} + 16x^{5} + 3x^{6}, \qquad |\aut{S}| = 96.
\end{align}
\noindent The tables in the Appendix include a complete list of all 6 qubit indecomposable classes.

\subsection{Error-correcting codes on 7 qubits}

Table~\ref{tab:codes7} lists the $16$ indecomposable $[[7,1,3]]$ codes classes. These classes were previously known \cite{YCO07}, and have been presented as codeword stabilized codes \cite{CSSZ09}. We briefly comment on the five most symmetric codes, i.e. those with the largest automorphism groups. The Steane code \cite{S96} $\eclass{7}{1}{3}{226}$ is the only $[[7,1,3]]$ CSS code class. It has the largest automorphism group among 7-qubit error-correcting codes, is the smallest error-correcting CSS code, and has no gauge freedom. The code class $\eclass{7}{1}{3}{108}$ with $|\aut{S}|=768$ has been called the bare 7-qubit code \cite{LGDHB17}. The code class $\eclass{7}{1}{3}{115}$ with $|\aut{S}|=576$ appears on an ascending code conversion path between the $[[5,1,3]]$ and Steane codes \cite{HFWH13}:
\begin{align*}
\eclass{5}{1}{3}{21} &\rightarrow 
\eclass{6}{1}{3}{68}\otimes X^{\otimes 4} \rightarrow
\eclass{7}{1}{3}{115}\otimes X^{\otimes 3} \rightarrow \\ 
&\quad\rightarrow
\eclass{8}{1}{3}{524}\otimes X^{\otimes 3} 
\rightarrow \eclass{9}{1}{3}{3528}\otimes X^{\otimes 1} \rightarrow
\cdots\rightarrow
\eclass{7}{1}{3}{226}\otimes X^{\otimes 3}.
\end{align*}
\noindent The code class $\eclass{7}{1}{3}{190}$ with $|\aut{S}|=192$ is the smallest example of a triangle code \cite{YK17}.

\begin{table}
\renewcommand{\arraystretch}{1.25}
\centering
\begingroup
\begin{tabularx}{\textwidth}{l|l|L{20}|l}
\toprule
$\mathrm{Idx}$ & $|\mathrm{Aut}(S)|$ & $S$ & $w(x)$ \\ 
\specialrule{1.5pt}{1pt}{1pt}
185 & 4 & $\langle Z_{3}X_{4}Z_{6}$, $Z_{2}Z_{4}X_{6}$, $Y_{0}Y_{1}Z_{2}Z_{5}$, $Y_{0}Z_{1}Y_{2}Z_{6}$, $Z_{0}X_{1}X_{3}Z_{4}$, $X_{0}Z_{2}Z_{3}X_{5}\rangle$ & $1 + 2x^{3} + 9x^{4} + 24x^{5} + 22x^{6} + 6x^{7}$ \\ 
200 & 6 & $\langle Z_{1}Z_{3}X_{5}$, $Z_{2}Z_{4}X_{6}$, $Y_{0}Y_{1}Z_{2}Z_{5}$, $Y_{0}Z_{1}Y_{2}Z_{6}$, $Z_{0}X_{1}X_{3}Z_{4}$, $Z_{0}X_{2}Z_{3}X_{4}\rangle$ & $1 + 2x^{3} + 9x^{4} + 24x^{5} + 22x^{6} + 6x^{7}$ \\ 
221 & 16 & $\langle X_{0}X_{1}Z_{3}Z_{4}$, $X_{0}X_{2}Z_{3}Z_{5}$, $Z_{0}Z_{1}Y_{3}Y_{4}$, $Z_{2}Z_{3}Z_{4}X_{5}$, $Z_{0}Z_{1}Z_{2}X_{6}$, $Y_{0}Y_{3}Z_{4}Z_{5}Z_{6}\rangle$ & $1 + 13x^{4} + 24x^{5} + 18x^{6} + 8x^{7}$ \\ 
240 & 16 & $\langle X_{0}Z_{6}$, $Z_{1}Z_{3}X_{4}$, $Z_{2}Z_{3}X_{5}$, $X_{1}X_{2}Z_{4}Z_{5}$, $X_{1}X_{3}Z_{5}Z_{6}$, $Z_{0}Y_{1}Z_{2}Z_{4}Y_{6}\rangle$ & $1 + x^{2} + 2x^{3} + 7x^{4} + 24x^{5} + 23x^{6} + 6x^{7}$ \\ 
255 & 32 & $\langle X_{0}Z_{4}$, $Y_{1}Y_{2}Z_{3}Z_{5}$, $Y_{1}Z_{2}Y_{3}Z_{6}$, $Z_{0}X_{4}Z_{5}Z_{6}$, $Z_{2}Z_{3}Y_{5}Y_{6}$, $X_{1}Z_{3}Z_{4}X_{5}Z_{6}\rangle$ & $1 + x^{2} + 11x^{4} + 24x^{5} + 19x^{6} + 8x^{7}$ \\ 
257 & 32 & $\langle X_{0}Z_{6}$, $X_{1}X_{2}Z_{4}Z_{5}$, $X_{1}X_{3}Z_{5}Z_{6}$, $Y_{1}Z_{3}Y_{4}Z_{6}$, $Y_{2}Z_{3}Y_{5}Z_{6}$, $Z_{0}Z_{1}Z_{2}X_{6}\rangle$ & $1 + x^{2} + 19x^{4} + 43x^{6}$ \\ 
227 & 42 & $\langle Y_{0}Y_{1}Z_{2}Z_{5}$, $Y_{0}Z_{1}Y_{2}Z_{6}$, $Z_{0}X_{1}X_{3}Z_{4}$, $Z_{0}X_{2}Z_{3}X_{4}$, $X_{0}Z_{2}Z_{3}X_{5}$, $X_{0}Z_{1}Z_{4}X_{6}\rangle$ & $1 + 21x^{4} + 42x^{6}$ \\ 
209 & 48 & $\langle X_{0}X_{1}Z_{3}Z_{4}$, $X_{0}X_{2}Z_{3}Z_{5}$, $Z_{0}Z_{1}Y_{3}Y_{4}$, $Z_{0}Z_{2}Y_{3}Y_{5}$, $Z_{0}Z_{1}Z_{2}X_{6}$, $Y_{0}Y_{3}Z_{4}Z_{5}Z_{6}\rangle$ & $1 + 13x^{4} + 24x^{5} + 18x^{6} + 8x^{7}$ \\ 
164 & 64 & $\langle X_{0}Z_{5}$, $X_{1}Z_{6}$, $Y_{2}Y_{3}Z_{4}Z_{5}$, $Y_{2}Z_{3}Y_{4}Z_{6}$, $Z_{0}X_{2}Z_{4}X_{5}Z_{6}$, $Z_{1}X_{2}Z_{3}Z_{5}X_{6}\rangle$ & $1 + 2x^{2} + 9x^{4} + 24x^{5} + 20x^{6} + 8x^{7}$ \\ 
166 & 64 & $\langle X_{0}Z_{6}$, $X_{1}Z_{3}$, $Z_{3}X_{4}Z_{5}Z_{6}$, $Z_{1}X_{2}Y_{3}Y_{4}$, $Z_{2}Z_{4}X_{5}Z_{6}$, $Z_{0}Y_{2}Z_{4}Y_{6}\rangle$ & $1 + 2x^{2} + 17x^{4} + 44x^{6}$ \\ 
239 & 96 & $\langle X_{0}Z_{1}$, $Z_{0}X_{1}Z_{6}$, $X_{2}X_{3}Z_{4}Z_{5}$, $Y_{2}Y_{4}Z_{5}Z_{6}$, $X_{2}Z_{3}X_{5}Z_{6}$, $Z_{1}Y_{2}Z_{3}Z_{4}Y_{6}\rangle$ & $1 + x^{2} + 2x^{3} + 7x^{4} + 24x^{5} + 23x^{6} + 6x^{7}$ \\ 
228 & 144 & $\langle Y_{0}Y_{1}Z_{5}Z_{6}$, $Z_{0}X_{1}X_{2}Z_{3}$, $X_{0}Z_{1}Z_{2}X_{3}$, $X_{0}Z_{1}X_{4}Z_{6}$, $Z_{0}Z_{3}Z_{4}X_{5}$, $Z_{1}Z_{2}Z_{4}X_{6}\rangle$ & $1 + 21x^{4} + 42x^{6}$ \\ 
190 & 192 & $\langle X_{0}Z_{6}$, $X_{1}Z_{5}$, $X_{2}Z_{4}$, $Z_{2}X_{3}X_{4}Z_{5}$, $Z_{1}Y_{3}Y_{5}Z_{6}$, $Z_{0}Z_{3}Z_{4}X_{6}\rangle$ & $1 + 3x^{2} + 15x^{4} + 45x^{6}$ \\ 
115 & 576 & $\langle X_{0}Z_{6}$, $X_{1}Z_{6}$, $X_{2}X_{3}Z_{4}Z_{5}$, $Y_{2}Y_{4}Z_{5}Z_{6}$, $X_{2}Z_{3}X_{5}Z_{6}$, $Z_{0}Z_{1}Y_{2}Z_{3}Z_{4}Y_{6}\rangle$ & $1 + 3x^{2} + 15x^{4} + 45x^{6}$ \\ 
108 & 768 & $\langle X_{0}Z_{4}$, $X_{1}Z_{4}$, $X_{2}Z_{5}$, $X_{3}Z_{6}$, $Z_{2}Z_{3}Y_{5}Y_{6}$, $Z_{0}Z_{1}Z_{2}X_{4}X_{5}Z_{6}\rangle$ & $1 + 5x^{2} + 11x^{4} + 47x^{6}$ \\ 
226 & 1008 & $\langle Z_{0}Z_{1}Z_{3}Z_{6}$, $Z_{0}Z_{2}Z_{3}Z_{5}$, $Y_{1}Y_{2}Y_{3}Y_{4}$, $Z_{3}Z_{4}Z_{5}Z_{6}$, $Y_{0}Y_{1}Y_{4}Y_{5}$, $Y_{0}Y_{2}Y_{4}Y_{6}\rangle$ & $1 + 21x^{4} + 42x^{6}$ \\ 
\bottomrule
\end{tabularx}
\endgroup
\caption{All 16 classes if indecomposable $[[7,1,3]]$ codes sorted by automorphism group order. The last four classes are as follows: $[[7,1,3]]:226$ represents the Steane code~\cite{S96}; $[[7,1,3]]:108$ represents the bare 7-qubit code~\cite{LGDHB17}; $[[7,1,3]]:115$ represents the 7-qubit 5-7 conversion code~\cite{HFWH13}; and $[[7,1,3]]:190$ represents the $d=3$ triangle code~\cite{YK17}.\label{tab:codes7} }
\end{table}

\subsection{Error-correcting codes on 8 qubits}

There are 157 indecomposable $[[8,1,3]]$ codes. The classes with the largest automorphisms groups are listed in Table~\ref{tab:codes813inde1024}. There are 21 indecomposable codes for which we found generating sets with planar Tanner graphs; the four with the largest automorphism groups are listed in Table~\ref{tab:codes813inde1024} and have parent subsystem codes.  There are only two $[[8,1,3]]$ codes with $|\aut{S}|\geq 1024$ for which we did not find planar embeddings. These are
\begin{itemize}
	\item \eclass{8}{1}{3}{984} : $S=\langle X_1Z_8, X_2Z_8, X_3Z_8, X_4X_5Z_6Z_7, Y_4Y_6Z_7Z_8, X_4Z_5X_7Z_8, Y_1Y_2Z_3X_4Y_5Y_6X_8\rangle$, \\ $w(x)=1+6x^2+20x^4+34x^6+64x^7+3x^8$, $|\aut{S}|=4608$,
	\item \eclass{8}{1}{3}{602} : $S=\langle X_1Z_8, X_2Z_8, Y_3Y_4Z_5Z_6, Y_3Z_4Y_5Z_7, X_3Z_5X_6Z_7, X_3Z_4Z_6X_7, Y_1Y_2Z_3X_5Z_7X_8\rangle$, \\ $w(x)=1+3x^2+15x^4+85x^6+24x^8$, $|\aut{S}|=2880$.
\end{itemize}
The first code has an $[[8,1,3]]$ parent subsystem code with three gauge qubits whose X gauge operators are the first three generators of the code. The corresponding Z gauge operators are $Z_1$, $Z_2$, and $Z_3$. Therefore the Z-basis gauge-fixing is the $[[5,1,3]]$ code. The second code has an $[[8,1,3]]$ parent subsystem code with two gauge qubits
\begin{align}
G = \langle & Y_1Z_2X_3Z_4Z_5Y_8, Z_1Z_2Z_3X_4Z_6X_8, Z_1Z_2Z_3X_5Z_7X_8, Z_1Y_2Z_4X_6Z_7Y_8, Z_1Z_2Z_5Z_6X_7X_8 \\
& \tilde{X}_1=X_1Z_8, \tilde{Z}_1=Z_1Z_3, \tilde{X}_2=X_2Z_8, \tilde{Z}_2=Z_2Z_6\rangle.
\end{align}
In both cases these are the unique parent subsystem codes (up to code equivalence) with the maximal number of gauge qubits. The \eclass{8}{1}{3}{855} code class with the largest automorphism group is the class at the top of Table~\ref{tab:codes813inde1024} with $|\textrm{Aut}(S)|=6144$. The entries of Table~\ref{tab:codes813inde1024} that have planar Tannar graphs are presented graphically in Figure~\ref{fig:codes8}.

\begin{table}
\renewcommand{\arraystretch}{1.55}
\centering
\begingroup
\begin{tabularx}{\textwidth}{l|l|L{24}|>{\hsize=4cm\RaggedRight} X}
\toprule
$\mathrm{Idx}$ & $|\mathrm{Aut}(S)|$ & $S$ & $w(x)$ \\ 
\specialrule{1.5pt}{1pt}{1pt}
855  & 6144 &  $\langle X_{0}Z_{7}, X_{1}Z_{7}, X_{2}Z_{7}, X_{3}Z_{5},$ $X_{3}Z_{5}, X_{4}Z_{6}, Z_{3}Z_{4}Y_{5}Y_{6}, Z_{0}Z_{1}Z_{2}Z_{3}X_{5}Z_{6}X_{7}\rangle$ & $1 + 8x^{2} + 18x^{4} + 32x^{6} + 64x^{7} + 5x^{8}$\\ 
894 & 4608 &  $\langle X_{0}Z_{7}, X_{1}Z_{7}, X_{2}Z_{7}, X_{3}X_{4}Z_{5}Z_{6},$ $X_{3}X_{4}Z_{5}Z_{6}, Y_{3}Y_{5}Z_{6}Z_{7}, X_{3}Z_{4}X_{6}Z_{7}, Z_{0}Z_{1}Z_{2}Y_{3}Z_{4}Z_{5}Y_{7}\rangle$ & $1 + 6x^{2} + 20x^{4} + 34x^{6} + 64x^{7} + 3x^{8}$\\ 
602 & 2880 &  $\langle X_{0}Z_{7}, X_{1}Z_{7}, Y_{2}Y_{3}Z_{4}Z_{5}, Y_{2}Z_{3}Y_{4}Z_{6},$ $Y_{2}Z_{3}Y_{4}Z_{6}, X_{2}Z_{4}X_{5}Z_{6}, X_{2}Z_{3}Z_{5}X_{6}, Z_{0}Z_{1}X_{2}Z_{3}Z_{4}X_{7}\rangle$ & $1 + 3x^{2} + 15x^{4} + 85x^{6} + 24x^{8}$\\ 
631 & 2304 &  $\langle X_{0}Z_{7}, X_{1}Z_{7}, X_{2}Z_{6}, X_{3}Z_{6},$ $X_{3}Z_{6}, X_{4}Z_{5}, Z_{0}Z_{1}Z_{4}Y_{5}Y_{7}, Z_{2}Z_{3}Z_{4}X_{5}X_{6}Z_{7}\rangle$ & $1 + 7x^{2} + 15x^{4} + 8x^{5} + 33x^{6} + 56x^{7} + 8x^{8}$\\ 
1090 & 2048 &  $\langle X_{0}Z_{4}, X_{1}Z_{5}, X_{2}Z_{6}, X_{3}Z_{7},$ $X_{3}Z_{7}, Z_{0}Z_{1}Y_{4}Y_{5}, Z_{2}Z_{3}Y_{6}Y_{7}, Z_{0}Z_{2}X_{4}Z_{5}X_{6}Z_{7}\rangle$ & $1 + 4x^{2} + 14x^{4} + 84x^{6} + 25x^{8}$\\ 
1094 & 1024 &  $\langle X_{0}Z_{6}, X_{1}Z_{7}, X_{2}Z_{5}, X_{3}X_{4},$ $X_{3}X_{4}, Z_{0}Y_{3}Z_{4}Z_{5}Y_{6}, Z_{0}Z_{2}X_{5}X_{6}Z_{7}, Z_{1}Z_{3}Z_{4}Z_{6}X_{7}\rangle$ & $1 + 4x^{2} + 6x^{4} + 32x^{5} + 36x^{6} + 32x^{7} + 17x^{8}$\\ 
\bottomrule
\end{tabularx}
\endgroup
\caption{All indecomposable $[[8,1,3]]$ code classes with $|\mathrm{Aut}(S)|\geq 1024$. All of the above codes are degenerate. \label{tab:codes813inde1024}}
\end{table}

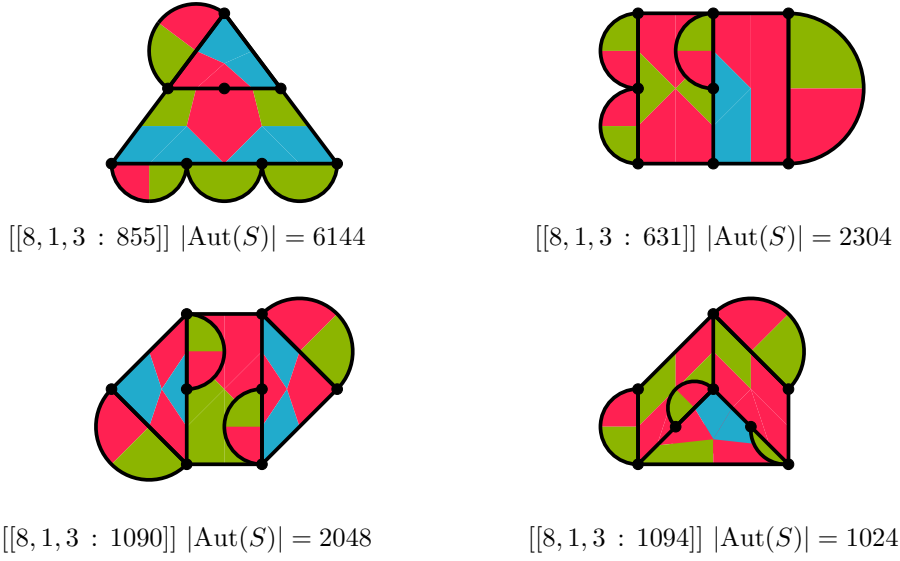
\begin{figure}[h!]
    \centering
    \begin{tikzpicture}

        \node at (1,3) {$\eclass{8}{1}{3}{855}$ $|\aut{S}|=6144$};
        \begin{scope}[shift={(0,4)}]
            \coordinate (A) at (0,0);
            \coordinate (B) at (1.5,2);
            \coordinate (C) at (3,0);
            \coordinate (X) at (1,0);
            \coordinate (Y) at (2,0);
            \coordinate (D) at ($(A)!0.5!(B)$);
            \coordinate (E) at ($(B)!0.5!(C)$);
            \coordinate (F) at ($(D)!0.5!(E)$);
            \coordinate (G) at ($(A)!0.5!(X)$);
            \coordinate (H) at ($(X)!0.5!(Y)$);
            \coordinate (I) at ($(Y)!0.5!(C)$);
            \coordinate (J) at ($(E)!0.5!(C)$);
            \coordinate (K) at ($(B)!0.5!(E)$);

            \coordinate (L) at ($(B)!0.5!(D)$);
            \coordinate (M) at ($(A)!0.5!(D)$);
            \coordinate (N) at (1.5,1);

            \coordinate (R) at ($(N)!0.5!(E)$);
            \coordinate (S) at ($(D)!0.5!(N)$);

            \coordinate (TBY) at (barycentric cs:D=1,B=1,E=1);

            \fill[cYstab] (B) -- (K) -- (TBY) -- (L) -- cycle;
            \fill[cYstab] (K) -- (E) -- (R) -- (TBY) -- cycle;
            \fill[cZstab] (R) -- (TBY) -- (S) -- cycle;
            \fill[cZstab] (S) -- (D) -- (L) -- (TBY) -- cycle;

            \fill[cXstab] (D) -- (S) -- (1,0.5) -- (M) -- cycle;
            \fill[cYstab] (M) -- (1,0.5) --  (G) -- (A) -- cycle;
            \fill[cYstab] (G) -- (1,0.5) -- (H) -- cycle;
            \fill[cZstab] (S) -- (1,0.5) -- (H) -- (2,0.5) -- (R) -- cycle;
            \fill[cYstab] (H) -- (2,0.5) -- (I) -- cycle;
            \fill[cYstab] (C) -- (I) -- (2,0.5) -- (J) -- cycle;
            \fill[cXstab] (J) -- (2,0.5) -- (R) -- (E) -- cycle;

            \fill[cXstab] (X) arc (0:-90:0.5) -- (G) -- cycle;
            \fill[cZstab] (0.5,-0.5) arc (-90:-180:0.5) -- (0.5,0) -- cycle;

            \fill[cXstab] (0.75,1) arc (233.13:143.13:0.625) -- (1.125, 1.5) -- cycle;
            \fill[cZstab] (1.5,2) arc (53.13:143.13:0.625) -- (1.125, 1.5) -- cycle;

            \fill[cXstab] (Y) arc (0:-180:0.5);
            \fill[cXstab] (C) arc (0:-180:0.5);
        
            \draw[strong] (D) -- (B) -- (E) -- cycle;
            \draw[strong] (A) -- (B) -- (C) -- cycle;

            \draw[strong] (X) arc (0:-90:0.5);
            \draw[strong] (0.5,-0.5) arc (-90:-180:0.5);

            \draw[strong] (0.75,1) arc (233.13:143.13:0.625);
            \draw[strong] (1.5,2) arc (53.13:143.13:0.625);

            \draw[strong] (Y) arc (0:-180:0.5);
            \draw[strong] (C) arc (0:-180:0.5);

            \foreach \corner in {A,B,C,D,E,N,X,Y}
                \filldraw[black] (\corner) circle (2pt);
        \end{scope}

        \node at (8,3) {$\eclass{8}{1}{3}{631}$ $|\aut{S}|=2304$};
        \begin{scope}[shift={(7,4)}]
            \coordinate (A) at (0,0);
            \coordinate (B) at (0,2);
            \coordinate (C) at (2,2);
            \coordinate (D) at (2,0);
            \coordinate (E) at (1,1);
            \coordinate (F) at (1,2);
            \coordinate (G) at (0,1);
            \coordinate (H) at ($(C)!0.5!(D)$);
            \coordinate (I) at (1.5,1);
            \coordinate (L) at (1,0);
            \coordinate (J) at ($(D)!0.5!(L)$);
            \coordinate (M) at ($(A)!0.5!(L)$);
            \coordinate (N) at ($(A)!0.5!(G)$);
            \coordinate (O) at ($(B)!0.5!(G)$);
            \coordinate (P) at ($(B)!0.5!(F)$);
            \coordinate (Q) at ($(F)!0.5!(C)$);
            \coordinate (R) at (0.5,1);
            \coordinate (S) at ($(F)!0.5!(E)$);
            \coordinate (T) at ($(E)!0.5!(L)$);

        \fill[cZstab] (B) -- (P) -- (R) -- (O) -- cycle;
        \fill[cZstab] (P) -- (F) -- (S) -- (R) -- cycle;
        \fill[cXstab] (O) -- (R) -- (N) -- cycle;
        \fill[cZstab] (A) -- (N) -- (R) -- (M) -- cycle;
        \fill[cZstab] (M) -- (L) -- (T) -- (R) -- cycle;
        \fill[cXstab] (T) -- (R) -- (S) -- cycle;

        \fill[cZstab] (F) -- (S) -- (I) -- (Q) -- cycle;
        \fill[cZstab] (Q) -- (I) -- (H) -- (C) -- cycle;
        \fill[cZstab] (H) -- (D) -- (J) -- (I) -- cycle;
        \fill[cYstab] (J) -- (L) -- (T) -- (I) -- cycle;
        \fill[cYstab] (T) -- (S) -- (I) -- cycle;

        \fill[cXstab] (0,2) arc (90:180:0.5) -- (0,1.5) -- cycle;
        \fill[cZstab] (-0.5,1.5) arc (180:270:0.5) -- (0,1.5) -- cycle;

        \fill[cZstab] (0,1) arc (90:180:0.5) -- (0,0.5) -- cycle;
        \fill[cXstab] (-0.5,0.5) arc (180:270:0.5) -- (0,0.5) -- cycle;

        \fill[cXstab] (1,2) arc (90:180:0.5) -- (1,1.5) -- cycle;
        \fill[cZstab] (0.5,1.5) arc (180:270:0.5) -- (1,1.5) -- cycle;

        \fill[cXstab] (2,2) arc (90:0:1) -- (2,1) -- cycle;
        \fill[cZstab] (3,1) arc (0:-90:1) -- (2,1) -- cycle;

        \draw[strong] (A) -- (B) -- (C) -- (D) -- cycle;
        \draw[strong] (F) -- (L);
        \draw[strong] (0,2) arc (90:270:0.5) -- cycle;
        \draw[strong] (0,1) arc (90:270:0.5) -- cycle;
        \draw[strong] (1,2) arc (90:270:0.5) -- cycle;
        \draw[strong] (2,2) arc (90:-90:1) -- cycle;

        \foreach \corner in {A,B,C,D,E, F, G, L}
                \filldraw[black] (\corner) circle (2pt);  
            
        \end{scope}

        \node at (1,-1) {$\eclass{8}{1}{3}{1090}$ $|\aut{S}|=2048$};
        \begin{scope}[shift={(1,0)}]
            \coordinate (A) at (0,0);
            \coordinate (B) at (0,2);
            \coordinate (C) at (1,2);
            \coordinate (D) at (1,0);
            \coordinate (E) at (1,1);
            \coordinate (F) at (0,1);
            \coordinate (G) at (2,1);
            \coordinate (H) at (-1,1);
            \coordinate (I) at ($(H)!0.5!(B)$);
            \coordinate (J) at ($(B)!0.5!(C)$);
            \coordinate (K) at ($(C)!0.5!(G)$);
            \coordinate (L) at ($(D)!0.5!(G)$);
            \coordinate (M) at ($(A)!0.5!(D)$);
            \coordinate (N) at ($(A)!0.5!(H)$);
            \coordinate (O) at ($(A)!0.5!(F)$);
            \coordinate (P) at ($(D)!0.5!(E)$);
            \coordinate (Q) at ($(C)!0.5!(E)$);
            \coordinate (R) at ($(B)!0.5!(F)$);
            \coordinate (S) at ($(F)!0.5!(E)$);
            \coordinate (T) at (barycentric cs:A=1,H=1,B=1);
            \coordinate (U) at (barycentric cs:C=1,D=1,G=1);

            \fill[cZstab] (B) -- (J) -- (S) -- (R) -- cycle;
            \fill[cXstab] (R) -- (S) -- (O) -- cycle;
            \fill[cXstab] (O) -- (A) -- (M) -- (S) -- cycle;
            \fill[cXstab] (M) -- (D) -- (P) -- (S) -- cycle;
            \fill[cZstab] (P) -- (S) -- (Q) -- cycle;
            \fill[cZstab] (Q) -- (S) -- (J) -- (C) -- cycle;
            \fill[cYstab] (C) -- (Q) -- (U) -- (K) -- cycle;
            \fill[cZstab] (Q) -- (U) -- (P) -- cycle;
            \fill[cYstab] (P) -- (D) -- (L) -- (U) -- cycle;
            \fill[cZstab] (U) -- (L) -- (G) -- (K) -- cycle;
            \fill[cZstab] (B) -- (R) -- (T) -- (I) -- cycle;
            \fill[cYstab] (R) -- (T) -- (O) -- cycle;
            \fill[cZstab] (T) -- (O) -- (A) -- (N) -- cycle;
            \fill[cYstab] (H) -- (I) -- (T) -- (N) -- cycle;

            \fill[cZstab] (-1,1) arc (135:225:0.7071) -- (-0.5,0.5) -- cycle;
            \fill[cXstab] (-1,0) arc (225:315:0.7071) -- (-0.5,0.5) -- cycle;

            \fill[cXstab] (0,2) arc (90:0:0.5) -- (0,1.5) -- cycle;
            \fill[cZstab] (0.5, 1.5) arc (0:-90:0.5) -- (0,1.5) -- cycle;

            \fill[cXstab] (1,1) arc (90:180:0.5) -- (1,0.5) -- cycle;
            \fill[cZstab] (0.5,0.5) arc (180:270:0.5) -- (1,0.5) -- cycle;

            \fill[cZstab] (1,2) arc (135:45:0.7071) -- (1.5,1.5) -- cycle;
            \fill[cXstab] (2,2) arc (45:-45:0.7071) -- (1.5,1.5) -- cycle;

            \draw[strong] (A) -- (B) -- (C) -- (D) -- cycle;
            \draw[strong] (C) -- (G) -- (D);
            \draw[strong] (B) -- (H) -- (A);

            \draw[strong] (-1,1) arc (135:315:0.7071) -- cycle;
            \draw[strong] (0,2) arc (90:-90:0.5) -- cycle;
            \draw[strong] (1,1) arc (90:270:0.5) -- cycle;
            \draw[strong] (1,2) arc (135:-45:0.7071) -- cycle;

            \foreach \corner in {A, B, C, D, E, F, G, H}
                \filldraw[black] (\corner) circle (2pt);
        \end{scope}

        \node at (8,-1) {$\eclass{8}{1}{3}{1094}$ $|\aut{S}|=1024$};
        \begin{scope}[shift={(7,0)}]
            \coordinate (A) at (0,0);
            \coordinate (B) at (0,1);
            \coordinate (C) at (1,2);
            \coordinate (D) at (2,1);
            \coordinate (E) at (2,0);
            \coordinate (F) at (1,1);
            \coordinate (G) at (0.5,0.5);
            \coordinate (H) at (1.5,0.5);
            \coordinate (I) at (1,0);
            \coordinate (J) at ($(A)!0.5!(B)$);
            \coordinate (K) at ($(B)!0.5!(C)$);
            \coordinate (L) at ($(C)!0.5!(D)$);
            \coordinate (M) at ($(D)!0.5!(E)$);
            \coordinate (N) at ($(I)!0.5!(E)$);
            \coordinate (O) at ($(A)!0.5!(I)$);
            \coordinate (P) at ($(A)!0.5!(G)$);
            \coordinate (Q) at ($(C)!0.5!(F)$);
            \coordinate (R) at ($(G)!0.5!(F)$);
            \coordinate (S) at ($(F)!0.5!(H)$);
            \coordinate (T) at ($(H)!0.5!(E)$);
            \coordinate (U) at (barycentric cs:A=1,E=1,F=1);
            \coordinate (V) at (barycentric cs:F=1,E=1,C=1,D=1);
            \coordinate (W) at (barycentric cs:A=1,B=1,C=1,F=1);

            \fill[cXstab] (B) -- (K) -- (W) -- (J) -- cycle;
            \fill[cZstab] (K) -- (C) -- (Q) -- (W) -- cycle;
            \fill[cXstab] (Q) -- (W) -- (R) -- (F) -- cycle;
            \fill[cZstab] (R) -- (W) -- (P) -- (G) -- cycle;
            \fill[cZstab] (A) -- (P) -- (W) -- (J) -- cycle;
            \fill[cXstab] (C) -- (Q) -- (V) -- (L) -- cycle;
            \fill[cZstab] (L) -- (V) -- (M) -- (D) -- cycle;
            \fill[cZstab] (M) -- (V) -- (T) -- (E) -- cycle;
            \fill[cZstab] (T) -- (V) -- (S) -- (A) -- cycle;
            \fill[cZstab] (S) -- (F) -- (Q) -- (V) -- cycle;
            \fill[cYstab] (F) -- (S) -- (U) -- (R) -- cycle;
            \fill[cZstab] (R) -- (U) -- (P) -- (G) -- cycle;
            \fill[cXstab] (A) -- (P) -- (U) -- (I) -- cycle;
            \fill[cZstab] (I) -- (U) -- (T) -- (E) -- cycle;
            \fill[cYstab] (U) -- (S) -- (H) -- (T) -- cycle;

            \fill[cZstab] (0,1) arc (90:180:0.5) -- (0,0.5) -- cycle;
            \fill[cXstab] (-0.5,0.5) arc (180:270:0.5) -- (0,0.5) -- cycle;

            \fill[cZstab] (1,1) arc (45:135:0.3536) -- (R) -- cycle;
            \fill[cXstab] (0.5,1) arc (135:225:0.3536) -- (R) -- cycle;

            \fill[cZstab] (1,2) arc (135:45:0.7071) -- (L) -- cycle;
            \fill[cXstab] (2,2) arc (45:-45:0.7071) -- (L) -- cycle;

            \fill[cXstab] (H) arc (175.298:274.702:0.4638) -- cycle;

            \draw[strong] (A) -- (B) -- (C) -- (D) -- (E) -- cycle;
            \draw[strong] (A) -- (F) -- (C);
            \draw[strong] (F) -- (E);

            \draw[strong] (0,1) arc (90:270:0.5) -- (0,0.5) -- cycle;

            \draw[strong] (1,1) arc (45:225:0.3536) -- (R) -- cycle;

            \draw[strong] (1,2) arc (135:-45:0.7071) -- (L) -- cycle;

            \draw[strong] (H) arc (175.298:274.702:0.4638);
            
            \foreach \corner in {A, B, C, D, E, F, G, H}
                \filldraw[black] (\corner) circle (2pt);
        \end{scope}

    \end{tikzpicture}
    \caption{All planar eight qubit $d=3$ codes with $|\aut{S}|\geq 1024$. There are two other $[[8,1,3]]$ classes listed in Table~\ref{tab:codes813inde1024} with $|\aut{S}|\geq 1024$ for which we did not find a planar embedding.\label{fig:codes8}}
\end{figure}

Table~\ref{tab:codes823} presents all 20 of the $[[8,2,3]]$ codes. All the code classes are nondegenerate except for the class \eclass{8}{2}{3}{5277} which  has $|\aut{S}|=4$. The $[[8,2,3]]$ code with the largest automorphism group, $|\aut{S}|=1728$, is the \eclass{8}{2}{3}{4947} class. The $[[8,2,3]]$ code with the next largest automorphism group has the same weight enumerator but a significantly smaller group order of 48.  Class \eclass{8}{2}{3}{3310} is the representative for the $[[8,2,3]]$ code in Grassl's code tables \cite{Grassl:codetables}. Only three of the codes in Table~\ref{tab:codes823} have parent subsystem codes.

\begin{table}
\renewcommand{\arraystretch}{1.55}
\centering
\begingroup
\begin{tabularx}{\textwidth}{l|l|L{24}|>{\hsize=4cm\RaggedRight} X}
\toprule
$\mathrm{Idx}$ & $|\mathrm{Aut}(S)|$ & $S$ & $w(x)$ \\ 
\specialrule{1.5pt}{1pt}{1pt}
4947 & 1728 &  $\langle Y_{0}Y_{1}Z_{4}Z_{7}, X_{2}X_{3}Z_{5}Z_{6}, Z_{2}Z_{3}Y_{5}Y_{6}, Z_{0}Z_{1}X_{4}X_{7},$ $Z_{2}Z_{3}Y_{5}Y_{6}, Z_{0}Z_{1}X_{4}X_{7}, Y_{0}Z_{1}Y_{2}Z_{3}Y_{4}Z_{6}, X_{0}Z_{1}X_{2}Z_{3}X_{5}Z_{7}\rangle$ & $1 + 6x^{4} + 48x^{6} + 9x^{8}$\\ 
4948 & 48 &  $\langle Y_{0}Z_{1}Y_{2}Z_{7}, Y_{3}Y_{4}Z_{5}Z_{6}, X_{1}X_{2}X_{4}X_{5}, Z_{0}Y_{1}Z_{4}Y_{6},$ $X_{1}X_{2}X_{4}X_{5}, Z_{0}Y_{1}Z_{4}Y_{6}, Z_{2}Z_{3}Y_{5}Y_{7}, X_{0}Z_{1}Z_{2}X_{3}Z_{4}Z_{5}\rangle$ & $1 + 6x^{4} + 48x^{6} + 9x^{8}$\\ 
4519 & 24 &  $\langle X_{1}X_{2}Z_{4}Z_{5}, Z_{1}Z_{2}X_{4}X_{5}, Y_{0}Y_{3}Z_{6}X_{7}, X_{0}X_{3}Z_{4}Z_{5}Z_{6},$ $Y_{0}Y_{3}Z_{6}X_{7}, X_{0}X_{3}Z_{4}Z_{5}Z_{6}, Y_{1}Z_{3}Y_{4}Z_{6}Z_{7}, Y_{0}Z_{2}Z_{3}X_{4}Y_{6}\rangle$ & $1 + 4x^{4} + 12x^{5} + 24x^{6} + 20x^{7} + 3x^{8}$\\ 
3745 & 16 &  $\langle X_{3}Z_{5}Z_{6}Z_{7}, Y_{0}Y_{1}Z_{2}X_{4}, X_{0}Z_{1}Z_{2}Z_{3}X_{5}, X_{2}Z_{3}Y_{4}Y_{5}Z_{6},$ $X_{0}Z_{1}Z_{2}Z_{3}X_{5}, X_{2}Z_{3}Y_{4}Y_{5}Z_{6}, X_{0}Z_{3}Z_{4}X_{6}Z_{7}, Z_{0}Z_{2}Z_{4}Y_{5}Y_{7}\rangle$ & $1 + 2x^{4} + 16x^{5} + 24x^{6} + 16x^{7} + 5x^{8}$\\ 
3744 & 12 &  $\langle Y_{0}Y_{1}Z_{6}Z_{7}, Z_{2}X_{3}X_{4}Z_{5}, X_{0}Z_{1}Z_{2}Z_{3}X_{5}, Y_{2}Y_{3}Z_{4}X_{5}Z_{6},$ $X_{0}Z_{1}Z_{2}Z_{3}X_{5}, Y_{2}Y_{3}Z_{4}X_{5}Z_{6}, X_{0}Z_{3}Z_{4}X_{6}Z_{7}, Z_{0}Z_{2}Z_{4}Y_{5}Y_{7}\rangle$ & $1 + 2x^{4} + 16x^{5} + 24x^{6} + 16x^{7} + 5x^{8}$\\ 
3310 & 6 &  $\langle X_{0}Z_{1}Z_{2}, X_{5}Z_{6}Z_{7}, X_{1}X_{2}X_{3}Z_{4}Z_{7}, Z_{0}X_{2}Y_{3}Y_{4}Z_{6},$ $X_{1}X_{2}X_{3}Z_{4}Z_{7}, Z_{0}X_{2}Y_{3}Y_{4}Z_{6}, Y_{1}X_{2}X_{4}Z_{5}Y_{6}, Z_{0}Y_{2}Z_{4}Z_{5}Y_{7}\rangle$ & $1 + 2x^{3} + 12x^{5} + 28x^{6} + 18x^{7} + 3x^{8}$\\ 
4525 & 6 &  $\langle Y_{0}Z_{1}Y_{2}Z_{7}, Z_{0}X_{1}X_{3}Z_{4}, Y_{3}Y_{4}Z_{6}Z_{7}, Z_{0}X_{2}X_{5}Z_{6},$ $Y_{3}Y_{4}Z_{6}Z_{7}, Z_{0}X_{2}X_{5}Z_{6}, Z_{1}Z_{3}Z_{5}X_{6}, Z_{2}Z_{4}Z_{5}X_{7}\rangle$ & $1 + 6x^{4} + 48x^{6} + 9x^{8}$\\ 
5277 & 4 &  $\langle X_{0}Z_{4}, X_{2}X_{3}Z_{5}Z_{6}, Z_{1}Z_{2}Z_{3}X_{7}, X_{1}Y_{2}Z_{4}Y_{5}Z_{6},$ $Z_{1}Z_{2}Z_{3}X_{7}, X_{1}Y_{2}Z_{4}Y_{5}Z_{6}, Y_{2}Z_{3}X_{5}Y_{6}Z_{7}, Z_{0}X_{4}Y_{5}Y_{6}X_{7}\rangle$ & $1 + x^{2} + 2x^{4} + 12x^{5} + 25x^{6} + 20x^{7} + 3x^{8}$\\ 
4149 & 3 &  $\langle X_{0}X_{1}Z_{4}Z_{7}, X_{0}X_{2}X_{3}Z_{4}, Z_{1}Z_{2}X_{4}X_{5}, Y_{1}Y_{3}X_{4}Z_{5}Z_{6},$ $Z_{1}Z_{2}X_{4}X_{5}, Y_{1}Y_{3}X_{4}Z_{5}Z_{6}, Z_{0}Z_{1}Y_{2}Z_{5}Y_{6}, Z_{0}Z_{3}Z_{4}Z_{5}X_{7}\rangle$ & $1 + 4x^{4} + 12x^{5} + 24x^{6} + 20x^{7} + 3x^{8}$\\ 
3710 & 2 &  $\langle Z_{1}Z_{2}X_{4}Z_{7}, Z_{0}Y_{2}Y_{5}X_{7}, Y_{0}Y_{2}X_{3}Z_{4}Z_{5}, Z_{0}Y_{1}X_{3}Y_{4}Z_{6},$ $Y_{0}Y_{2}X_{3}Z_{4}Z_{5}, Z_{0}Y_{1}X_{3}Y_{4}Z_{6}, X_{0}Z_{2}Y_{3}Y_{5}Z_{6}, Y_{1}Z_{4}Z_{5}Y_{6}Z_{7}\rangle$ & $1 + 2x^{4} + 16x^{5} + 24x^{6} + 16x^{7} + 5x^{8}$\\ 
3831 & 2 &  $\langle Z_{1}Z_{2}X_{4}Z_{7}, Z_{0}Y_{2}Y_{5}X_{7}, Y_{0}Z_{2}Y_{3}Z_{5}Z_{7}, Z_{0}X_{1}Y_{2}Y_{3}Z_{6},$ $Y_{0}Z_{2}Y_{3}Z_{5}Z_{7}, Z_{0}X_{1}Y_{2}Y_{3}Z_{6}, X_{1}Z_{2}X_{3}Z_{4}X_{5}, X_{0}Z_{1}Z_{3}X_{6}Z_{7}\rangle$ & $1 + 2x^{4} + 16x^{5} + 24x^{6} + 16x^{7} + 5x^{8}$\\ 
4091 & 2 &  $\langle Y_{0}Y_{1}X_{2}Z_{4}, X_{3}Z_{5}Z_{6}Z_{7}, X_{2}Z_{3}X_{5}Z_{6}, Z_{0}X_{1}Z_{2}X_{3}X_{4},$ $X_{2}Z_{3}X_{5}Z_{6}, Z_{0}X_{1}Z_{2}X_{3}X_{4}, X_{0}Z_{3}X_{4}Z_{5}X_{6}, X_{1}Z_{3}Y_{4}Z_{6}Y_{7}\rangle$ & $1 + 4x^{4} + 12x^{5} + 24x^{6} + 20x^{7} + 3x^{8}$\\ 
4154 & 2 &  $\langle Z_{2}X_{4}Z_{6}Z_{7}, Z_{1}Z_{3}Z_{4}X_{6}, Z_{0}Y_{2}X_{5}Y_{7}, Y_{0}X_{3}Y_{6}X_{7},$ $Z_{0}Y_{2}X_{5}Y_{7}, Y_{0}X_{3}Y_{6}X_{7}, X_{0}Z_{1}X_{3}Z_{5}Z_{6}, Y_{0}Y_{1}Y_{2}Y_{4}Z_{5}\rangle$ & $1 + 4x^{4} + 12x^{5} + 24x^{6} + 20x^{7} + 3x^{8}$\\ 
3354 & 1 &  $\langle Z_{1}Z_{2}X_{4}Z_{7}, Z_{0}X_{5}Z_{6}Z_{7}, X_{0}X_{2}Z_{4}Z_{5}Z_{7}, Z_{0}Y_{1}X_{3}Y_{4}Z_{6},$ $X_{0}X_{2}Z_{4}Z_{5}Z_{7}, Z_{0}Y_{1}X_{3}Y_{4}Z_{6}, Y_{1}X_{2}Z_{3}Z_{5}Y_{6}, X_{0}Y_{2}Z_{3}Z_{6}Y_{7}\rangle$ & $1 + 2x^{4} + 16x^{5} + 24x^{6} + 16x^{7} + 5x^{8}$\\ 
3829 & 1 &  $\langle Y_{0}Y_{1}Z_{6}Z_{7}, Z_{2}Z_{3}X_{5}Z_{7}, Z_{1}Z_{3}Z_{4}X_{6}, X_{0}X_{2}Y_{3}Y_{6},$ $Z_{1}Z_{3}Z_{4}X_{6}, X_{0}X_{2}Y_{3}Y_{6}, Y_{2}Y_{4}Z_{5}Z_{6}Z_{7}, Z_{0}Z_{3}Z_{4}Z_{5}X_{7}\rangle$ & $1 + 4x^{4} + 12x^{5} + 24x^{6} + 20x^{7} + 3x^{8}$\\ 
3952 & 1 &  $\langle Z_{0}Z_{2}X_{3}Z_{7}, Y_{2}Z_{5}Z_{6}Y_{7}, X_{1}X_{2}Z_{3}Z_{6}Z_{7}, X_{0}Z_{1}Y_{2}Y_{4}Z_{5},$ $X_{1}X_{2}Z_{3}Z_{6}Z_{7}, X_{0}Z_{1}Y_{2}Y_{4}Z_{5}, Y_{0}X_{2}Z_{4}Y_{5}Z_{6}, Y_{1}Z_{4}Z_{5}Y_{6}Z_{7}\rangle$ & $1 + 2x^{4} + 16x^{5} + 24x^{6} + 16x^{7} + 5x^{8}$\\ 
3979 & 1 &  $\langle X_{3}Z_{5}Z_{6}Z_{7}, Z_{2}Z_{3}Y_{4}Y_{5}, Z_{1}Y_{3}X_{4}Y_{6}, Z_{0}X_{4}X_{5}X_{7},$ $Z_{1}Y_{3}X_{4}Y_{6}, Z_{0}X_{4}X_{5}X_{7}, Z_{0}X_{1}X_{2}Z_{4}Z_{7}, X_{0}Z_{1}Y_{2}X_{3}Y_{4}\rangle$ & $1 + 4x^{4} + 12x^{5} + 24x^{6} + 20x^{7} + 3x^{8}$\\ 
4337 & 1 &  $\langle Z_{0}Z_{2}X_{3}Z_{7}, Z_{1}Z_{5}X_{6}Z_{7}, X_{3}X_{4}Y_{5}Y_{6}, X_{0}Y_{1}X_{4}Y_{7},$ $X_{3}X_{4}Y_{5}Y_{6}, X_{0}Y_{1}X_{4}Y_{7}, X_{1}X_{2}Z_{3}Z_{6}Z_{7}, X_{0}Z_{1}Y_{2}Y_{4}Z_{5}\rangle$ & $1 + 4x^{4} + 12x^{5} + 24x^{6} + 20x^{7} + 3x^{8}$\\ 
4934 & 1 &  $\langle X_{0}Z_{6}Z_{7}, X_{1}X_{2}X_{3}Z_{7}, Y_{2}Y_{3}X_{4}Z_{5}Z_{6}, X_{0}Y_{1}Y_{2}Z_{4}X_{5},$ $Y_{2}Y_{3}X_{4}Z_{5}Z_{6}, X_{0}Y_{1}Y_{2}Z_{4}X_{5}, Y_{0}Z_{1}Y_{3}Z_{4}X_{6}, Z_{0}Z_{3}Z_{4}Z_{5}X_{7}\rangle$ & $1 + x^{3} + x^{4} + 14x^{5} + 26x^{6} + 17x^{7} + 4x^{8}$\\ 
5834 & 1 &  $\langle X_{2}Z_{5}Z_{6}, X_{0}Y_{1}Z_{3}Y_{4}, Z_{0}X_{2}Z_{4}X_{7}, Y_{1}Y_{2}Y_{6}Y_{7},$ $Z_{0}X_{2}Z_{4}X_{7}, Y_{1}Y_{2}Y_{6}Y_{7}, X_{0}X_{1}X_{3}Z_{5}Z_{7}, X_{0}Y_{2}Y_{3}Z_{4}X_{5}\rangle$ & $1 + x^{3} + 3x^{4} + 10x^{5} + 26x^{6} + 21x^{7} + 2x^{8}$\\ 
\bottomrule
\end{tabularx}
\endgroup
\caption{All indecomposable $[[8,2,3]]$ code classes sorted by automorphism group order. \label{tab:codes823}}
\end{table}

There is only one $[[8,3,3]]$ code \cite{CRSS98}. It has $w(x)=1+28x^6+3x^8$, $|\aut{S}|=168$, and has no parent subsystem code.

\subsection{Error-correcting codes on 9 qubits}

There are 3411 indecomposable $[[9,1,3]]$ codes. Of these, 1448 are nondegenerate. The 19 indecomposable CSS codes are listed in Table~\ref{tab:codes913}. Four of these CSS codes have planar Tanner graphs for their low weight generating sets as shown in Figure~\ref{fig:913planar}. All four arise from gauge fixing the Bacon-Shor code \cite{bacon06,ac07} and are the only CSS gauge fixings of that code. 

\begin{table}
\renewcommand{\arraystretch}{1.55}
\centering
\begingroup
\begin{tabularx}{\textwidth}{l|l|L{24}|>{\hsize=4cm\RaggedRight} X}
\toprule
$\mathrm{Idx}$ & $|\mathrm{Aut}(S)|$ & $S$ & $w(x)$ \\ 
\specialrule{1.5pt}{1pt}{1pt}
8802 & 82944 & $\langle Z_{0}Z_{6}$, $Z_{1}Z_{6}$, $Z_{2}Z_{7}$, $Z_{3}Z_{8}$, $Z_{4}Z_{8}$, $Z_{5}Z_{7}$, $X_{0}X_{1}X_{2}X_{5}X_{6}X_{7}$, $X_{0}X_{1}X_{3}X_{4}X_{6}X_{8}\rangle$ & $1 + 9x^{2} + 27x^{4} + 75x^{6} + 144x^{8}$ \\ 
4280 & 9216 & $\langle Z_{0}Z_{7}$, $Z_{1}Z_{8}$, $X_{2}X_{6}$, $X_{3}X_{6}$, $Z_{4}Z_{5}$, $X_{0}X_{4}X_{5}X_{7}$, $X_{1}X_{4}X_{5}X_{8}$, $Z_{2}Z_{3}Z_{4}Z_{6}Z_{7}Z_{8}\rangle$ & $1 + 6x^{2} + 24x^{4} + 90x^{6} + 135x^{8}$ \\ 
4079 & 3072 & $\langle Z_{0}Z_{8}$, $X_{1}X_{7}$, $X_{2}X_{7}$, $X_{3}X_{5}$, $X_{4}X_{6}$, $Z_{3}Z_{4}Z_{5}Z_{6}$, $X_{0}X_{5}X_{6}X_{8}$, $Z_{1}Z_{2}Z_{3}Z_{5}Z_{7}Z_{8}\rangle$ & $1 + 6x^{2} + 24x^{4} + 90x^{6} + 135x^{8}$ \\ 
4395 & 1152 & $\langle Z_{0}Z_{8}$, $Z_{1}Z_{8}$, $Z_{2}Z_{3}Z_{6}Z_{7}$, $Z_{2}Z_{4}Z_{7}Z_{8}$, $Z_{5}Z_{6}Z_{7}Z_{8}$, $X_{2}X_{4}X_{5}X_{6}$, $X_{3}X_{4}X_{5}X_{7}$, $X_{0}X_{1}X_{2}X_{3}X_{5}X_{8}\rangle$ & $1 + 3x^{2} + 21x^{4} + 105x^{6} + 126x^{8}$ \\ 
8519 & 1024 & $\langle Z_{0}Z_{8}$, $X_{1}X_{7}$, $Z_{2}Z_{6}$, $X_{3}X_{4}$, $Z_{3}Z_{4}Z_{5}Z_{6}$, $X_{2}X_{5}X_{6}X_{7}$, $Z_{1}Z_{5}Z_{7}Z_{8}$, $X_{0}X_{3}X_{5}X_{8}\rangle$ & $1 + 4x^{2} + 22x^{4} + 100x^{6} + 129x^{8}$ \\ 
4335 & 576 & $\langle Z_{0}Z_{8}$, $Z_{1}Z_{8}$, $Z_{2}Z_{3}Z_{4}$, $Z_{3}Z_{5}Z_{6}$, $Z_{4}Z_{5}Z_{7}$, $X_{3}X_{4}X_{6}X_{7}$, $X_{2}X_{3}X_{5}X_{7}$, $X_{0}X_{1}X_{2}X_{3}X_{6}X_{8}\rangle$ & $1 + 3x^{2} + 4x^{3} + 9x^{4} + 24x^{5} + 49x^{6} + 84x^{7} + 66x^{8} + 16x^{9}$ \\ 
7419 & 384 & $\langle Z_{0}Z_{8}$, $Z_{1}Z_{6}$, $Z_{2}Z_{7}$, $Z_{3}Z_{4}Z_{6}Z_{8}$, $Z_{3}Z_{5}Z_{7}Z_{8}$, $X_{1}X_{3}X_{5}X_{6}$, $X_{2}X_{3}X_{4}X_{7}$, $X_{0}X_{4}X_{5}X_{8}\rangle$ & $1 + 3x^{2} + 21x^{4} + 105x^{6} + 126x^{8}$ \\ 
8816 & 384 & $\langle Z_{0}Z_{8}$, $Z_{1}Z_{6}$, $Z_{2}Z_{7}$, $X_{3}X_{4}X_{5}$, $Z_{3}Z_{4}Z_{6}Z_{8}$, $Z_{3}Z_{5}Z_{7}Z_{8}$, $Y_{1}Y_{2}X_{3}Y_{6}Y_{7}$, $Y_{0}Y_{1}X_{5}Y_{6}Y_{8}\rangle$ & $1 + 3x^{2} + x^{3} + 15x^{4} + 27x^{5} + 37x^{6} + 87x^{7} + 72x^{8} + 13x^{9}$ \\ 
9709 & 384 & $\langle Y_{0}Y_{1}Y_{4}Y_{5}$, $Y_{0}Y_{2}Y_{4}Y_{6}$, $Y_{0}Y_{3}Y_{4}Y_{7}$, $X_{0}X_{1}X_{4}X_{5}$, $X_{0}X_{2}X_{4}X_{6}$, $X_{0}X_{3}X_{4}X_{7}$, $X_{0}X_{1}X_{2}X_{3}X_{8}$, $Z_{1}Z_{2}Z_{3}Z_{4}Z_{8}\rangle$ & $1 + 18x^{4} + 16x^{5} + 56x^{6} + 96x^{7} + 53x^{8} + 16x^{9}$ \\ 
5477 & 256 & $\langle Z_{0}Z_{7}$, $Z_{1}Z_{8}$, $Z_{2}Z_{3}Z_{5}Z_{8}$, $Z_{2}Z_{4}Z_{5}Z_{7}$, $X_{3}X_{4}X_{5}X_{6}$, $Z_{5}Z_{6}Z_{7}Z_{8}$, $X_{0}X_{2}X_{3}X_{6}X_{7}$, $X_{1}X_{2}X_{4}X_{6}X_{8}\rangle$ & $1 + 2x^{2} + 20x^{4} + 16x^{5} + 46x^{6} + 96x^{7} + 59x^{8} + 16x^{9}$ \\ 
11001 & 192 & $\langle Z_{0}Z_{1}$, $X_{0}X_{1}X_{4}$, $Z_{2}Z_{3}Z_{5}Z_{7}$, $X_{2}X_{4}X_{5}X_{6}$, $Z_{5}Z_{6}Z_{7}Z_{8}$, $X_{3}X_{4}X_{6}X_{7}$, $X_{2}X_{3}X_{6}X_{8}$, $Z_{1}Z_{2}Z_{4}Z_{7}Z_{8}\rangle$ & $1 + x^{2} + 2x^{3} + 13x^{4} + 24x^{5} + 47x^{6} + 90x^{7} + 66x^{8} + 12x^{9}$ \\ 
12079 & 192 & $\langle Z_{0}Z_{8}$, $Z_{1}Z_{2}Z_{5}Z_{6}$, $Z_{1}Z_{3}Z_{4}Z_{6}$, $X_{1}X_{2}X_{4}X_{7}$, $X_{1}X_{3}X_{5}X_{7}$, $X_{2}X_{3}X_{6}X_{7}$, $X_{0}X_{3}X_{4}X_{8}$, $Z_{4}Z_{5}Z_{6}Z_{7}Z_{8}\rangle$ & $1 + x^{2} + 19x^{4} + 16x^{5} + 51x^{6} + 96x^{7} + 56x^{8} + 16x^{9}$ \\ 
9897 & 144 & $\langle Z_{0}Z_{1}Z_{4}Z_{8}$, $Z_{0}Z_{2}Z_{5}Z_{7}$, $Z_{1}Z_{3}Z_{5}Z_{7}$, $X_{1}X_{2}X_{4}X_{5}$, $Z_{0}Z_{4}Z_{5}Z_{6}$, $X_{1}X_{3}X_{4}X_{6}$, $X_{0}X_{3}X_{4}X_{7}$, $X_{0}X_{3}X_{5}X_{8}\rangle$ & $1 + 18x^{4} + 120x^{6} + 117x^{8}$ \\ 
5781 & 128 & $\langle Z_{0}Z_{8}$, $X_{1}X_{5}$, $Y_{2}Y_{3}Y_{6}Y_{7}$, $Z_{2}Z_{3}Z_{4}Z_{8}$, $X_{3}X_{4}X_{5}X_{6}$, $X_{2}X_{4}X_{5}X_{7}$, $Y_{1}Y_{2}Y_{4}Z_{5}Y_{7}$, $Y_{0}Y_{4}Y_{6}Y_{7}X_{8}\rangle$ & $1 + 2x^{2} + 12x^{4} + 32x^{5} + 46x^{6} + 80x^{7} + 67x^{8} + 16x^{9}$ \\ 
5784 & 128 & $\langle Z_{0}Z_{8}$, $X_{1}X_{5}$, $Z_{2}Z_{3}Z_{6}Z_{8}$, $Z_{2}Z_{4}Z_{7}Z_{8}$, $Z_{1}Z_{5}Z_{6}Z_{7}$, $X_{2}X_{4}X_{5}X_{6}$, $X_{2}X_{3}X_{5}X_{7}$, $X_{0}X_{3}X_{4}X_{8}\rangle$ & $1 + 2x^{2} + 20x^{4} + 110x^{6} + 123x^{8}$ \\ 
12003 & 96 & $\langle X_{0}X_{5}$, $Z_{1}Z_{4}Z_{6}$, $Z_{2}Z_{4}Z_{7}$, $Z_{3}Z_{4}Z_{8}$, $Y_{1}Y_{2}Y_{6}Y_{7}$, $Y_{1}Y_{3}Y_{6}Y_{8}$, $Z_{0}Z_{5}Z_{6}Z_{7}Z_{8}$, $X_{2}X_{3}X_{4}X_{5}X_{6}\rangle$ & $1 + x^{2} + 3x^{3} + 9x^{4} + 25x^{5} + 55x^{6} + 85x^{7} + 62x^{8} + 15x^{9}$ \\ 
6038 & 64 & $\langle Z_{0}Z_{8}$, $X_{1}X_{2}$, $X_{3}X_{5}X_{7}$, $X_{4}X_{6}X_{7}$, $Z_{1}Z_{2}Z_{3}Z_{5}$, $Z_{1}Z_{2}Z_{4}Z_{6}$, $Z_{3}Z_{6}Z_{7}Z_{8}$, $X_{0}X_{2}X_{3}X_{4}X_{8}\rangle$ & $1 + 2x^{2} + 2x^{3} + 12x^{4} + 26x^{5} + 46x^{6} + 86x^{7} + 67x^{8} + 14x^{9}$ \\ 
8124 & 12 & $\langle X_{2}X_{4}X_{5}$, $X_{1}X_{3}X_{6}$, $Z_{1}Z_{2}Z_{3}Z_{4}$, $Z_{2}Z_{3}Z_{5}Z_{6}$, $X_{0}X_{2}X_{3}X_{7}$, $Z_{0}Z_{4}Z_{5}Z_{7}$, $X_{0}X_{1}X_{4}X_{8}$, $Z_{0}Z_{2}Z_{5}Z_{8}\rangle$ & $1 + 2x^{3} + 12x^{4} + 24x^{5} + 52x^{6} + 90x^{7} + 63x^{8} + 12x^{9}$ \\ 
7810 & 8 & $\langle Z_{1}Z_{2}Z_{4}$, $X_{1}X_{2}X_{7}$, $Y_{0}Y_{3}Y_{6}Y_{8}$, $X_{0}X_{3}X_{5}X_{7}$, $X_{2}X_{3}X_{4}X_{6}$, $Z_{0}Z_{4}Z_{5}Z_{6}$, $Z_{1}Z_{3}Z_{6}Z_{7}$, $X_{0}X_{2}X_{4}X_{8}\rangle$ & $1 + 2x^{3} + 12x^{4} + 24x^{5} + 52x^{6} + 90x^{7} + 63x^{8} + 12x^{9}$ \\ 
\bottomrule
\end{tabularx}
\endgroup
\caption{All 19 of the indecomposable $[[9,1,3]]$ CSS codes, sorted by automorphism group order. The code with the largest automorphism group is Shor's code \cite{shor95}. The rotated surface code \cite{bombin07rsc} has $|\aut{S}|=1024$. \label{tab:codes913}}
\end{table}

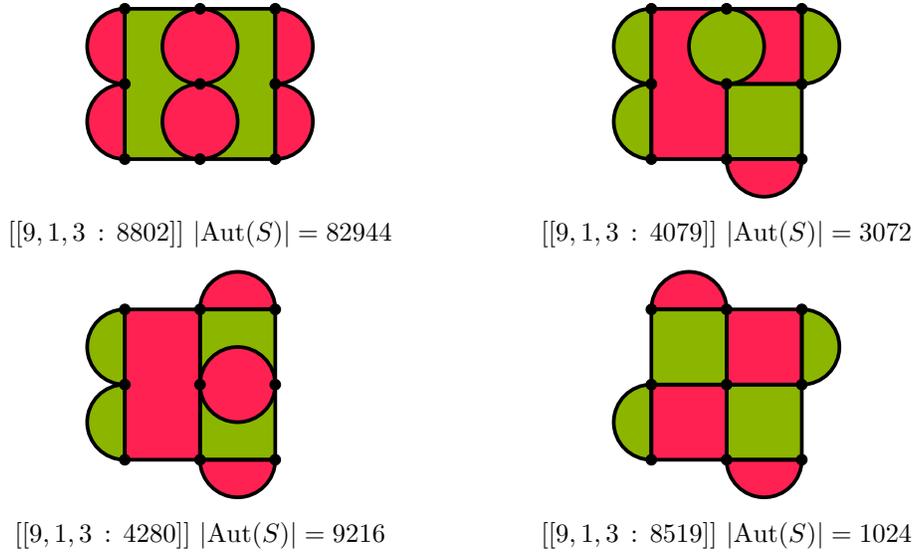
\begin{figure}[h!]
    \centering
    \begin{tikzpicture}

    \node at (1,3) {$\eclass{9}{1}{3}{8802}$ $|\aut{S}|=82944$};
    \begin{scope}[shift={(0,4)}]
        \coordinate (A) at (0,0);
        \coordinate (B) at (0,1);
        \coordinate (C) at (0,2);
        \coordinate (D) at (1,2);
        \coordinate (E) at (2,2);
        \coordinate (F) at (2,1);
        \coordinate (G) at (2,0);
        \coordinate (H) at (1,0);
        \coordinate (I) at (1,1);
        \coordinate (J) at ($(B)!0.5!(C)$);
        \coordinate (K) at ($(A)!0.5!(B)$);
        \coordinate (L) at ($(E)!0.5!(F)$);
        \coordinate (M) at ($(F)!0.5!(G)$);

        \fill[cXstab] (A) -- (C) -- (E) -- (G) -- cycle;
        \fill[cZstab] (B) arc (90:270:0.5) -- cycle;
        \fill[cZstab] (C) arc (90:270:0.5) -- cycle;
        \fill[cZstab] (E) arc(90:-90:0.5) -- cycle;
        \fill[cZstab] (F) arc(90:-90:0.5) -- cycle;
        \fill[cZstab] (D) arc(90:450:0.5) -- cycle;
        \fill[cZstab] (I) arc(90:450:0.5) -- cycle;

        \draw[strong] (A) -- (C) -- (E) -- (G) -- cycle;
        \draw[strong] (B) arc (90:270:0.5) -- cycle;
        \draw[strong] (C) arc (90:270:0.5) -- cycle;
        \draw[strong] (E) arc(90:-90:0.5) -- cycle;
        \draw[strong] (F) arc(90:-90:0.5) -- cycle;
        \draw[strong] (D) arc(90:450:0.5) -- cycle;
        \draw[strong] (I) arc(90:450:0.5) -- cycle;
        
        \foreach \corner in {A, B, C, D, E, F, G, H, I}
                \filldraw[black] (\corner) circle (2pt);
        
    \end{scope}

    \node at (1,-1) {$\eclass{9}{1}{3}{4280}$ $|\aut{S}|=9216$};
    \begin{scope}
        \coordinate (A) at (0,0);
        \coordinate (B) at (0,1);
        \coordinate (C) at (0,2);
        \coordinate (D) at (1,2);
        \coordinate (E) at (2,2);
        \coordinate (F) at (2,1);
        \coordinate (G) at (2,0);
        \coordinate (H) at (1,0);
        \coordinate (I) at (1,1);

        \fill[cXstab] (D) -- (E) -- (G) -- (H) -- cycle;
        \fill[cZstab] (C) -- (D) -- (H) -- (A) -- cycle;
        \fill[cXstab] (C) arc(90:270:0.5) -- cycle;
        \fill[cXstab] (B) arc(90:270:0.5) -- cycle;
        \fill[cZstab] (D) arc(180:0:0.5) -- cycle;
        \fill[cZstab] (I) arc(180:540:0.5) -- cycle;
        \fill[cZstab] (H) arc(180:360:0.5) --cycle;

        \draw[strong] (D) -- (E) -- (G) -- (H) -- cycle;
        \draw[strong] (C) -- (D) -- (H) -- (A) -- cycle;
        \draw[strong] (C) arc(90:270:0.5) -- cycle;
        \draw[strong] (B) arc(90:270:0.5) -- cycle;
        \draw[strong] (D) arc(180:0:0.5) -- cycle;
        \draw[strong] (I) arc(180:540:0.5) -- cycle;
        \draw[strong] (H) arc(180:360:0.5) --cycle;

        \foreach \corner in {A, B, C, D, E, F, G, H, I}
                \filldraw[black] (\corner) circle (2pt);  
        
    \end{scope}

    \node at (8,3) {$\eclass{9}{1}{3}{4079}$ $|\aut{S}|=3072$};
    \begin{scope}[shift={(7,4)}]
        \coordinate (A) at (0,0);
        \coordinate (B) at (0,1);
        \coordinate (C) at (0,2);
        \coordinate (D) at (1,2);
        \coordinate (E) at (2,2);
        \coordinate (F) at (2,1);
        \coordinate (G) at (2,0);
        \coordinate (H) at (1,0);
        \coordinate (I) at (1,1);

        \fill[cZstab] (A) -- (C) -- (D) -- (H) -- cycle;
        \fill[cZstab] (D) -- (E) -- (F) -- (I) -- cycle;
        \fill[cXstab] (I) -- (F) -- (G) -- (H) -- cycle;
        \fill[cXstab] (D) arc(90:450:0.5) -- cycle;
        \fill[cXstab] (C) arc(90:270:0.5) -- cycle;
        \fill[cXstab] (B) arc(90:270:0.5) -- cycle;
        \fill[cXstab] (E) arc(90:-90:0.5) -- cycle;
        \fill[cZstab] (H) arc(180:360:0.5) --cycle;

        \draw[strong] (I) -- (H) -- (A) -- (C) -- (D);
        \draw[strong] (D) -- (E) -- (F) -- (I);
        \draw[strong] (I) -- (F) -- (G) -- (H) -- cycle;

        \draw[strong] (D) arc(90:450:0.5) -- cycle;
        \draw[strong] (C) arc(90:270:0.5) -- cycle;
        \draw[strong] (B) arc(90:270:0.5) -- cycle;
        \draw[strong] (E) arc(90:-90:0.5) -- cycle;
        \draw[strong] (H) arc(180:360:0.5) --cycle;

        \foreach \corner in {A, B, C, D, E, F, G, H, I}
                \filldraw[black] (\corner) circle (2pt);  
        
    \end{scope}

    \node at (8,-1) {$\eclass{9}{1}{3}{8519}$ $|\aut{S}|=1024$};
    \begin{scope}[shift={(7,0)}]
        \coordinate (A) at (0,0);
        \coordinate (B) at (0,1);
        \coordinate (C) at (0,2);
        \coordinate (D) at (1,2);
        \coordinate (E) at (2,2);
        \coordinate (F) at (2,1);
        \coordinate (G) at (2,0);
        \coordinate (H) at (1,0);
        \coordinate (I) at (1,1);

        \fill[cXstab] (C) -- (D) -- (I) -- (B) -- cycle;
        \fill[cXstab] (I) -- (F) -- (G) -- (H) -- cycle;
        \fill[cZstab] (D) -- (E) -- (F) -- (I) -- cycle;
        \fill[cZstab] (B) -- (I) -- (H) -- (A) -- cycle;
        \fill[cXstab] (B) arc(90:270:0.5) -- cycle;
        \fill[cXstab] (E) arc(90:-90:0.5) -- cycle;
        \fill[cZstab] (D) arc(0:180:0.5) -- cycle;
        \fill[cZstab] (H) arc(180:360:0.5) -- cycle;

        \draw[strong] (C) -- (D) -- (I) -- (B) -- cycle;
        \draw[strong] (I) -- (F) -- (G) -- (H) -- cycle;
        \draw[strong] (D) -- (E) -- (F) -- (I) -- cycle;
        \draw[strong] (B) -- (I) -- (H) -- (A) -- cycle;
        \draw[strong] (B) arc(90:270:0.5) -- cycle;
        \draw[strong] (E) arc(90:-90:0.5) -- cycle;
        \draw[strong] (D) arc(0:180:0.5) -- cycle;
        \draw[strong] (H) arc(180:360:0.5) -- cycle;

        \foreach \corner in {A, B, C, D, E, F, G, H, I}
                \filldraw[black] (\corner) circle (2pt);  
        
    \end{scope}

    \end{tikzpicture}
    \caption{All planar $[[9,1,3]]$ codes classes.\label{fig:913planar}}
\end{figure}

There are 4425 indecomposbale $[[9,2,3]]$ code classes of which we found only three with planar Tanner graphs~\ref{codes923planar}. Of these 4425 indecomposable code classes only four have automorphism groups sizes at least 512 which are listed in Table~\ref{tab:codes923aut}. There are no indecomposable $[[9,2,3]]$ CSS codes and there are no $\mathrm{GF}(4)$-linear indecomposble $[[9,2,3]]$ code classes. The [[9,2,3]] code constructed in~\cite{Grassl_2017} is represented by the class $\eclass{9}{2}{3}{8842}$. 

\begin{table}
\renewcommand{\arraystretch}{1.55}
\centering
\begingroup
\begin{tabularx}{\textwidth}{l|l|L{24}|>{\hsize=4cm\RaggedRight} X}
\toprule
$\mathrm{Idx}$ & $|\mathrm{Aut}(S)|$ & $S$ & $w(x)$ \\ 
\specialrule{1.5pt}{1pt}{1pt}
15551 & 1152 & $\langle X_{0}Z_{8}$, $X_{1}Z_{7}$, $X_{2}Z_{7}$, $X_{3}X_{4}Z_{5}Z_{6}$, $Z_{3}Z_{4}Y_{5}Y_{6}$, $Z_{0}Z_{4}X_{5}Z_{6}X_{8}$, $Z_{1}Z_{2}Y_{3}Y_{5}Z_{6}X_{7}Z_{8}\rangle$ & $1 + 4x^{2} + 6x^{4} + 8x^{5} + 12x^{6} + 56x^{7} + 41x^{8}$ \\ 
80585 & 1152 & $\langle X_{0}X_{1}Z_{4}Z_{5}$, $X_{2}X_{3}Z_{6}Z_{7}$, $Y_{0}Y_{4}Z_{5}Z_{8}$, $X_{0}Z_{1}X_{5}Z_{8}$, $Y_{2}Y_{6}Z_{7}Z_{8}$, $X_{2}Z_{3}X_{7}Z_{8}$, $Y_{0}Z_{1}Y_{2}Z_{3}Z_{4}Z_{6}X_{8}\rangle$ & $1 + 14x^{4} + 16x^{6} + 64x^{7} + 33x^{8}$ \\ 
22646 & 768 & $\langle X_{0}Z_{7}$, $X_{1}Z_{8}$, $X_{3}X_{4}Z_{5}Z_{6}$, $Z_{2}Y_{3}Y_{5}Z_{6}$, $Z_{2}X_{3}Z_{4}X_{6}$, $Z_{0}Z_{1}Y_{7}Y_{8}$, $Z_{0}Y_{2}Y_{3}Z_{4}Z_{5}X_{7}Z_{8}\rangle$ & $1 + 2x^{2} + 12x^{4} + 14x^{6} + 64x^{7} + 35x^{8}$ \\ 
53565 & 512 & $\langle X_{0}Z_{8}$, $X_{1}Z_{6}$, $X_{2}Z_{7}$, $X_{3}X_{4}$, $Z_{0}Y_{3}Z_{4}Z_{5}Y_{8}$, $Z_{2}Z_{3}Z_{4}X_{5}Z_{6}X_{7}$, $Z_{0}Z_{1}X_{5}X_{6}Z_{7}X_{8}\rangle$ & $1 + 4x^{2} + 6x^{4} + 4x^{5} + 20x^{6} + 56x^{7} + 33x^{8} + 4x^{9}$ \\ 
\bottomrule
\end{tabularx}
\endgroup
\caption{All indecomposable $[[9,2,3]]$ code classes with automorphism group order at least 512. \label{tab:codes923aut}}
\end{table}

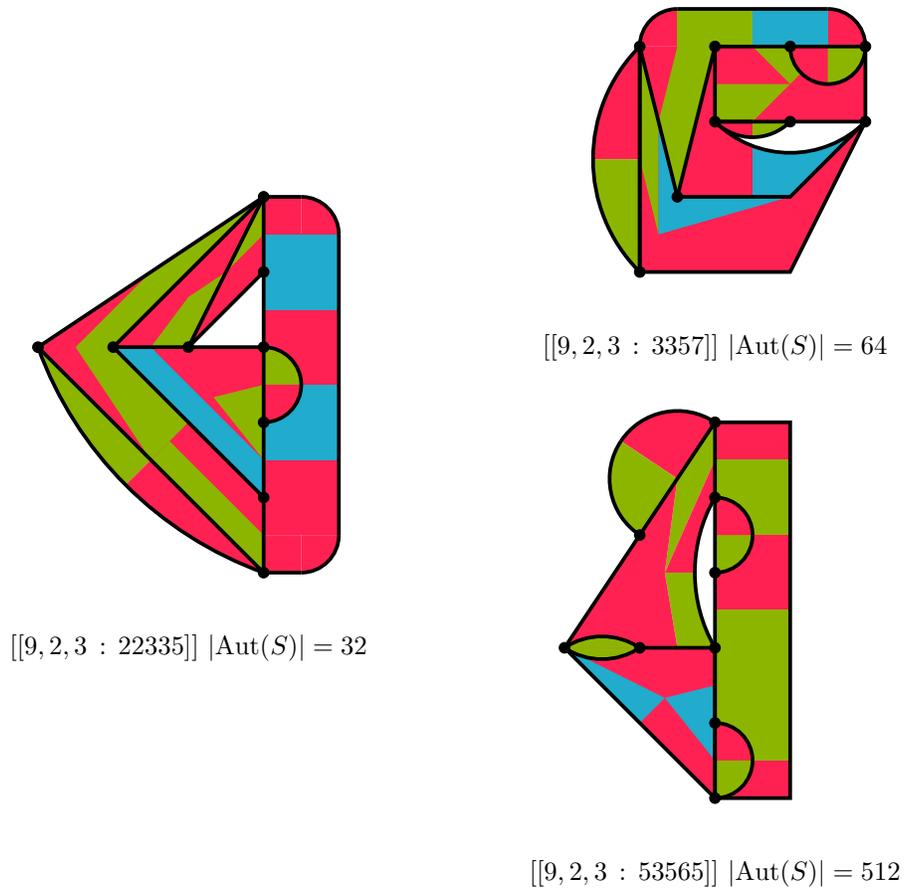
\begin{figure}
    \centering
    \begin{tikzpicture}
        \node at (8,-3) {$\eclass{9}{2}{3}{53565}$ $|\aut{S}|=512$};
        \begin{scope}[shift={(6,0)}]
            \coordinate (A) at (0,0); %
            \coordinate (C) at (2,3);
            \coordinate (B) at ($(A)!0.5!(C)$);
            \coordinate (D) at (2,2);
            \coordinate (E) at (2,1);
            \coordinate (F) at (2,0);
            \coordinate (G) at (2,-1);
            \coordinate (H) at (2,-2);
            \coordinate (I) at (1,0);
            \coordinate (J) at (3,3);
            \coordinate (K) at (3,-2);
            \coordinate (L) at ($(B)!0.5!(C)$);
            \coordinate (M) at ($(A)!0.5!(B)$);
            \coordinate (N) at ($(A)!0.5!(H)$);

            \coordinate (AB) at (3, 2.5);
            \coordinate (AC) at (3, 1.5);
            \coordinate (AD) at (3, 0.5);
            \coordinate (AE) at (3, -0.5);
            \coordinate (AF) at (3, -1.5);
            
            \coordinate (O) at ($(H)!0.5!(G)$);
            \coordinate (P) at ($(G)!0.5!(F)$);
            \coordinate (Q) at ($(F)!0.5!(E)$);
            \coordinate (R) at ($(E)!0.5!(D)$);
            \coordinate (S) at ($(D)!0.5!(C)$);
            
            \coordinate (T) at ($(G)!0.5!(H)$);
            \coordinate (U) at ($(F)!0.5!(G)$);
            \coordinate (V) at ($(I)!0.5!(F)$);
            \coordinate (W) at (barycentric cs:A=1,C=1,F=1);
            \coordinate (X) at (barycentric cs:A=1,F=1,H=1);
            \coordinate (Y) at ($(C)!0.5!(D)$);
            \coordinate (Z) at ($(D)!0.5!(E)$);
            \coordinate (AA) at ($(E)!0.5!(F)$);

            \fill[cZstab] (C) arc(56.31:146.31:0.9014) -- (L) -- cycle;
            \fill[cXstab] (B) arc(236.31:146.31:0.9014) -- (L) -- cycle;

            \fill[cZstab] (C) -- (J) -- (AB) -- (S) -- cycle;
            \fill[cXstab] (S) -- (AB) -- (AC) -- (R) -- cycle;

            \fill[cZstab] (AC) -- (R) -- (Q) -- (AD) -- cycle;
            
            \fill[cXstab] (Q) -- (AD) -- (AF) -- (O) -- cycle;

            \fill[cZstab] (AF) -- (O) -- (H) -- (K) -- cycle;

            \fill[cXstab] (C) -- (L) -- (W) -- (Y) -- cycle;
            
            \fill[cZstab] (D) arc(90:0:0.5) -- (R) -- cycle;
            \fill[cXstab] (E) arc(-90:0:0.5) -- (R) -- cycle;

            \fill[cZstab] (G) arc(90:0:0.5) -- (O) -- cycle;
            \fill[cXstab] (H) arc(-90:0:0.5) -- (O) -- cycle;

            \fill[cXstab] (W) -- (E) -- (F) -- (V) -- cycle;

            \fill[cZstab] (L) -- (A) -- (V) -- (W) -- cycle;
            \fill[cZstab] (Y) -- (W) -- (E) -- cycle;

            \fill[cZstab] (A) -- (F) -- (H) -- cycle;
            \fill[cYstab] (X) -- (A) -- (N) -- cycle;
            \fill[cYstab] (X) -- (U) -- (O) -- cycle;

            \fill[cXstab] (A) arc(236.31:303.69: 0.901) -- cycle;
            \fill[cXstab] (A) arc(123.69:56.31: 0.901) -- cycle;

            \fill[white] (D) arc(150:210:2) -- cycle;

            \draw[strong] (A) -- (C) -- (J) -- (K) -- (H) -- (C);
            \draw[strong] (I) -- (F) -- (H) -- (A);

            \draw[strong] (A) arc(236.31:303.69: 0.901);
            \draw[strong] (A) arc(123.69:56.31: 0.901);
            \draw[strong] (D) arc(150:210:2);
            \draw[strong] (C) arc(56.31:146.31:0.9014);
            \draw[strong] (B) arc(236.31:146.31:0.9014);
            \draw[strong] (D) arc(90:0:0.5);
            \draw[strong] (E) arc(-90:0:0.5);
            \draw[strong] (G) arc(90:0:0.5);
            \draw[strong] (H) arc(-90:0:0.5);

            \foreach \corner in {A, B, C, D, E, F, G, H, I}
                \filldraw[black] (\corner) circle (2pt);
                
        \end{scope}

        \node at (8,4) {$\eclass{9}{2}{3}{3357}$ $|\aut{S}|=64$};
        \begin{scope}[shift={(7,5)}]

        \coordinate (A) at (0,0);
        \coordinate (B) at (0,3);
        \coordinate (C) at (3,3);
        \coordinate (D) at (2,3);
        \coordinate (E) at (1,3);
        \coordinate (F) at (3,2);
        \coordinate (G) at (2,2);
        \coordinate (H) at (1,2);
        \coordinate (I) at (0.5,1);
        \coordinate (J) at ($(A)!0.5!(B)$);
        \coordinate (K) at ($(D)!0.5!(C)$);
        \coordinate (L) at ($(E)!0.5!(D)$);
        \coordinate (M) at ($(C)!0.5!(F)$);
        \coordinate (N) at ($(G)!0.5!(F)$);
        \coordinate (O) at ($(H)!0.5!(G)$);
        \coordinate (P) at ($(E)!0.5!(H)$);
        \coordinate (Q) at ($(I)!0.5!(B)$);
        \coordinate (R) at (2,0);
        \coordinate (S) at (2,1);
        \coordinate (T) at (0.25,0.5);
        \coordinate (U) at (0.5,3.5);
        \coordinate (V) at (2.5,3.5);
        \coordinate (W) at (0.5,3);
        \coordinate (X) at (1.5,3.5);
        \coordinate (Y) at (1.5,1);
        \coordinate (Z) at (2,2.5);

        \fill[cZstab] (A) -- (B) -- (C) -- (F) -- (R) -- cycle;

        \fill[cZstab] (B) arc(135:180:2.1213) -- (J) -- cycle;
        \fill[cXstab] (A) arc(225:180:2.1213) -- (J) -- cycle;     

        \fill[cXstab] (T) -- (J) -- (B) -- (Q) -- cycle;
        \fill[cYstab] (T) -- (Q) -- (I) -- (S) -- cycle;
        \fill[cYstab] (F) -- (S) -- (Y) -- (O) -- cycle;

        \fill[white] (H) arc(225:315:1.4142) -- cycle;
        \fill[cXstab] (G) arc(-45:-90:0.7071) -- (O) -- cycle;
        \fill[cZstab] (H) arc(225:270:0.7071) -- (O) -- cycle;

        \fill[cXstab] (Z) -- (O) -- (H) -- (P) -- cycle;
        \fill[cXstab] (Z) -- (L) -- (K) -- cycle;

        \fill[cZstab] (D) arc(180:270:0.5) -- (K) -- cycle;
        \fill[cXstab] (C) arc(0:-90:0.5) -- (K) -- cycle;

        \fill[cZstab] (B) arc(180:90:0.5) -- (W) -- cycle;
        \fill[cZstab] (C) arc(0:90:0.5) -- (K) -- cycle;
        \fill[cYstab] (V) -- (K) -- (L) -- (X) -- cycle;
        \fill[cXstab] (L) -- (X) -- (U) -- (W) -- cycle;
        \fill[cXstab] (E) -- (I) -- (Q) -- (W) -- cycle;

        \draw[strong] (B) -- (A) -- (R) -- (F);
        \draw[strong] (F) -- (S) -- (I) -- (B);
        \draw[strong] (I) -- (E) -- (C) -- (F) -- (H) -- (E);
        \draw[strong] (B) arc(135:225:2.1213) -- (J) -- cycle;
        \draw[strong] (H) arc(225:315:1.4142);
        \draw[strong] (G) arc(-45:-135:0.7071) -- (O);
        \draw[strong] (D) arc(180:360:0.5);

        \draw[strong] (B) arc(180:90:0.5);
        \draw[strong] (C) arc(0:90:0.5);
        \draw[strong] (U) -- (V);

        \foreach \corner in {A, B, C, D, E, F, G, H, I}
                \filldraw[black] (\corner) circle (2pt);
            
        \end{scope}

        \node at (1,0) {$\eclass{9}{2}{3}{22335}$ $|\aut{S}|=32$};
        \begin{scope}[shift={(-1,4)}]
            \coordinate (A) at (0,0);
            \coordinate (B) at (3,2);
            \coordinate (C) at (3,-3);
            \coordinate (D) at (3,1);
            \coordinate (E) at (3,0);
            \coordinate (F) at (3,-1);
            \coordinate (G) at (3,-2);
            \coordinate (H) at (1,0);
            \coordinate (I) at (2,0);
            \coordinate (J) at ($(A)!0.5!(C)$);
            \coordinate (K) at ($(G)!0.5!(H)$);
            \coordinate (L) at ($(C)!0.5!(G)$);
            \coordinate (M) at ($(F)!0.5!(G)$);
            \coordinate (N) at ($(E)!0.5!(F)$);
            \coordinate (O) at ($(D)!0.5!(E)$);
            \coordinate (P) at ($(B)!0.5!(D)$);
            \coordinate (Q) at ($(A)!0.5!(B)$);
            \coordinate (R) at (0.5,0);
            \coordinate (S) at ($(I)!0.5!(B)$);
            \coordinate (T) at (barycentric cs:B=1,H=1,I=1);
            \coordinate (U) at ($(H)!0.5!(I)$);
            \coordinate (V) at (barycentric cs:H=1,E=1,G=1);
            \coordinate (W) at (3.5,2);
            \coordinate (X) at (4,1.5);
            \coordinate (Y) at (4,0.5);
            \coordinate (Z) at (4,-0.5);
            \coordinate (AA) at (4,-1.5);
            \coordinate (AB) at (4,-2.5);
            \coordinate (AC) at (3.5,-3);
            \coordinate (AD) at ($(J)!0.5!(K)$);

            \fill[cZstab] (P) -- (X) -- (AB) -- (L) -- cycle;
            \fill[cZstab] (B) -- (W) -- (3.5,1.5) -- (P) -- cycle;
            \fill[cZstab] (C) -- (AC) -- (3.5,-2.5) -- (L) -- cycle;
            \fill[cZstab] (W) arc(90:0:0.5) -- (3.5,1.5) -- cycle;
            \fill[cZstab] (AC) arc(-90:0:0.5) -- (3.5,-2.5) -- cycle;
            \fill[cYstab] (X) -- (Y) -- (O) -- (P) -- cycle;
            \fill[cYstab] (Z) -- (AA) -- (M) -- (N) -- cycle;
            \fill[cXstab] (E) arc(90:0:0.5) -- (N) -- cycle;
            \fill[cZstab] (F) arc(-90:0:0.5) -- (N) -- cycle;

            \fill[cZstab] (A) -- (B) -- (C) -- cycle;
            \fill[white] (D) -- (I) -- (E) -- cycle;
            \fill[cXstab] (B) -- (S) -- (P) -- cycle;
            \fill[cXstab] (T) -- (S) -- (I) -- (U) -- cycle;
            \fill[cXstab] (B) -- (H) -- (K) -- (J) -- (R) -- (Q) -- cycle;
            \fill[cXstab] (AD) -- (L) -- (C) -- (J) -- cycle;
            \fill[cYstab] (H) -- (U) -- (M) -- (M) -- (G) -- cycle;

            \fill[cXstab] (A) arc(199.89:225:5) -- (J) -- cycle;
            \fill[cZstab] (C) arc(250.11:225:5) -- (J) -- cycle;
            
            \fill[cXstab] (V) -- (N) -- (M) -- cycle;

            \draw[strong] (A) -- (B) -- (C) -- cycle;
            \draw[strong] (B) -- (H) -- (E);
            \draw[strong] (H) -- (G);
            \draw[strong] (B) -- (W);
            \draw[strong] (X) -- (AB);
            \draw[strong] (AC) -- (C);
            \draw[strong] (B) -- (I);
            \draw[strong] (I) -- (D);
            \draw[strong] (W) arc(90:0:0.5);
            \draw[strong] (AC) arc(-90:0:0.5);
            \draw[strong] (E) arc(90:-90:0.5);
            \draw[strong] (A) arc(199.89:250.11:5);

            \foreach \corner in {A, B, C, D, E, F, G, H, I}
                \filldraw[black] (\corner) circle (2pt);
        \end{scope}
    \end{tikzpicture}
    \caption{All planar $[[9,2,3]]$ code classes.\label{codes923planar}}
\end{figure}

There are 221 indecomposable $[[9,3,3]]$ codes. Among these, there are no CSS codes, but there is one GF(4)-linear code with $|\mathrm{Aut}(S)|=1296$. The next largest automorphism group among $[[9,3,3]]$ codes is $|\aut{S}|=324$.

\subsection{Sporadic examples}

\paragraph{Smallest error-correcting subsystem code}
The only indecomposable $[[6,1,3]]$ code is well-known \cite{CRSS98}. It is the $\tilde{X}$ gauge-fixing of
\begin{equation}
G=\langle Z_1Y_2Z_3Z_4Y_6, Z_1Z_2Y_3Z_5Y_6, Z_1Z_3X_4Z_5X_6, Z_1Z_2Z_4X_5X_6, \tilde{X}=X_1Z_6, \tilde{Z}=Z_1 \rangle.
\end{equation}
Projecting onto the $+1$ eigenspace of $\tilde{Z}$ yields the 5-qubit code.

\paragraph{Smallest ``least symmetric'' code}
Codes that have only the identity automorphism are the ``least symmetric'' codes in this sense. The smallest occurrence of a stabilizer code with no automorphisms outside of the Pauli group is a $[[7,1,2]]$ code
\begin{equation}
S=\langle X_1X_2Z_5, X_3Z_6Z_7, X_4Z_5Z_6, Z_3Z_4X_6, Y_2Z_4Y_5Z_7, Z_1Z_2Z_3X_7\rangle, 
\end{equation}
with $w(x)=1+4x^3+9x^4+20x^5+22x^6+8x^7$. There are also 11 $[[7,2,2]]$ codes with this property.

\paragraph{Only indecomposable [[8,1,3]] CSS code}
This code is defined by
\begin{equation}
S=\langle Z_1Z_7,Z_2Z_6Z_7Z_8, Z_3Z_5Z_7Z_8, Z_4Z_5Z_6Z_7, X_2X_4X_5X_8, X_3X_4X_6X_8, X_1X_2X_3X_4X_7\rangle
\end{equation}
with $|\aut{S}|=192$ and $w(x)=1+x^2+17x^4+16x^5+39x^6+48x^7+6x^8$. This code appears on the descending part of the fault-tolerant code conversion path that was discovered in \cite{HFWH13}. The code is a gauge-fixing of two different $[[8,1,3]]$ subsystem codes
\begin{align}
G_1 = \langle & X_1X_4X_5X_6X_7, X_2X_4X_5X_8, X_3X_4X_6X_8, \\
& Z_1Z_2Z_3Z_4, Z_1Z_3Z_5Z_8, Z_1Z_2Z_6Z_8, \bar{X}_1=X_7, \bar{Z}_1=Z_1Z_7\rangle \\
G_2 = \langle & X_1X_4X_5X_6X_7, X_3X_4X_6X_8, Z_1Y_2Z_3Y_4X_5X_8, \\
& Z_1Z_3Z_5Z_8, Z_1Z_2Z_6Z_8, Z_1X_2X_4X_5Z_7X_8, \\
& \bar{X}_1=X_2X_4X_5X_8, \bar{Z}_1=Z_2X_4X_7\rangle.
\end{align}
Gauge-fixing the code $G_1$ in the X-basis gives the Steane code.

\paragraph{$GF(4)$-linear [[8,2,3]] code}

The $[[8,2,3]]$ code with the largest automorphism group, $|\aut{S}|=1728$, is at the top of Table~\ref{tab:codes823}. This code has an exceptionally short fault-tolerant syndrome measurement sequence \cite{dr20short}. It is also GF(4)-linear (but not CSS) and therefore $R^{\otimes 8}$ is a valid transversal gate. Choosing $\bar{X}_1=Z_1Z_2Z_6X_7$, $\bar{X}_2=Z_2Z_5Z_6X_8$, $\bar{Z}_1=Z_1Z_2Z_3Z_7$, and $\bar{Z}_2=Z_1Z_2Z_4Z_5Z_6Z_8$, the transversal $R$ gate applies the logical gate $\Lambda(X)_{2,1}\Lambda(X)_{1,2}$.

\paragraph{Only indecomposable [[8,2,3]] subsystem code}

Only three of the codes in Table~\ref{tab:codes823} have parent subsystem codes. In fact, the code with $|\aut{S}|=48$ and the two codes with $|\aut{S}|=6$ all have the same parent subsystem code
\begin{align}
G = \langle & X_1Z_2Z_3Z_4Z_6X_7, X_2Y_3Z_4Z_5X_7Y_8, Z_1X_2Z_3Y_4Y_7X_8, Z_1X_2Z_4X_5Z_6Z_8, Z_2Z_3Z_5X_6Z_7X_8 \\
& \tilde{X}_1=Z_1X_2Z_3Z_5Z_6X_8, \tilde{Z}_1=Z_2Z_3Z_4Z_5\rangle.
\end{align}

\paragraph {$GF(4)$-linear [[9,3,3]] code}
There is one GF(4)-linear code $[[9,3,3]]$ code defined by
\begin{gather}
\langle Y_1Z_2Z_5X_7X_8Z_9,Z_1X_2Z_4Z_6X_7X_8,Z_2Y_3Z_4Z_5Y_7X_9,\\
Z_1Z_3X_4X_7Y_8Z_9,Z_4X_5Z_6Z_7Z_8X_9,Z_1Z_3Z_4Z_5Y_6Z_7 \rangle.
\end{gather}
It is nondegenerate and has $w(x)=1+36x^6+27x^8$ and $|\mathrm{Aut}(S)|=1296$. The code has no parent subsystem code. This code differs from the $[[9,3,3]]$ code in \cite{Grassl:codetables} which has $|\mathrm{Aut}(S)|=336$ and $w(x)=1+x+28x^6+28x^7+3x^8+3x^9$.

\paragraph {``Most symmetric'' [[10,1,4]] code}
The indecomposable $[[10,1,4]]$ code with the largest automorphism group, $|\aut{S}|=20480$, is defined by
\begin{gather}
\langle X_1Z_9, X_2Z_8, X_3Z_7, X_4Z_{10}, X_5X_6,
Z_4Y_5Z_6Z_7Z_8Y_{10}, \\
Z_3Z_4X_7Z_8Z_9X_{10}, Z_2Z_5Z_6Z_7X_8Z_9, Z_1Z_5Z_6Z_8X_9Z_{10}\rangle
\end{gather}
and has $w(x)=1+5x^2+10x^4+170x^6+245x^8+81x^{10}$. The code is not equivalent to a CSS or GF(4)-linear code, and the low weight generating set given above does not have a planar Tanner graph. The parent subsystem code is
\begin{gather}
G = \langle X_1X_2Z_8Z_9, X_1X_3Z_7Z_9, X_1X_4Z_9Z_{10}, Z_4Y_5Z_6Z_7Z_8Y_{10}, \\
Z_4Z_5Y_6Z_7Z_8Y_{10}, Z_3Z_4X_7Z_8Z_9X_{10}, Z_2Z_5Z_6Z_7X_8Z_9, \\ Z_1Z_5Z_6Z_8X_9Z_{10}, \tilde{X}_1=X_1Z_9, \tilde{Z}_1=Z_1Z_2Z_3Z_4\rangle.
\end{gather}
This code differs from the $[[10,1,4]]$ code in \cite{Grassl:codetables}, which has $|\aut{S}|=9$ and $w(x)=1+36x^5+90x^6+120x^7+135x^8+100x^9+30x^{10}$.

\section{Applications}

\subsection{Gauge-fixing relationships among 7-qubit codes}

\begin{figure}
    \centering
    \begin{tikzpicture}[node distance=2cm and 2cm,square/.style={regular polygon,regular polygon sides=4}]
        \tikzstyle{greenrect} = [rectangle,thick,left color=green!20,right color=green!10,text=black,draw=green!10,minimum size=8mm]
        \tikzstyle{blueball} = [circle,thick,left color=blue!20,right color=blue!10,text=black,draw=blue!10,minimum size=8mm]
        \tikzstyle{redball} = [circle,thick,left color=red!20,right color=red!10,text=black,draw=red!10,minimum size=8mm]

        \node [blueball] (n642) {166};
        \node [greenrect] (nC1) [below of=n642] {C1};
        \node [blueball] (n322) [left of=nC1] {257};
        \node [blueball] (n321) [right of=nC1] {255};
        \node [greenrect] (nA1) [below of=n322] {A1};
        \node [blueball] (n768) [left of=nA1] {$\underset{\textrm{bare}}{108}$};
        \node [blueball] (n4) [right of=nA1] {185};
        \node [greenrect] (nB2) [right of=n4] {B2};
        \node [greenrect] (nbridge) [right of=nB2] {};
        \node [greenrect] (nA2) [right of=nbridge] {A2};
        \node [blueball] (n96) [above of=nA2] {239};
        \node [blueball] (n162) [below of=nA1] {240};
        \node [greenrect] (nD1) [below of=n4] {D1};
        \node [blueball] (n641) [below of=nB2] {164};
        \node [blueball] (n576) [below of=nA2] {$\underset{5-7}{115}$};
        \node [blueball] (n6) [below of=nD1] {200};
        \node [greenrect] (nB1) [right of=n6] {B1};
        \node [blueball] (n42) [right of=nB1] {227};
        \node [redball] (n1008) [left of=n6] {$\underset{\textrm{Steane}}{226}$};
        \node [redball] (n144) [left of=n1008] {228};
        \node [redball] (n48) [below of=n144] {209};
        \node [redball] (n161) [below of=n1008] {221};
        \node [blueball] (n192) [below of=nB1] {$\underset{\Delta}{190}$};

        \path[->]
            (nC1) edge[] node [scale=0.8,midway,right] {$\tilde{Z}$} (n642)
            (nC1) edge[] node [scale=0.8,midway,above] {$\tilde{X}$} (n322)
            (nC1) edge[] node [scale=0.8,midway,right] {$\tilde{Y}$} (n4)
            (nA1) edge[] node [scale=0.8,midway,right] {$\tilde{X}$} (n322)
            (nA1) edge[] node [scale=0.8,midway,above] {$\tilde{Z}$} (n768)
            (nA1) edge[] node [scale=0.8,midway,right] {$\tilde{Y}$} (n162)
            (nB2) edge[] node [scale=0.8,midway,right] {$\begin{smallmatrix}\tilde{Y}_1\tilde{X}_2,\\ \tilde{Z}_1\tilde{Z}_2\end{smallmatrix}$} (n321)
            (nB2) edge[] node [scale=0.8,midway,above] {$\begin{smallmatrix} \tilde{Y}_1\tilde{X}_2,\\ \tilde{X}_1\tilde{Z}_2\end{smallmatrix}$} (n4)
            (nB2) edge[] node [scale=0.8,midway,right] {$\begin{smallmatrix}\tilde{Y}_1\tilde{X}_2,\\ \tilde{Y}_1\end{smallmatrix}$} (n641)
            (nB2) edge[] node [scale=0.8,midway,above] {$\tilde{X}_1$} (nbridge)
            (nD1) edge[] node [scale=0.8,midway,right] {$\tilde{Y}$} (n4)
            (nD1) edge[] node [scale=0.8,midway,above] {$\tilde{Z}$} (n162)
            (nD1) edge[] node [scale=0.8,midway,right] {$\tilde{X}$} (n6)
            (nB1) edge[] node [scale=0.8,midway,above] {$\tilde{Y}$} (n6)
            (nB1) edge[] node [scale=0.8,midway,above] {$\tilde{X}$} (n42)
            (nB1) edge[] node [scale=0.8,midway,right] {$\tilde{Z}$} (n192)
            (nA2) edge[] node [scale=0.8,midway,right] {$\begin{smallmatrix}\tilde{Y}_1\tilde{X}_2,\\ \tilde{Z}_1\tilde{Y}_2\end{smallmatrix}$} (n96)
            (nA2) edge[] node [scale=0.8,midway,above] {$\tilde{X}_1$} (nbridge)
            (nA2) edge[] node [scale=0.8,midway,right] {$\begin{smallmatrix} \tilde{Y}_1\tilde{X}_2,\\ \tilde{X}_2\end{smallmatrix}$} (n576)
        ;
    \end{tikzpicture}
    \caption{Illustration of parent-child gauge-fixing relationships between $[[7,1,3]]$ subsystem and stabilizer codes. Each node represents an equivalence class of codes. Boxes correspond to subsystem codes, which are either unlabeled, or labeled as described in the text. Circles correspond to stabilizer codes. These are labeled by their index in Table~\ref{tab:codes7}. Each edge is labeled by the gauge operators that are fixed to have a $+1$ eigenvalue.
	  \label{fig:subsystem7}}
\end{figure}

\begin{table}
	\centering
	\begin{tabular}{ccc}
		label & $w(x)$ & $G$ \\ \midrule
        A2 & $1+2x+9x^2+4x^3+31x^4+$ &
		$\langle Y_1Z_2Y_3Z_4Z_5X_7, Y_1Z_2Z_3Y_4Z_6X_7, Y_1Z_2Z_4X_5Z_6Y_7,$ \\
        & $78x^5+119x^6+12x^7$ & $Y_1Z_2Z_3Z_5X_6Y_7, \tilde{X}_1=Z_2, \tilde{Z}_1=X_2Z_7, \tilde{X}_2=Z_1Z_7, \tilde{Z}_2=X_1Z_7 \rangle$ \\
        B2 & $1+2x+5x^2+4x^3+39x^4+$ &
		$\langle Y_2X_3Z_4Z_6Y_7, Y_2Z_3X_4Z_5Y_7, Y_2Z_3Y_5Z_6X_7,$ \\
        & $94x^5+83x^6+28x^7$ & $Z_1Y_2Z_4Z_5Y_6X_7,
		\tilde{X}_1=Z_1, \tilde{Z}_1=X_1Z_6, \tilde{X}_2=Z_2Z_7, \tilde{Z}_2=X_2Z_7\rangle$ \\
        A1 & $1+5x^2+2x^3+23x^4+$ & $\langle X_2Z_5, X_1X_3Z_5Z_6, X_1X_4Z_5Z_7, Z_1Z_2Z_4X_5Z_6X_7,$ \\
        & $24x^5+67x^6+6x^7$ & $Z_3Z_4Y_6Y_7, \tilde{X}_1=Z_1Z_3Z_4Z_5Z_6Z_7, \tilde{Z}_1=X_1Z_5\rangle$ \\
        B1 & $1+3x^2+2x^3+27x^4+$ & $\langle X_1X_2Z_6Z_7, X_1X_3Z_5Z_6, Z_2Y_4Z_6Y_7, Z_2Z_3Z_4X_5Z_6X_7,$ \\
        & $24x^5+65x^6+6x^7$& $Z_1Z_4Z_5X_6, \tilde{X}_1=Z_1Z_2Z_3Z_4Z_5Z_7, \tilde{Z}_1=X_1Z_6\rangle$ \\
        C1 & $1+3x^2+2x^3+27x^4+$ & $\langle X_1X_2Z_4Z_7, Y_1Y_3Z_5X_7, Z_1Z_2Z_3X_4Z_6X_7, Z_4X_5Z_6Z_7,$ \\
        & $24x^5+65x^6+6x^7$ & $Z_3Z_5X_6Z_7, \tilde{X}_1=Z_1Z_2Z_4Z_7, \tilde{Z}_1=X_1Z_7 \rangle$ \\
        D1 & $1+x^2+6x^3+19x^4+$ & $\langle Y_1Y_2Z_3Z_5X_7, Z_1Z_2Y_3Z_6Y_7, Z_1Z_2Z_3X_4Z_5Z_6X_7,$ \\
        & $44x^5+43x^6+14x^7$ & $X_1Z_2Z_4X_5Z_7, X_1Z_3Z_4X_6Z_7, \tilde{X}_1=Z_1Z_3Z_4Z_5Z_6Z_7, \tilde{Z}_1=X_1Z_7 \rangle$
    \end{tabular}
	\caption{Some $[[7,1,3]]$ subsystem codes.\label{tab:sub7}}
\end{table}

By fixing gauge operators of the $[[7,1,3]]$ subsystem codes in Table~\ref{tab:sub7} to their $+1$ eigenspaces, we produce 12 of the 16 codes in Table~\ref{tab:codes7}, as illustrated in Figure~\ref{fig:subsystem7}. Table~\ref{tab:sub7} is not an exhaustive list of all inequivalent $[[7,1,3]]$ subsystem codes.

\subsection{Transforming a color code into a surface code}\label{app:path}

As an application of the results in Table~\ref{tab:codes913}, we found a sequence of 8 error-correcting subsystem codes that links the 7-qubit color code (plus two ancillas) to the rotated surface code by gauge-fixing relationships.  The sequence does not include any codes with more than 9 qubits. It was constructed in an ad-hoc manner by inspecting the gauge-freedoms of $[[9,1,3]]$ CSS codes, so we have no reason to believe it is optimal. For example, a shorter path may exist through stabilizer codes that do not have CSS representatives. In principle an optimal sequence could be found using the complete database of 9-qubit codes.

Table~\ref{tab:path} lists the subsystem code sequence. The 7-qubit color code with two ancillas is the $X$-gauge of code 1. The second code in Table~\ref{tab:codes7} with $|\aut{S}|=128$ is both the $Z$-gauge of code 1 and the Y-gauge of code 2. The CSS code with $|\aut{S}|=8$ is $Z$-gauge of both codes 2 and 3. The first code in Table~\ref{tab:codes7} with $|\aut{S}|=192$ is both the $Y$-gauge of code 3 and the $X$-gauge of code 4. The CSS code with $|\aut{S}|=256$ is both the $Z$-gauge of codes 4 and 5. The CSS code with $|\aut{S}|=64$ is the $X$ gauge of both codes 5 and 6. The second code in Table~\ref{tab:codes7} with $|\aut{S}|=384$ is both the $Z$-gauge of code 6 and the $X$-gauge of code 7. The Shor code\footnote{Apply Hadamard to qubits 2-8 to reach the usual presentation.} is the $Z$-gauge of both codes 7 and 8. The rotated surface code is the $X$-gauge of code 8.

\begin{table}
    \renewcommand{\arraystretch}{1.55}
	\centering
	\begin{tabularx}{\textwidth}{l|L{30}|l}
		\toprule
		Seq & Stabilizer & $\bar{X}_g, \bar{Z}_g$ pairs \\ \midrule
        1 & $X_{1}X_{3}X_{5}X_{7}X_{9}, X_{3}X_{4}X_{7}X_{8}, X_{3}X_{4}X_{5}X_{6}$, $Z_{3}Z_{5}Z_{7}Z_{9}, Z_{2}Z_{3}Z_{4}Z_{7}Z_{8}, Z_{2}Z_{3}Z_{4}Z_{5}Z_{6}$ & $X_{1}, Z_{1}Z_{9}, Z_{2}, X_{2}X_{4}$ \\
        2 & $X_{5}X_{6}X_{7}X_{8}, X_{1}X_{2}X_{6}X_{7}X_{9}, X_{2}X_{3}X_{5}X_{6}, Y_{1}X_{4}X_{6}X_{7}Y_{9}$, $X_{1}Z_{3}X_{4}Z_{5}X_{6}Y_{7}Y_{9}, X_{1}Z_{3}X_{4}Y_{6}X_{7}Z_{8}Y_{9}, X_{1}Y_{2}Z_{4}Z_{5}X_{6}X_{7}Z_{8}Y_{9}$ & $X_{1}X_{2}Y_{4}X_{6}X_{7}, X_{1}Z_{4}X_{6}X_{7}$ \\
        3 & $X_{5}X_{6}X_{7}X_{8}, X_{1}Z_{4}X_{6}X_{7}, X_{2}X_{3}X_{5}X_{6}, Y_{1}X_{2}Y_{4}X_{6}X_{7}Z_{9}$, $Z_{3}Y_{4}Z_{5}Z_{7}Y_{9}, Z_{3}Y_{4}Z_{6}Z_{8}Y_{9}, Y_{2}Z_{5}Z_{8}Y_{9}$ & $X_{1}Y_{9}, X_{1}X_{2}X_{6}X_{7}X_{9}$ \\
        4 & $Z_{3}Y_{4}Z_{5}Z_{7}Y_{9}, Y_{1}Y_{2}Z_{3}Z_{5}Z_{6}, Z_{5}Z_{6}Z_{7}Z_{8}, X_{3}X_{6}X_{8}Z_{9}$, $X_{2}X_{5}X_{8}Z_{9}, X_{2}X_{6}X_{7}Z_{9}, Z_{1}X_{2}X_{4}Z_{9}$ & $Y_{1}Y_{4}, X_{4}Z_{9}$ \\
        5 & $Z_{3}Z_{4}Z_{5}Z_{7}X_{9}, X_{1}Z_{2}Z_{3}Z_{5}Z_{6}, Z_{5}Z_{6}Z_{7}Z_{8}, X_{2}X_{5}X_{8}Z_{9}$, $X_{2}X_{6}X_{7}Z_{9}, X_{3}X_{4}X_{6}X_{8}, Z_{1}X_{2}$ & $Z_{3}Z_{4}, X_{3}X_{6}X_{8}Z_{9}$ \\
        6 & $X_{1}Z_{2}Z_{4}Z_{5}Z_{6}, Z_{3}Z_{4}Z_{5}Z_{6}Z_{7}Z_{8}, Z_{5}Z_{7}X_{9}, X_{2}X_{5}X_{8}Z_{9}$, $X_{2}X_{6}X_{7}Z_{9}, X_{3}X_{4}X_{6}X_{8}, Z_{1}X_{2}$ & $Z_{3}Z_{4}, X_{4}X_{6}$ \\
        7 & $X_{1}Z_{2}Z_{3}Z_{5}Z_{8}X_{9}, X_{1}Z_{2}Z_{4}Z_{6}Z_{7}X_{9}, Z_{1}X_{5}X_{8}Z_{9}, Z_{1}X_{2}$, $X_{5}X_{6}X_{7}X_{8}, X_{3}X_{8}, X_{4}X_{6}$ & $Z_{5}Z_{7}X_{9}, X_{5}X_{8}$ \\
        8 & $X_{1}Z_{2}Z_{3}Z_{5}Z_{8}X_{9}, X_{1}Z_{2}Z_{4}Z_{6}Z_{7}X_{9}, Z_{1}X_{4}X_{5}X_{6}X_{8}Z_{9}$, $Z_{1}X_{2}X_{3}X_{5}X_{6}X_{7}, X_{6}X_{7}, X_{3}X_{8}$ & $Z_{2}Z_{4}, X_{4}X_{6}, Z_{5}X_{9}, X_{5}X_{8}$ \\
	\end{tabularx}
	\caption{Sequence of [[9,1,3]] subsystem codes that constitute a ``path'' from the 7-qubit color code to the 9-qubit rotated surface code. Each stabilizer code along the ``path'' is equivalent to a code in CSS form. \label{tab:path}}
\end{table}

\section{Discussion}

We have presented and enumerated all binary stabilizer codes up to local Clifford and permutation equivalence and have given some analysis of the codes contained within the database. We encourage the reader to explore these tables more deeply. Future work will explore improving naive aspects of the search approaches as well as finding a significantly faster method to determine if two codes are equivalent. 

\section{Acknowledgements}

We acknowledge helpful discussions with Theodore Yoder, Lev Bishop, Sergey Bravyi, Ken Brown, Markus Grassl, John Smolin, Andrew Nemec and Victor Albert. 

\bibliographystyle{unsrt}
\bibliography{paper}

\appendix

\section{Code Set Equivalence}\label{ap:code-set-equivalence}

\begin{definition}
Let $G_1$ and $G_1$ be two subgroups of $\PauliN$. If $CS(G_1) = CS(G_2)$ then we say that $G_1$ and $G_2$ generate the same code sets or that $G_1$ is equivalent to $G_2$ in terms of code sets, $G_1\sim_{cs} G_2$.
\end{definition}

\noindent Subgroups that define the same code sets are easily characterized as shown in the following lemma:

\begin{lemma}\label{csiicen}
Let $G_1$ and $G_1$ be two subgroups of $\PauliN$, then $G_1\sim_{cs} G_2$ if, and only if, $\langle iI, Z(G_1)\rangle = \langle iI, Z(G_2)\rangle$.
\end{lemma}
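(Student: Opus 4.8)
The plan is to pass from the subgroups $G_1,G_2$ to their associated gauge groups and reduce the entire claim to a statement about the joint-eigenspace decomposition determined by a center. First I would record the elementary identity
\[
Z(\gen{G,iI}) = \gen{iI,\,Z(G)}
\]
valid for any subgroup $G\le\PauliN$. This holds because $iI$ is central in $\PauliN$, so every element of $\gen{G,iI}$ can be written as $i^a g$ with $g\in G$; such an element commutes with all of $\gen{G,iI}$ exactly when $g$ commutes with every element of $G$, i.e. when $g\in Z(G)$. Consequently the hypothesis $\gen{iI,Z(G_1)}=\gen{iI,Z(G_2)}$ is literally the statement that the gauge groups $\calG_{G_1}=\gen{G_1,iI}$ and $\calG_{G_2}=\gen{G_2,iI}$ share the same center, and hence, since $Z(\calG_{G_i})=\gen{iI,\calS_{G_i}}$, that $\gen{iI,\calS_{G_1}}=\gen{iI,\calS_{G_2}}$.

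With this reduction in hand the forward implication ($\Leftarrow$) is almost immediate. If $\gen{iI,\calS_{G_1}}=\gen{iI,\calS_{G_2}}$, then $\calS_{G_1}$ and $\calS_{G_2}$ are two sections of one and the same group modulo $\gen{iI}$, so they induce the identical decomposition of $\calH$ into joint eigenspaces. Since $CS(G_i)$ is by definition this decomposition, and the excerpt already notes that $CS$ is independent of the chosen section $\calS_{G_i}$, we conclude $CS(G_1)=CS(G_2)$, i.e. $G_1\sim_{cs}G_2$.

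The substance of the proof, and the step I expect to be the main obstacle, is the converse: recovering the abstract group $\gen{iI,\calS_G}$ from the bare collection of subspaces $CS(G)$. The plan is to give an intrinsic description of membership purely in terms of the code set, namely that a Pauli operator $P\in\PauliN$ lies in $\gen{iI,\calS_G}$ if and only if $P$ acts as a scalar on each code space $\calH_\chi\in CS(G)$. The ``only if'' direction is clear, as $i^a s$ acts on $\calH_\chi$ by the scalar $i^a\chi(s)$. For the ``if'' direction I would use the orthogonal projectors
\[
\Pi_\chi=\frac{1}{|\calS_G|}\sum_{s\in\calS_G}\overline{\chi(s)}\,s
\]
onto the $\calH_\chi$: any operator that is scalar on every $\calH_\chi$ lies in $\Span_\bbC\{\Pi_\chi\}=\Span_\bbC\{s:s\in\calS_G\}$, and because distinct Pauli operators are orthogonal under the trace inner product $\trace{A^\dagger B}$ hence linearly independent, a Pauli $P$ in this span must be a scalar multiple of some $s\in\calS_G$, so $P\in\gen{iI,\calS_G}$. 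This characterization exhibits $\gen{iI,\calS_G}$ as a function of $CS(G)$ alone, so $CS(G_1)=CS(G_2)$ forces $\gen{iI,\calS_{G_1}}=\gen{iI,\calS_{G_2}}$, which by the first paragraph is exactly $\gen{iI,Z(G_1)}=\gen{iI,Z(G_2)}$.

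The delicate points to verify are the linear-independence argument just used (distinct Paulis are orthogonal, so the only Paulis in the linear span of a stabilizer group are its scalar multiples) and the check that ``$P$ scalar on each $\calH_\chi$'' together with $P$ being a genuine element of $\PauliN$ pins the proportionality constant down to a fourth root of unity, placing $P$ in $\gen{iI,\calS_G}$ rather than merely in its complex span. Both become routine once the projector description of the $\calH_\chi$ is in hand, and they are the only spots where the stabilizer hypothesis $-I\notin\calS_G$ (ensuring $\calS_G$ is abelian with well-defined characters) is genuinely used.
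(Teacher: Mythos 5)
Your proof is correct, and it is substantially more complete than the one in the paper. The paper disposes of the lemma in a single sentence: it notes that for an order-two $W\in\PauliN$ the eigenspace decomposition induced by $iW$ (or $-W$) is the same as that induced by $W$, only with rephased eigenvalues. That fact is precisely your forward direction: any two sections $\calS_{G_1},\calS_{G_2}$ of the same group $\gen{iI,Z(G_1)}=\gen{iI,Z(G_2)}$ differ element-by-element by powers of $i$ (in fact by signs), so they produce the identical set of joint eigenspaces, hence $CS(G_1)=CS(G_2)$. The converse -- that the bare set of subspaces $CS(G)$ determines the group $\gen{iI,Z(G)}$ -- is not argued at all in the paper's proof; it is exactly the direction you correctly identify as carrying the real content. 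Your argument for it, characterizing $\gen{iI,\calS_G}$ intrinsically as the Paulis acting as a scalar on every $\calH_\chi$, then using the projectors $\Pi_\chi=|\calS_G|^{-1}\sum_{s\in\calS_G}\overline{\chi(s)}\,s$ together with trace-orthogonality to force such a Pauli to be a phase multiple of some $s\in\calS_G$, is the standard stabilizer-formalism argument and is sound. Two small points of care, both of which your argument survives: the trace-orthogonality claim should be stated for Paulis that are distinct \emph{modulo phase}, which applies here because $-I\notin\calS_G$ guarantees no two elements of $\calS_G$ are proportional; and the hypothesis $-I\notin\calS_G$ is also what makes every $\calH_\chi$ nonzero (via the paper's Lemma~\ref{deltaP}), though your span argument does not actually need nonvacuity. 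In short, your write-up supplies the half of the equivalence that the paper leaves implicit, and would serve as a complete replacement for its proof.
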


\begin{proof} The proof follows from the fact that if $W\in\PauliN$ is of order $2$ then the eigenvectors of $iW$ are $i$ multiples of the eigenvectors of $W$.
\end{proof}

\begin{corollary} \label{cor:geqiig}
Let $G\leq\PauliN$ then $CS(G)=CS(\langle iI, G\rangle)$.
\end{corollary}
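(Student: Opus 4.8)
The plan is to deduce the corollary from Lemma~\ref{csiicen}, which reduces an equality of code sets to an equality of the groups $\langle iI, Z(\cdot)\rangle$. Set $H := \langle iI, G\rangle$. The desired identity $CS(G) = CS(H)$ is precisely the statement $G \sim_{cs} H$, so by Lemma~\ref{csiicen} it suffices to establish
\[
\langle iI, Z(G)\rangle = \langle iI, Z(H)\rangle .
\]

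The key step is to compute the center $Z(H)$ of $H = \langle iI, G\rangle$. Since $iI$ commutes with every element of $\PauliN$, every element of $H$ can be written as $i^k g$ with $g \in G$, and $iI$ itself is automatically central in $H$. An element $i^k g$ lies in $Z(H)$ exactly when it commutes with every $g' \in G$; because the scalar $i^k$ cancels in the group commutator, this holds if and only if $g$ commutes with all of $G$, that is, $g \in Z(G)$. Therefore $Z(H) = \langle iI, Z(G)\rangle$.

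Substituting this into the right-hand side gives
\[
\langle iI, Z(H)\rangle = \langle iI,\, \langle iI, Z(G)\rangle\rangle = \langle iI, Z(G)\rangle,
\]
which is exactly the equality required by Lemma~\ref{csiicen}. Hence $G \sim_{cs} H$ and $CS(G) = CS(\langle iI, G\rangle)$.

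I expect no real obstacle here: the single point needing care is the center computation, and it is immediate once one notes that $iI$ is central in $\PauliN$, so its powers never affect any commutation relation. As an even shorter alternative one may argue directly from the definition of $CS$: since $CS(\calA) := CS(\langle \calA, iI\rangle)$ and $\langle G, iI\rangle = \langle iI, G\rangle = H$ as groups, both $CS(G)$ and $CS(H)$ unwind to $CS(H)$, giving the claim without invoking the lemma.
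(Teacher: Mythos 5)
Your proof is correct and matches what the paper intends: the paper states this corollary without a written proof, as an immediate consequence of Lemma~\ref{csiicen}, and your computation $Z(\langle iI,G\rangle)=\langle iI,Z(G)\rangle$ is exactly the step needed to apply that lemma. Your closing observation is also apt --- since $CS(\calA)$ is \emph{defined} as $CS(\langle\calA,iI\rangle)$, the claim already follows from the definition, which is the most likely reason the paper omits a proof altogether.
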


It will be useful to work only with gauge groups $G$ have a generating sets $\langle iI, g_1,\dots,g_k\rangle$ where the $g_l$ are order $2$ Pauli operators. We make this assumption without loss of generality from above. We will also define $S=S_G$ such that
\[
Z(G) = \langle iI, S_G\rangle = \langle iI, S\rangle 
\]
\noindent where $-I\not\in S$ (so $S$ is generated by operators of order $2$). In this situation the above character definition simplifies to 
\[
CS(G) = \{\mathcal{H}_\chi : \chi\in\hat{S}\}
\]
\noindent if $S=\langle s_1,\dots,s_k\rangle$ is a minimal generating set then $|\hat{S}|=|S|=|CS(G)|=2^k$ and so $\textrm{dim}_\bbC(\mathcal{H}_\chi)=2^{n-k}$ as expected.

\begin{definition}
Let $G$ be a gauge group of $\PauliN$ and  $M=\langle iI, G\rangle$ with $Z(M)=\langle iI, S\rangle$ with $-I\not\in S$. Let $\chi_0$ be the principal character of $\widehat{S}$. The the principal code of $G$ is defined as
\[
C(G) = \mathcal{H}_{\chi_0} = \{\ket{\psi}\in\mathcal{H}: P\ket{\psi}=\ket{\psi}\, \forall P\in S\}
\]
\end{definition}

\begin{lemma}\label{csconj}
Let $W\in\PauliN$ and $G\leq\PauliN$ then $G\sim_{cs} WGW^{-1}$.
\end{lemma}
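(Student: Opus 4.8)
The plan is to reduce the statement to the characterization of code-set equivalence furnished by Lemma~\ref{csiicen}. By that lemma, proving $G\sim_{cs} WGW^{-1}$ is the same as proving the equality of subgroups
\[
\langle iI, Z(G)\rangle = \langle iI, Z(WGW^{-1})\rangle.
\]
Thus the entire problem turns into a statement about how conjugation by a Pauli operator $W$ acts on the center $Z(G)$, and no reasoning about individual code spaces $\mathcal{H}_\chi$ is needed.

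First I would observe that conjugation by $W$ is a group isomorphism from $G$ onto $WGW^{-1}$, and any isomorphism carries the center onto the center; hence $Z(WGW^{-1}) = W\,Z(G)\,W^{-1}$. This rewrites the right-hand group as $\langle iI, W Z(G) W^{-1}\rangle$ and reduces the task to showing $\langle iI, W Z(G) W^{-1}\rangle = \langle iI, Z(G)\rangle$.

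The key step exploits that $W\in\PauliN$. Any two elements of the Pauli group either commute or anticommute, so for every $P\in\PauliN$ one has $WPW^{-1} = \pm P$. In particular, taking a minimal generating set $s_1,\dots,s_k$ of $Z(G)$, each conjugate satisfies $Ws_jW^{-1} = \epsilon_j s_j$ with $\epsilon_j\in\{+1,-1\}$. Because $(iI)^2 = -I$, the sign $-I$ already lies in $\langle iI\rangle$, so modulo the adjoined generator $iI$ these signs are invisible: $\langle iI, \epsilon_j s_j\rangle = \langle iI, s_j\rangle$. Assembling this over all generators gives $\langle iI, W Z(G)W^{-1}\rangle = \langle iI, Z(G)\rangle$, which is exactly what Lemma~\ref{csiicen} requires.

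There is essentially no hard obstacle here; the only point demanding care is the bookkeeping of the phase factors produced by conjugation. The whole argument rests on two elementary facts: conjugation preserves centers, and conjugation by a Pauli operator changes another Pauli operator by at most an overall sign. I would be careful to stress that $W$ need not commute with each $s_j$, so the sign $\epsilon_j$ can genuinely be nontrivial, and that it is precisely the adjunction of $iI$ (which forces $-I$ into the group) that makes the claim hold as stated — rather than the stronger but false assertion $W Z(G) W^{-1} = Z(G)$.
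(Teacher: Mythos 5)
Your proof is correct and follows essentially the same route as the paper's: both arguments rest on the fact that conjugation by a Pauli operator changes each element by at most a sign, and that these signs are absorbed once $iI$ (hence $-I$) is adjoined, so the relevant centers agree modulo $\langle iI\rangle$. The only cosmetic difference is that the paper first normalizes $G$ to contain $iI$ via Corollary~\ref{cor:geqiig}, making $WZ(G)W^{-1}=Z(G)$ hold exactly, whereas you keep $G$ arbitrary and route the sign bookkeeping through Lemma~\ref{csiicen} directly.
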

\begin{proof}
By Corollary~\ref{cor:geqiig} we may assume without loss of generality that $iI\in G$ and we define $S$ such that $Z(G)=\langle iI,S\rangle$ with $-I\not\in S$. For each $g\in G$ there exists a $\epsilon\in\{-1,+1\}$ such that $WgW^{-1}=\epsilon g$ hence $\langle iI, WgW^{-1}\rangle=\langle iI,g\rangle$. Thus
$Z(WGW^{-1})=WZ(G)W^{-1}=Z(G)$ and so $CS(WGW^{-1}) = CS(Z(WGW^{-1}))=CS(Z(G))=CS(G)$.
\end{proof}

\begin{lemma}\label{deltaP}
Let $S\leq \PauliN$ be an abelian subgroup with $-I\not\in S$ and $S=\langle g_1,g_2,\dots,g_k\rangle$ a minimal generating set. Let $\delta=(\delta_1,\delta_2,\dots,\delta_k)\in\bbZ_2^k$ then there exists a $P\in\PauliN$ such that $Pg_vP^{-1}=(-1)^{\delta_v}g_v$ for all $v$.
\end{lemma}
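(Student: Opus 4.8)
The plan is to convert the statement into a solvability question for a system of linear equations over $\bbF_2$, exploiting the basic fact that any two Pauli operators either commute or anticommute. Since $-I\notin S$, every $g_v$ is an order-two operator, and for any $P\in\PauliN$ conjugation yields $Pg_vP^{-1}=(-1)^{\epsilon_v(P)}g_v$, where $\epsilon_v(P)\in\{0,1\}$ equals $1$ exactly when $P$ and $g_v$ anticommute. Thus the whole problem reduces to exhibiting a single $P$ whose anticommutation pattern with the generators matches the prescribed vector, i.e.\ $\epsilon_v(P)=\delta_v$ for every $v$.

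First I would pass to symplectic coordinates. Writing $\kappa(g_v)=(\alpha_v\mid\beta_v)\in\bbF_2^{2n}$ for the X- and Z-parts and letting $P$ correspond to $(x\mid y)\in\bbF_2^{2n}$, the symplectic commutation rule states that $P$ and $g_v$ anticommute precisely when $x\cdot\beta_v+y\cdot\alpha_v=1$ in $\bbF_2$. The required sign pattern therefore becomes the linear system
\[
x\cdot\beta_v+y\cdot\alpha_v=\delta_v,\qquad v=1,\dots,k,
\]
in the $2n$ unknowns $(x,y)$. Its coefficient matrix, with rows indexed by $v$, is the $k\times 2n$ matrix $[\,B\mid A\,]$, where $A$ and $B$ are the matrices whose rows are the $\alpha_v$ and the $\beta_v$ respectively.

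The remaining and central step is to show this system is solvable for every right-hand side $\delta$, which amounts to the coefficient matrix having full row rank $k$. Here I would invoke minimality: by the remark following the definition of $\kappa$, the symplectic representation $\kappa(S)=[\,A\mid B\,]$ of a minimal generating set lies in $\bbF_2^{k\times 2n,k}$, i.e.\ already has rank $k$. The coefficient matrix $[\,B\mid A\,]$ is obtained from $[\,A\mid B\,]$ by the column permutation swapping the first $n$ columns with the last $n$, and since column permutations are invertible $\bbF_2$-linear maps they preserve rank; hence $[\,B\mid A\,]$ also has rank $k$. A full-row-rank system over $\bbF_2$ is solvable for any right-hand side, so a solution $(x,y)$ exists for the given $\delta$, and $P=X^xZ^y$ (with any fixed phase) is the desired operator. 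The one point demanding care is exactly this full-rank claim; once the block swap is recognized as a rank-preserving column permutation, everything else is immediate. Note that the abelianness of $S$ plays no role in solving the system — it is used only to guarantee that the $g_v$ are order-two, so that conjugation returns a clean $\pm g_v$ — and the construction produces a $P$ realizing an \emph{arbitrary} sign vector regardless of how the generators commute among themselves.
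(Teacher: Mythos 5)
Your proof is correct, but it takes a genuinely different route from the paper's. The paper builds $P$ as a product $w_1w_2\cdots w_k$ of ``single-flip'' operators, where $w_v$ anticommutes with $g_v$ and commutes with every $g_j$, $j\neq v$ (and $w_v=I$ when $\delta_v=0$); the existence of each $w_v$ is delegated to Proposition~10.4 of Nielsen and Chuang, and the general sign vector $\delta$ is then assembled by multiplicativity of conjugation signs. You instead solve for an arbitrary $\delta$ in one step: the sign pattern becomes a linear system over $\bbF_2$ whose coefficient matrix $[B\mid A]$ is a column permutation of the full-rank matrix $\kappa(S)=[A\mid B]$, and a full-row-rank system is solvable for every right-hand side. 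In effect you have inlined (and generalized to arbitrary $\delta$) the proof of the very proposition the paper cites, so your argument is more self-contained, at the cost of being longer; the paper's is shorter but leans on an external reference. One caveat on your closing remark: it is not accurate that the hypotheses on $S$ matter only for making the $g_v$ order two --- conjugation gives $\pm g_v$ for \emph{any} Pauli operator. What the hypotheses actually buy is the full-rank claim itself: for a minimal generating set of a group containing $-I$, e.g.\ $\langle X,-I\rangle\leq\Pauli_1$ with generators $g_1=X$, $g_2=-I$, the symplectic matrix is rank-deficient and the lemma genuinely fails (no $P$ satisfies $P(-I)P^{-1}=I$). Since you invoke the paper's remark on $\kappa$, which is stated precisely for stabilizer groups, your proof chain is sound, but the aside misattributes where the hypotheses enter.
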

\begin{proof}
For each $g_v$ let $w_v\in\Pauli_n$ be chosen such that $w_vg_vw^{-1}_v = (-1)^{\delta_v}g_v$ and $w_vg_jw^{-1}_v = g_j$ for all $j\not=v$. Such $w_v$ exist for $\delta_v = 1$ by Proposition~10.4 of~\cite{nielsen00} otherwise $w_v$ can be set to $I$. Setting $P=w_1w_2\dots{w_k}$ proves the lemma.
\end{proof}

\begin{lemma}
Let $S_1,S_2\leq\PauliN$ be two stabilizer subgroups. Then $S_1\sim_{cs} S_2$ if, and only if, there exists a $W\in\PauliN$ such that $S_1=WS_2W^{-1}$.
\end{lemma}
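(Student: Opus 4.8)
The plan is to handle the two directions very asymmetrically: the forward implication is a one-line consequence of an earlier lemma, while the converse is where all the work lives, and it is carried out by constructing the conjugating Pauli explicitly via Lemma~\ref{deltaP}.

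The ``if'' direction is immediate. If $S_1 = WS_2W^{-1}$ for some $W\in\PauliN$, then Lemma~\ref{csconj} gives $S_2\sim_{cs} WS_2W^{-1} = S_1$, hence $S_1\sim_{cs}S_2$. For the converse, suppose $S_1\sim_{cs}S_2$. Since stabilizer groups are abelian we have $Z(S_i)=S_i$, so Lemma~\ref{csiicen} converts the hypothesis into the single group identity $\langle iI, S_1\rangle = \langle iI, S_2\rangle =: M$. I would first record the structural facts this forces. Because $-I\notin S_i$, every nonidentity element of $S_i$ is a Hermitian Pauli of order two and $S_i\cap\langle iI\rangle = \{I\}$; consequently every element of $M$ is uniquely of the form $i^a s$ with $a\in\bbZ_4$ and $s\in S_i$, and in particular $|S_1| = |M|/4 = |S_2| = 2^k$ for some $k$.

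The heart of the argument is to match generators up to sign. Fix a minimal generating set $g_1,\dots,g_k$ of $S_1$. Each $g_v$ lies in $M = \langle iI, S_2\rangle$, so I can write $g_v = i^{b_v} s_v$ with $s_v\in S_2$. Comparing orders, $I = g_v^2 = i^{2b_v}s_v^2 = i^{2b_v}$ forces $b_v\in\{0,2\}$, so $s_v = (-1)^{\delta_v}g_v$ for some $\delta_v\in\bbZ_2$. A short quotient argument then shows $\{s_1,\dots,s_k\}$ generates $S_2$: under the projection $q\colon M\to M/\langle iI\rangle$ one has $q(s_v)=q(g_v)$, the images $q(g_v)$ generate $q(M)=q(S_2)$, and $q$ restricts to an isomorphism on $S_2$ (as $S_2\cap\langle iI\rangle=\{I\}$), so $\langle s_1,\dots,s_k\rangle = S_2$.

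Finally I apply Lemma~\ref{deltaP} to the abelian group $S_1$, its minimal generating set $g_1,\dots,g_k$, and the sign vector $\delta=(\delta_1,\dots,\delta_k)$, obtaining $W\in\PauliN$ with $Wg_vW^{-1} = (-1)^{\delta_v}g_v = s_v$ for every $v$. Since conjugation is an automorphism, $WS_1W^{-1} = \langle s_1,\dots,s_k\rangle = S_2$, and taking $W^{-1}$ in place of $W$ yields $S_1 = WS_2W^{-1}$, as required. The one step demanding genuine care — the main obstacle — is the reduction $g_v = \pm s_v$, i.e. ruling out true factors of $i$; this is precisely where the order-two (Hermiticity) property of stabilizer elements together with $-I\notin S_i$ is essential, and it is also what guarantees that the matched elements $s_v$ generate all of $S_2$ rather than a proper subgroup, so that the single Pauli $W$ produced by Lemma~\ref{deltaP} simultaneously corrects the signs of a full generating set.
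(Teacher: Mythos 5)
Your proof is correct, and it uses the same endpoints as the paper --- the forward direction via Lemma~\ref{csconj}, the reduction via Lemma~\ref{csiicen}, and the final conjugation via Lemma~\ref{deltaP} --- but the middle step is organized differently. The paper fixes minimal generating sets $\langle g_1,\dots,g_k\rangle=S_1$ and $\langle h_1,\dots,h_k\rangle=S_2$, expands each $g_v=(-1)^{\delta_v}h_1^{\lambda_{v1}}\cdots h_k^{\lambda_{vk}}$ through an invertible matrix $\lambda$ over $\bbZ_2$, and then applies Lemma~\ref{deltaP} to the $h_v$ with a sign vector obtained by solving a linear system (the paper writes $w=\lambda\delta^T$, which should really be $w=\lambda^{-1}\delta^T$ for the signs $\sum_j\lambda_{vj}w_j=\delta_v$ to come out right --- a slip your route never encounters). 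You instead match each $g_v$ directly to an element $s_v=(-1)^{\delta_v}g_v\in S_2$, which eliminates the matrix $\lambda$ and any linear algebra, but creates an obligation the paper's version gets for free: you must show that $s_1,\dots,s_k$ actually generate all of $S_2$, which you discharge correctly with the quotient argument through $M/\langle iI\rangle$ (using $S_2\cap\langle iI\rangle=\{I\}$). You also apply Lemma~\ref{deltaP} to $S_1$ rather than to $S_2$ and then replace $W$ by $W^{-1}$, which is a harmless symmetric variant. Your order-two argument ruling out genuine factors of $i$ ($g_v^2=I$ forces $b_v\in\{0,2\}$) is exactly the right justification and is implicit rather than spelled out in the paper. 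Net effect: the paper's proof trades your quotient argument for a change-of-basis computation; yours is slightly more self-contained and avoids the sign-vector bookkeeping where the paper's writeup stumbles.
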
\label{appendix:lemma:LPeqSE}
\begin{proof}
Suppose that $S_1=WS_2W^{-1}$ for some $W\in\PauliN$. Then by Lemma~\ref{csconj} $CS(S_1)=CS(WS_2W^{-1})=CS(S_2)$ and so $S_1\sim_{cs}S_2$. Now suppose that
$S_1\sim_{cs}S_2$ and that $S_1=\langle g_1,g_2,\dots,g_k\rangle$ and $S_2=\langle h_1,h_2,\dots,h_t\rangle$ are minimal generating sets with each of the $g_v$ and $h_v$ being of order two. By Lemma~\ref{csiicen} we must have $k=t$ and that there exists some invertible matrix  $\lambda$ over $\bbZ_2$ and a  $\delta=(\delta_1,\delta_2,\dots,\delta_k)\in\bbZ_2^k$ such that 
\begin{align*}
    g_1 &= (-1)^{\delta_1}h_1^{\lambda_{11}}h_2^{\lambda_{12}}\dots h_k^{\lambda_{1k}} \\
    g_2 &= (-1)^{\delta_2}h_1^{\lambda_{21}}h_2^{\lambda_{22}}\dots h_k^{\lambda_{2k}} \\
    &\vdots \\
    g_k &= (-1)^{\delta_k}h_1^{\lambda_{k1}}h_2^{\lambda_{k2}}\dots h_k^{\lambda_{kk}} \\
\end{align*}
\noindent Let $w=(w_1,w_2,\dots, w_k)^T = \lambda \delta^T$ and let $P\in\PauliN$ such that  $Ph_vP^{-1}=(-1)^{w_v}h_v$
for all $v$. Such a $P$ exists by Lemma~\ref{deltaP}. Then 
$
Ph_1^{\lambda_{v1}}h_2^{\lambda_{v2}}\dots h_k^{\lambda_{vk}}P^{-1} = (-1)^{\delta_v}h_1^{\lambda_{v1}}h_2^{\lambda_{v2}}\dots h_k^{\lambda_{vk}}.
$ Thus
\[
S_1=\langle g_1,g_2,\dots,g_k\rangle = P\langle h_1,h_2,\dots,h_t\rangle P^{-1} = PS_2P^{-1}.
\]
\end{proof}

\section{Permutation Equivalence}

Code set equivalence is not the only natural condition for two codes to be equivalent. It is also natural to define two codes to be equivalent if one code can be obtained from another by simply permuting the qubit labels.

\begin{definition}
Let $g\in\PauliN$ and suppose that $g=P_1^{\delta_1}P_2^{\delta_2}\cdots{}P_n^{\delta_n}$ where $\delta_j\in\bbZ_2$, $P\in\{I,X,Z,Y\}$ and
\[
P_j = I^{\otimes (j-1)}\otimes P \otimes I^{\otimes (n-j)}
\]
\noindent is the operator $P$ acting on qubit $j$. Suppose that $\pi\in S_n$ then the action of $\pi$ on $g$ is defined by
\[
g^\pi = P_{\pi(1)}^{\delta_1}P_{\pi(2)}^{\delta_2}\cdots{}P_{\pi(n)}^{\delta_n}.
\]
\end{definition}

\begin{lemma}
Let $G\leq \PauliN$ and $\pi\in S_n$. Then $Z(G^\pi)=Z(G)^\pi$ and $CS(G^\pi)=CS(H)^\pi$.
\end{lemma}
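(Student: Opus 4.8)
The plan is to observe that the qubit-permutation action $g \mapsto g^\pi$ is implemented by conjugation by the unitary $U_\pi$ that permutes the tensor factors of $\calH$, so it is a commutation-preserving automorphism of $\PauliN$; both assertions then follow by transporting structure through this automorphism. (I read the second equation as $CS(G^\pi) = CS(G)^\pi$, the symbol $H$ appearing to be a typo for $G$.)

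First I would establish the elementary properties of $\phi_\pi : g \mapsto g^\pi$. Directly from the definition of $g^\pi$, the map $\phi_\pi$ is a bijection on $\PauliN$ with inverse $\phi_{\pi^{-1}}$, it is a homomorphism, and it fixes the scalars $\pm I$ and $\pm iI$. Writing $g^\pi = U_\pi g U_\pi^{-1}$ makes both the homomorphism property and the preservation of commutation transparent: for $g,h \in \PauliN$ one has $g^\pi h^\pi = (gh)^\pi$ and $h^\pi g^\pi = (hg)^\pi$, so $g^\pi, h^\pi$ commute iff $g, h$ do. Consequently $\phi_\pi$ restricts to a group isomorphism $G \to G^\pi$, and since any isomorphism sends the center to the center, $Z(G^\pi) = \phi_\pi(Z(G)) = Z(G)^\pi$. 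This proves the first equality.

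For the code sets, I would first identify the stabilizer group attached to $G^\pi$. Because $\pi$ fixes $iI$ and commutes with the formation of generated subgroups, $\langle G^\pi, iI\rangle = \langle G, iI\rangle^\pi$; applying the first part gives
\[
Z(\langle G^\pi, iI\rangle) = Z(\langle G, iI\rangle)^\pi = \langle iI, S\rangle^\pi = \langle iI, S^\pi\rangle,
\]
where $S = S_G$ is the stabilizer group with $Z(\langle G, iI\rangle) = \langle iI, S\rangle$. Thus $CS(G^\pi)$ is built from the joint eigenspaces of $S^\pi$. I would then define the character bijection $\widehat{S} \to \widehat{S^\pi}$ by $\chi \mapsto \chi^\pi$, where $\chi^\pi(R^\pi) = \chi(R)$; this is well defined since $\phi_\pi$ is an isomorphism $S \to S^\pi$. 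The crux is the eigenspace identity $\calH_{\chi^\pi} = U_\pi \calH_\chi$: a vector $|\phi\rangle$ lies in $\calH_{\chi^\pi}$ iff $R^\pi |\phi\rangle = \chi(R)|\phi\rangle$ for all $R \in S$, and substituting $R^\pi = U_\pi R U_\pi^{-1}$ rewrites this as $R\,(U_\pi^{-1}|\phi\rangle) = \chi(R)\,(U_\pi^{-1}|\phi\rangle)$, i.e. $U_\pi^{-1}|\phi\rangle \in \calH_\chi$. Taking the union over all characters then yields
\[
CS(G^\pi) = \{\calH_{\chi^\pi} : \chi \in \widehat{S}\} = \{U_\pi \calH_\chi : \chi \in \widehat{S}\} = CS(G)^\pi.
\]

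The computations here are all routine; the only point that needs care is the bookkeeping of phases, namely ensuring that conjugation by $U_\pi$ introduces no spurious scalars (which is why it matters that $\phi_\pi$ fixes $iI$ and that $S$ is generated by order-two operators) and that the eigenvalue $\chi(R)$ transports correctly under the relabeling $R \mapsto R^\pi$. Once the realization $g^\pi = U_\pi g U_\pi^{-1}$ is in hand, everything else is a direct transport of structure through an automorphism, so I do not anticipate a genuine obstacle.
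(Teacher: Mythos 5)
Your proof is correct, and it is simply the fully worked-out version of what the paper asserts: the paper's entire proof of this lemma is ``This follows almost by definition.'' Realizing $g\mapsto g^\pi$ as conjugation by the qubit-permutation unitary $U_\pi$, transporting centers through the resulting automorphism, and matching eigenspaces via $\calH_{\chi^\pi}=U_\pi\calH_\chi$ is exactly the argument the paper leaves implicit (including your reading of $CS(H)^\pi$ as a typo for $CS(G)^\pi$), so there is no substantive difference to compare.
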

\begin{proof}
This follows almost by definition.
\end{proof}

\begin{definition}
Let $G$ and $H$ be two gauge groups of $\PauliN$. The groups $G$ and $H$ are said to be permutation equivalent if there exists a $\pi\in S_n$ such that $G=H^\pi$ where $H^\pi=\langle h^\pi: h\in H\rangle$. In this case we write $G\sim_\pi H$.
\end{definition}

\begin{lemma}
If $G,H\leq \PauliN$ and if $G\sim_\pi H$ then $G\sim_{cs}H^\pi$.
\end{lemma}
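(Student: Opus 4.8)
The plan is to unwind the definition of permutation equivalence and then reduce the conclusion to the reflexivity of code-set equivalence. First I would invoke the hypothesis: by the definition of $\sim_\pi$, the assumption $G\sim_\pi H$ supplies a permutation $\pi\in S_n$ for which $G = H^\pi$ as subgroups of $\PauliN$. This is the only place where the hypothesis is actually used, and it does the real work by collapsing the two objects in the conclusion to a single subgroup.

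Next I would observe that $\sim_{cs}$ is an equivalence relation on subgroups of $\PauliN$ (it is defined as equality of the associated code sets, which is manifestly reflexive, symmetric, and transitive). Since $G$ and $H^\pi$ are literally the same subgroup, they determine the same code set, $CS(G) = CS(H^\pi)$, and therefore $G\sim_{cs} H^\pi$ by definition. Thus the lemma follows immediately once the equality $G = H^\pi$ is in hand.

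The only point requiring care is a bookkeeping one, and it is also where the statement might look deceptively nontrivial: one must ensure that the permutation $\pi$ appearing in the conclusion $H^\pi$ is taken to be the very permutation witnessing $G\sim_\pi H$, not an unrelated one. With that identification fixed, the argument is purely definitional. I do not expect any genuine obstacle here. If a less direct route were preferred, one could instead combine the preceding lemma $CS(G^\pi)=CS(G)^\pi$ with the characterization in Lemma~\ref{csiicen} to compare the groups $\langle iI, Z(G)\rangle$ and $\langle iI, Z(H^\pi)\rangle$, but this machinery is unnecessary because the two groups coincide on the nose.
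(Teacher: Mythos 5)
Your proposal is correct and follows exactly the paper's own argument: unwind $G\sim_\pi H$ to get the equality $G=H^\pi$, from which $CS(G)=CS(H^\pi)$ and hence $G\sim_{cs}H^\pi$ is immediate by reflexivity. The remark about fixing the same witnessing permutation $\pi$ in the conclusion is a fine clarification but adds nothing beyond what the paper's one-line proof already does.
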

\begin{proof}
If $G\sim_\pi H$ then $G=H^\pi$ for some $\pi\in S_n$. Hence $CS(G)=CS(H^\pi)$ and so $G\sim_{cs} H^\pi$.
\end{proof}

\section{LC$\Pi$ Equivalent Codes}
This appendix provides the details on defining LC$\Pi$ equivalence over symplectic matrices. Since $\Cl{1}/\langle\zeta_8I,\Pauli_1\rangle\simeq  \sym{3}$ we can define the group action of $\sym{3}^n$ on gauge groups of $\PauliN$ in the following way. Let $W\in \sym{3}^n$ and let $C_W$ be the corresponding operator in $(\Cl{1}/\langle\zeta_8I,\Pauli_1\rangle)^{\otimes n}$ under the isomorphism $\Cl{1}/\langle\zeta_8I,\Pauli_1\rangle\simeq  \sym{3}$. Now define the action of $W$ on a gauge group via the conjugation action of $W_C$. The combined group action 
\[
(W,\pi)\calL = W\calL^\pi
\]
\noindent then turns the set $\sym{3}^n\times \sym{n}$  into the group $\sym{3}^n \rtimes_{\psi} \sym{n} = \sym{3}\wr\sym{n}$.

\begin{definition}[LCP Equivalence for Gauge Groups]
Let $\calG$ and $\calL$ be two gauge groups of $\PauliN$. The groups $\mathcal{G}$ and $\mathcal{L}$ are said to be locally permutation Clifford (LCP) equivalent, written as $\calG\sim \calL$ if, and only if,
\[
\calG = (W, \pi) \calL
\]
\noindent for some $(W, \pi)\in \sym{3}\wr \sym{n}$
\end{definition}

The above equivalence definition must be written in terms of symplectic matrices. For this we need three group actions, melded into one, that represent the action of matrices representing a local Clifford conjugation, the action of matrices representing a change of basis (since using a symplectic matrix representation requires fixing a basis) and the action of matrices representing a permutation of qubits.

Change of basis can be achieved by left multiplication of a symplectic matrix by an matrix from $\textrm{GL}_v(\bbF_2)$. That is the group $\textrm{GL}_v(\bbF_2)$ acts on $\bbF_2^{v\times 2n, v}$ via multiplication on the left. For the permutation of qubits let $\pi\in\sym{n}$ and set
\[
[\pi] = \textrm{diag}([\pi]_c,[\pi]_c) \in \textrm{GL}_{2n}(\bbF_2)
\]
\noindent where $[\pi]_c\in\textrm{GL}_{n}(\bbF_2)$ is the matrix representation of $\pi$ such that multiplication by $[\pi]_c$ on the right of a matrix $M$ will permute the columns of $M$ according to the permutation $\pi$. We now define the action of $\sym{n}$ on $\bbF_2^{v\times 2n, v}$ by
\[
\pi M := M [\pi^{-1}]
\]
\noindent for $M\in\bbF_2^{k\times 2n, v}$ and $\pi\in\sym{n}$.

As before we reduce the complexity and size of the local Clifford group $\Cln{1}{n}$ via the reduction 
\[
\Cln{1}{n} \rightarrow (\Cl{1}/\langle\zeta_8I,\mathcal{P}_1\rangle)^{\otimes n} \simeq \sym{3}^n
\]
\noindent as coefficients of the Paulis are not important in this situation. Since $\Cl{1}/\langle\zeta_8I,\mathcal{P}_1\rangle$ acts as $\textrm{Sym}(X,Z,Y)$ we need only construct matrices in $\textrm{GL}_{2n}(\bbF_2)$ that perform the same action on the symplectic matrix representation of the Pauli subgroup. This can be done via the isomorphism:
\begin{align*}
    (X) &\xrightarrow{\tau} \begin{bmatrix} 0 & 1 \\ 1 & 0 \end{bmatrix} \\
    (XZY) &\xrightarrow{\tau} \begin{bmatrix} 0 & 1 \\ 1 & 1 \end{bmatrix}
\end{align*}
\noindent which works for a single qubit. This idea can easily be extended to $n$ qubits. Let
$M=\begin{bmatrix} a & b \\ c & d \end{bmatrix}$ be a matrix in the above matrix group representation for $\textrm{Sym}(X,Z,Y)$.  Then for some integer $j$ the matrix

\[
\begin{tikzpicture}[
style1/.style={
  matrix of math nodes,
  every node/.append style={text width=#1,align=center,minimum height=#1},
  nodes in empty cells,
  left delimiter=[,
  right delimiter=],
  }
]
\matrix[style1=0.45cm] (1mat)
{
  & & & & & & & & & \\
  & & & & & & & & & \\
  & & & & & & & & & \\
  & & & & & & & & & \\
  & & & & & & & & & \\
  & & & & & & & & & \\
  & & & & & & & & & \\
  & & & & & & & & & \\
  & & & & & & & & & \\
  & & & & & & & & & \\
  & & & & & & & & & \\
  & & & & & & & & & \\
  & & & & & & & & & \\
};

\draw[dashed]
  (1mat-1-3.north) -- (1mat-13-3.south);
\draw[dashed]
  (1mat-1-8.north) -- (1mat-13-8.south);

\draw[dashed]
  (1mat-3-1.west) -- (1mat-3-10.east);
\draw[dashed]
  (1mat-10-1.west) -- (1mat-10-10.east);

\draw[dashed]
  (1mat-3-4.center) -- (1mat-10-4.center);
\draw[dashed]
  (1mat-3-7.center) -- (1mat-10-7.center);

\draw[dashed]
  (1mat-4-3.south) -- (1mat-4-8.south);
\draw[dashed]
  (1mat-9-3.north) -- (1mat-9-8.north);

\node[font=\Large] 
  at (1mat-2-2.north west) {$I_{i-1}$};
  
\node[font=\Large] 
  at (1mat-4-4.north west) {$a$};

\node[font=\Large] 
  at (1mat-4-7.north east) {$b$};  
  
\node[font=\Large] 
  at (1mat-9-3.east) {$c$};
  
\node[font=\Large] 
  at (1mat-9-7.east) {$d$};
  
\node[font=\Large] 
  at (1mat-6-6.west) {$I_{n-1}$};
  
\node[font=\Large] 
  at (1mat-12-10.west) {$I_{n-i}$};
\end{tikzpicture}
\]

\noindent represents the matrix in $\textrm{GL}_{2n}(\bbF_2)$ that applies $M$ to qubit $j$. Thus, the action of any Clifford operator in $\Cl{1}/\langle\zeta_8I,\mathcal{P}_1\rangle$, via conjugation, can be represented by a matrix in  $\textrm{GL}_{2n}(\bbF_2)$ via a product of the matrices defined above. This action is multiplication on the right.


These three actions can be put together to form a single group. Let $G=(G_1,G_2,\dots,G_n)\in\mathcal{S}_3^n$, $M\in\textrm{GL}_{v}(\bbF_2)$. We define the group action of $\textrm{GL}_{v}(\bbF_2)\times \sym{3}^n$ on $\bbF_2^{v\times 2n, v}$ as 
\[
(m,G)\Gamma = M\Gamma[G]
\]
\noindent where $\Gamma\in\bbF_2^{v\times 2n, v}$ where the group law on $\textrm{GL}_{v}(\bbF_2)\times \sym{3}^n$ is coordinate-wise multiplication. Let $\pi\in \sym{n}$ and define multiplication on the set $\textrm{GL}_{v}(\bbF_2)\times \sym{3}^n \times \sym{n}$ as
\[
((M_2,H);\pi_2)\circ((M_2,G);\pi_1) := ((M_1M_2, H\psi_{\pi_2}(G));\pi_2\pi_1)
\]
\noindent where  $\psi:\sym{n}\rightarrow \textrm{Aut}(\textrm{GL}_k(\bbF_2)\times \sym{3}^{n})$ is the homomorphism that acts as the identity on $\textrm{GL}_k(\bbF_2)$ and that is defined by, abusing notation slightly, 
\[
\psi(\pi)(M,G) = \psi_{\pi}(M, G) = (M, \psi_\pi(G))
\]
\noindent where 
\[
\psi_{\pi}(G_1,G_2,\dots,G_n) = (G_{\pi^{-1}(1)},\dots,G_{\pi^{-1}(n)}).
\]
\noindent The set $\sym{n} \times \sym{3}^n \times \textrm{GL}_k(\bbF_2)$  together with the multiplication $\circ$ define the group
$(\textrm{GL}_k(\bbF_2) \times \sym{3}^n) \rtimes_\psi \sym{n}$.  We then define its action on $\bbF_2^{v\times 2n,v}$ as
\[
((M,G);\pi) \Gamma := (M, G) (\pi\Gamma) = M \Gamma [\pi^{-1}] [G]
\]
\noindent where $[\pi^{-1}]$ and $[G]$ are the matrix representations of $\pi^{-1}$ and $G$ in $\textrm{GL}_{2n}(\bbF_2)$ as defined above.

We can now give the definition of equivalence in terms of symplectic matrices.

\begin{definition}[LC$\Pi$ for Symplectic Matrices]
Let $\Gamma$ and $\Omega$ be matrices of $\bbF_2^{k\times 2n, v}$. Then $\Gamma$ and $\Omega$ are said to be locally permutation Clifford equivalent, written as $\Gamma\sim\Omega$ if, and only if, 
\[
\Gamma = ((m,G);\pi) \Omega,
\]
\noindent for some $((m,G);\pi) \in (\textrm{GL}_k(\bbF_2) \times \sym{3}^n) \rtimes_\psi \sym{n}$.
\end{definition}

\noindent These two equivalence relations are then related as $\mathcal{A},\mathcal{B}\leq\mathcal{P}_n$ then $\mathcal{A}\sim\mathcal{B}$ if, and only if, $\kappa(\mathcal{A})\sim\kappa(\mathcal{B})$ where $\kappa(\calA)$ is the symplectic matrix representation of $\calA$ relative to the standard basis for $\PauliN$.

\section{Qiskit QEC database}\label{qiskit-qec}

The full catalogue of equivalence classes for $n=1,2,...,9$ are available in Qiskit-QEC~\cite{qiskitqec} as a code library. 
In Qiskit-QEC varous code libraries are managed by a \texttt{CodeLibrarian}.  These databases are meant to be analogous
to GAP's~\cite{GAP4} small group database or to Sloan's encyclopedia of integer sequences~\cite{oeis}. The \texttt{CodeLibrarian} is started by 
issuing the python command  (from within a Python shell or Juypter Notebook):

\begin{verbatim}
import qiskit_qec.codes.codebase as cb
\end{verbatim}

The syntax for the codebases is similar for GAP's small group database. You can search for a given code or set of code
and return the codes themselves or just the information about the codes. To find a specific code you can use the \texttt{small\_code} 
method and provide it with suitable parameters. 
\begin{verbatim}
# Load the code with n=5, k=0 and index=4
code = cb.small_code(5, 0, 4)
code.generators
PauliList(['Z1Z2', 'Z0Z4', 'Y1Y2X3', 'Z0Z1Z3', 'Y0X3Y4'])
code.generators.matrix.astype(int)
array([[0, 0, 0, 0, 0, 0, 1, 1, 0, 0],
       [0, 0, 0, 0, 0, 1, 0, 0, 0, 1],
       [0, 1, 1, 1, 0, 0, 1, 1, 0, 0],
       [0, 0, 0, 0, 0, 1, 1, 0, 1, 0],
       [1, 0, 0, 1, 1, 1, 0, 0, 0, 1]])
\end{verbatim}
\noindent If you only want the information for the code then set the \texttt{info\_only} flag to True.
\begin{verbatim}
# Load the code with n=5, k=0 and index=4 but only fetch the information on it
code = cb.small_code(5, 0, 4, info_only=True)
print(code)
aut_group_generators : ['S1S2', '(1,2)', 'S0S4', '(0,1)(2,4)']
aut_group_size       : 32
code_type            : StabSubSystemCode
d                    : 2
index                : 4
is_css               : 1
is_decomposable      : 0
is_degenerate        : 0
is_gf4linear         : 0
is_subsystem         : 1
isotropic_generators : ['Z0Z4', 'Z1Z2', 'Z1Z3Z4', 'X1X2X3', 'X0X3X4']
k                    : 0
logical_ops          : []
n                    : 5
uuid                 : ee84af51-7ef1-4540-b635-a6377aa49596
weight_enumerator    : [1, 0, 2, 8, 13, 8]
\end{verbatim}

\noindent There is also a method to find sets of codes using the available fields for a database. It uses the \texttt{all\_small\_codes} method as follows:
\begin{verbatim}
# Load all small codes information with n=9, k=3, d=3 that are CSS, are not decomposable
# and then print only the codes with automorphism groups sizes at least 300.
codes = cb.all_small_codes(9, 3, d=3, info_only=True, is_decomposable=False, is_css=False)
for code in codes:
    if code['aut_group_size'] > 300:
        print(
            f" [[{code['n'],code['k'],code['d']}]]:{code['index']} \
                aut_group_size={code['aut_group_size']}"
        )
 [[(9, 3, 3)]]:170234 aut_group_size=324
 [[(9, 3, 3)]]:170235 aut_group_size=1296
\end{verbatim}
\noindent Codes classes are stored in groups associated to a given $n$ and $k$. If you want to find codes over different $n$ and $k$ simply load those separately. For example, if you want to find the number of codes for each $n$ in the database:

\begin{verbatim}
print(f"Total number of codes for each n:")
summary = 0
for n in range(1,10):
    total = 0
    num = 0
    for k in range(n+1):
        codes = cb.all_small_codes(n, k, info_only=True, list_only=True)
        num = len(codes)
        total += num
    print(f" n = {n} : {total}")
    summary += total
print(f"The database contains {summary} number of code equivalence classes")

Total number of codes for each n:
 n = 1 : 2
 n = 2 : 5
 n = 3 : 12
 n = 4 : 35
 n = 5 : 112
 n = 6 : 474
 n = 7 : 2757
 n = 8 : 28642
 n = 9 : 721967
The database contains 754006 number of code equivalence classes.
\end{verbatim}

\section{Tables}

This section contains detailed tables from the database for small values of $n$ that not included in the main section.

\begin{table}[!htbp]
    \centering
    \begin{tabular}{c|cccccccccc}
        $n\backslash k$ & 0 & 1 & 2 & 3 & 4 & 5 & 6 & 7 & 8 & 9 \\ 
        \midrule
        1 & 0 & 0 & 0 & 0 & 0 & 0 & 0 & 0 & 0 & 0 \\ 
        2 & 1 & 1 & 1 & 0 & 0 & 0 & 0 & 0 & 0 & 0 \\ 
        3 & 2 & 3 & 2 & 1 & 0 & 0 & 0 & 0 & 0 & 0 \\ 
        4 & 4 & 7 & 7 & 3 & 1 & 0 & 0 & 0 & 0 & 0 \\ 
        5 & 7 & 19 & 22 & 13 & 4 & 1 & 0 & 0 & 0 & 0 \\ 
        6 & 15 & 52 & 78 & 56 & 22 & 5 & 1 & 0 & 0 & 0 \\ 
        7 & 33 & 164 & 321 & 296 & 136 & 35 & 6 & 1 & 0 & 0 \\ 
        8 & 81 & 604 & 1682 & 2005 & 1129 & 308 & 52 & 7 & 1 & 0 \\ 
        9 & 235 & 2985 & 13444 & 22273 & 15623 & 4593 & 676 & 75 & 8 & 1 \\ 
    \end{tabular}
    \caption{Number of equivalence classes of each $[[n,k]]$ that are decomposable.}\label{table:totalcountfinedecom}
\end{table}

\begin{table}[!htbp]
\centering
\begin{tabular}{c|llllllllll}
\toprule
$n/k$ & 0 & 1 & 2 & 3 & 4 & 5 & 6 & 7 & 8 & 9 \\ 
\midrule
 1 &  1 & 1 & - & - & - & - & - & - & - & - \\ 
 2 &  2 & 1 & 1 & - & - & - & - & - & - & - \\ 
 3 &  2 & 1 & 1 & 1 & - & - & - & - & - & - \\ 
 4 &  2 & 2 & 2 & 1 & 1 & - & - & - & - & - \\ 
 5 &  3 & 3 & 2 & 1 & 1 & 1 & - & - & - & - \\ 
 6 &  4 & 3 & 2 & 2 & 2 & 1 & 1 & - & - & - \\ 
 7 &  3 & 3 & 2 & 2 & 2 & 1 & 1 & 1 & - & - \\ 
 8 &  4 & 3 & 3 & 3 & 2 & 2 & 2 & 1 & 1 & - \\ 
 9 &  4 & 3 & 3 & 3 & 2 & 2 & 2 & 1 & 1 & 1 \\ 
\end{tabular}
\caption{Maximum distance for all $[[n,k,d]]$ codes.}
\end{table}

\begin{table}[!htbp]
\centering
\begin{tabular}{c|llllllllll}
\toprule
$n/k$ & 0 & 1 & 2 & 3 & 4 & 5 & 6 & 7 & 8 & 9 \\ 
\midrule
 1 &  1 & 1 & - & - & - & - & - & - & - & - \\ 
 2 &  2 & 1 & - & - & - & - & - & - & - & - \\ 
 3 &  2 & 1 & 1 & - & - & - & - & - & - & - \\ 
 4 &  2 & 2 & 2 & 1 & - & - & - & - & - & - \\ 
 5 &  3 & 3 & 2 & 1 & 1 & - & - & - & - & - \\ 
 6 &  4 & 3 & 2 & 2 & 2 & 1 & - & - & - & - \\ 
 7 &  3 & 3 & 2 & 2 & 2 & 1 & 1 & - & - & - \\ 
 8 &  4 & 3 & 3 & 3 & 2 & 2 & 2 & 1 & - & - \\ 
 9 &  4 & 3 & 3 & 3 & 2 & 2 & 2 & 1 & 1 & - \\ 
\end{tabular}
\caption{Maximum distance for all indecomposable $[[n,k,d]]$ codes.}
\end{table}

\begin{table}[!htbp]
\centering
\begin{tabular}{c|llllllllll}
\toprule
$n/k$ & 0 & 1 & 2 & 3 & 4 & 5 & 6 & 7 & 8 & 9 \\ 
\midrule
 1 &  1 & 1 & - & - & - & - & - & - & - & - \\ 
 2 &  2 & 1 & 1 & - & - & - & - & - & - & - \\ 
 3 &  2 & 1 & 1 & 1 & - & - & - & - & - & - \\ 
 4 &  2 & 2 & 2 & 1 & 1 & - & - & - & - & - \\ 
 5 &  2 & 2 & 2 & 1 & 1 & 1 & - & - & - & - \\ 
 6 &  3 & 2 & 2 & 2 & 2 & 1 & 1 & - & - & - \\ 
 7 &  3 & 3 & 2 & 2 & 2 & 1 & 1 & 1 & - & - \\ 
 8 &  4 & 3 & 2 & 2 & 2 & 2 & 2 & 1 & 1 & - \\ 
 9 &  3 & 3 & 2 & 2 & 2 & 2 & 2 & 1 & 1 & 1 \\ 
\end{tabular}
\caption{Maximum distance for all CSS $[[n,k,d]]$ codes.}
\end{table}

\begin{table}[!htbp]
\centering
\begin{tabular}{c|llllllllll}
\toprule
$n/k$ & 0 & 1 & 2 & 3 & 4 & 5 & 6 & 7 & 8 & 9 \\ 
\midrule
 1 &  1 & 1 & - & - & - & - & - & - & - & - \\ 
 2 &  2 & 1 & - & - & - & - & - & - & - & - \\ 
 3 &  2 & 1 & 1 & - & - & - & - & - & - & - \\ 
 4 &  2 & 2 & 2 & 1 & - & - & - & - & - & - \\ 
 5 &  2 & 2 & 2 & 1 & 1 & - & - & - & - & - \\ 
 6 &  3 & 2 & 2 & 2 & 2 & 1 & - & - & - & - \\ 
 7 &  3 & 3 & 2 & 2 & 2 & 1 & 1 & - & - & - \\ 
 8 &  4 & 3 & 2 & 2 & 2 & 2 & 2 & 1 & - & - \\ 
 9 &  3 & 3 & 2 & 2 & 2 & 2 & 2 & 1 & 1 & - \\ 
\end{tabular}
\caption{Maximum distance for all indecomposable CSS $[[n,k,d]]$ codes.}
\end{table}

\begin{table}[!htbp]
\centering
\begin{tabular}{c|llllllllll}
\toprule
$n/k$ & 0 & 1 & 2 & 3 & 4 & 5 & 6 & 7 & 8 & 9 \\ 
\midrule
 1 &  - & - & - & - & - & - & - & - & - & - \\ 
 2 &  2 & - & - & - & - & - & - & - & - & - \\ 
 3 &  - & 1 & - & - & - & - & - & - & - & - \\ 
 4 &  2 & - & 2 & - & - & - & - & - & - & - \\ 
 5 &  - & 3 & - & 1 & - & - & - & - & - & - \\ 
 6 &  4 & - & 2 & - & 2 & - & - & - & - & - \\ 
 7 &  - & 3 & - & 2 & - & 1 & - & - & - & - \\ 
 8 &  4 & - & 3 & - & 2 & - & 2 & - & - & - \\ 
 9 &  - & 3 & - & 3 & - & 2 & - & 1 & - & - \\ 
\end{tabular}
\caption{Maximum distance for all GF(4)-linear $[[n,k,d]]$ codes.}
\end{table}

\begin{table}[!htbp]
\centering
\begin{tabular}{c|llllllllll}
\toprule
$n/k$ & 0 & 1 & 2 & 3 & 4 & 5 & 6 & 7 & 8 & 9 \\ 
\midrule
 1 &  - & - & - & - & - & - & - & - & - & - \\ 
 2 &  2 & - & - & - & - & - & - & - & - & - \\ 
 3 &  - & - & - & - & - & - & - & - & - & - \\ 
 4 &  - & - & 2 & - & - & - & - & - & - & - \\ 
 5 &  - & 3 & - & - & - & - & - & - & - & - \\ 
 6 &  4 & - & 2 & - & 2 & - & - & - & - & - \\ 
 7 &  - & 3 & - & 2 & - & - & - & - & - & - \\ 
 8 &  4 & - & 3 & - & 2 & - & 2 & - & - & - \\ 
 9 &  - & 3 & - & 3 & - & 2 & - & - & - & - \\ 
\end{tabular}
\caption{Maximum distance for all indecomposable GF(4)-linear $[[n,k,d]]$ codes.}
\end{table}

\begin{landscape}
\begin{table}
\centering
\begingroup
    \footnotesize
\begin{tabular}{l|llll|lll|lll|lll|ll|ll|ll|l|l|l}
\toprule
$k$ & \multicolumn{4}{c|}{0} & \multicolumn{3}{c|}{1} & \multicolumn{3}{c|}{2} & \multicolumn{3}{c|}{3} & \multicolumn{2}{c|}{4} & \multicolumn{2}{c|}{5} & \multicolumn{2}{c|}{6} & \multicolumn{1}{c|}{7} & \multicolumn{1}{c|}{8} & \multicolumn{1}{c}{9} \\ 
$n/d$ & 1 & 2 & 3 & 4 & 1 & 2 & 3 & 1 & 2 & 3 & 1 & 2 & 3 & 1 & 2 & 1 & 2 & 1 & 2 & 1 & 1 & 1 \\ 
\midrule
1 & 1 & - & - & - & 1 & - & - & - & - & - & - & - & - & - & - & - & - & - & - & - & - & - \\ 
2 & 1 & 1 & - & - & 2 & - & - & 1 & - & - & - & - & - & - & - & - & - & - & - & - & - & - \\ 
3 & 2 & 1 & - & - & 5 & - & - & 3 & - & - & 1 & - & - & - & - & - & - & - & - & - & - & - \\ 
4 & 3 & 3 & - & - & 11 & 2 & - & 10 & 1 & - & 4 & - & - & 1 & - & - & - & - & - & - & - & - \\ 
5 & 6 & 4 & 1 & - & 29 & 6 & 1 & 37 & 3 & - & 19 & - & - & 5 & - & 1 & - & - & - & - & - & - \\ 
6 & 11 & 13 & 1 & 1 & 78 & 35 & 2 & 156 & 29 & - & 104 & 5 & - & 31 & 1 & 6 & - & 1 & - & - & - & - \\ 
7 & 26 & 29 & 4 & - & 260 & 169 & 19 & 834 & 241 & - & 785 & 67 & - & 260 & 7 & 48 & - & 7 & - & 1 & - & - \\ 
8 & 59 & 107 & 11 & 5 & 1023 & 1170 & 178 & 6266 & 3724 & 20 & 9304 & 2117 & 1 & 3699 & 264 & 603 & 11 & 70 & 1 & 8 & 1 & - \\ 
9 & 182 & 416 & 69 & 8 & 5777 & 10742 & 3609 & 78567 & 98027 & 4445 & 222749 & 130598 & 222 & 122541 & 24117 & 17677 & 768 & 1331 & 13 & 99 & 9 & 1 \\ 
\bottomrule
\end{tabular}
\caption{Number of equivalent $[[n,k,d]]$ codes.}
\endgroup
\end{table}

\begin{table}
\centering
\begingroup
    \footnotesize
\begin{tabular}{l|llll|lll|lll|lll|ll|ll|ll|l|l|l}
\toprule
$k$ & \multicolumn{4}{c|}{0} & \multicolumn{3}{c|}{1} & \multicolumn{3}{c|}{2} & \multicolumn{3}{c|}{3} & \multicolumn{2}{c|}{4} & \multicolumn{2}{c|}{5} & \multicolumn{2}{c|}{6} & \multicolumn{1}{c|}{7} & \multicolumn{1}{c|}{8} & \multicolumn{1}{c}{9} \\ 
$n/d$ & 1 & 2 & 3 & 4 & 1 & 2 & 3 & 1 & 2 & 3 & 1 & 2 & 3 & 1 & 2 & 1 & 2 & 1 & 2 & 1 & 1 & 1 \\ 
\midrule
1 & 1 & - & - & - & 1 & - & - & - & - & - & - & - & - & - & - & - & - & - & - & - & - & - \\ 
2 & - & 1 & - & - & 1 & - & - & - & - & - & - & - & - & - & - & - & - & - & - & - & - & - \\ 
3 & - & 1 & - & - & 2 & - & - & 1 & - & - & - & - & - & - & - & - & - & - & - & - & - & - \\ 
4 & - & 2 & - & - & 4 & 2 & - & 3 & 1 & - & 1 & - & - & - & - & - & - & - & - & - & - & - \\ 
5 & - & 3 & 1 & - & 12 & 4 & 1 & 16 & 2 & - & 6 & - & - & 1 & - & - & - & - & - & - & - & - \\ 
6 & - & 9 & 1 & 1 & 35 & 27 & 1 & 82 & 25 & - & 48 & 5 & - & 9 & 1 & 1 & - & - & - & - & - & - \\ 
7 & - & 22 & 4 & - & 140 & 128 & 16 & 545 & 209 & - & 494 & 62 & - & 125 & 6 & 13 & - & 1 & - & - & - & - \\ 
8 & - & 85 & 11 & 5 & 646 & 964 & 157 & 4858 & 3450 & 20 & 7373 & 2043 & 1 & 2579 & 255 & 295 & 11 & 18 & 1 & 1 & - & - \\ 
9 & - & 363 & 69 & 8 & 4337 & 9395 & 3411 & 69122 & 94048 & 4425 & 202670 & 128405 & 221 & 107191 & 23844 & 13095 & 757 & 656 & 12 & 24 & 1 & - \\ 
\bottomrule
\end{tabular}
\caption{Number of equivalent indecomposable $[[n,k,d]]$ codes.}
\endgroup
\end{table}
\end{landscape}

\begin{landscape}

\begin{table}
\centering
\begingroup
    \footnotesize
\begin{tabular}{l|llll|lll|ll|ll|ll|ll|ll|l|l|l}
\toprule
$k$ & \multicolumn{4}{c|}{0} & \multicolumn{3}{c|}{1} & \multicolumn{2}{c|}{2} & \multicolumn{2}{c|}{3} & \multicolumn{2}{c|}{4} & \multicolumn{2}{c|}{5} & \multicolumn{2}{c|}{6} & \multicolumn{1}{c|}{7} & \multicolumn{1}{c|}{8} & \multicolumn{1}{c}{9} \\ 
$n/d$ & 1 & 2 & 3 & 4 & 1 & 2 & 3 & 1 & 2 & 1 & 2 & 1 & 2 & 1 & 2 & 1 & 2 & 1 & 1 & 1 \\ 
\midrule
1 & 1 & - & - & - & 1 & - & - & - & - & - & - & - & - & - & - & - & - & - & - & - \\ 
2 & 1 & 1 & - & - & 2 & - & - & 1 & - & - & - & - & - & - & - & - & - & - & - & - \\ 
3 & 2 & 1 & - & - & 5 & - & - & 3 & - & 1 & - & - & - & - & - & - & - & - & - & - \\ 
4 & 3 & 3 & - & - & 11 & 1 & - & 10 & 1 & 4 & - & 1 & - & - & - & - & - & - & - & - \\ 
5 & 6 & 4 & - & - & 27 & 4 & - & 32 & 2 & 18 & - & 5 & - & 1 & - & - & - & - & - & - \\ 
6 & 10 & 11 & 1 & - & 65 & 17 & - & 114 & 13 & 79 & 2 & 29 & 1 & 6 & - & 1 & - & - & - & - \\ 
7 & 22 & 20 & 1 & - & 175 & 62 & 1 & 417 & 52 & 392 & 15 & 168 & 3 & 43 & - & 7 & - & 1 & - & - \\ 
8 & 43 & 58 & 2 & 1 & 492 & 248 & 2 & 1691 & 311 & 2082 & 132 & 1153 & 36 & 326 & 3 & 61 & 1 & 8 & 1 & - \\ 
9 & 104 & 142 & 4 & - & 1539 & 1031 & 22 & 7494 & 1843 & 12627 & 1233 & 8886 & 345 & 3055 & 43 & 592 & 4 & 83 & 9 & 1 \\ 
\bottomrule
\end{tabular}
\caption{Number of equivalent CSS $[[n,k,d]]$ codes.}
\endgroup
\end{table}

\begin{table}
\centering
\begingroup
    \footnotesize
\begin{tabular}{l|llll|lll|ll|ll|ll|ll|ll|l|l|l}
\toprule
$k$ & \multicolumn{4}{c|}{0} & \multicolumn{3}{c|}{1} & \multicolumn{2}{c|}{2} & \multicolumn{2}{c|}{3} & \multicolumn{2}{c|}{4} & \multicolumn{2}{c|}{5} & \multicolumn{2}{c|}{6} & \multicolumn{1}{c|}{7} & \multicolumn{1}{c|}{8} & \multicolumn{1}{c}{9} \\ 
$n/d$ & 1 & 2 & 3 & 4 & 1 & 2 & 3 & 1 & 2 & 1 & 2 & 1 & 2 & 1 & 2 & 1 & 2 & 1 & 1 & 1 \\ 
\midrule
1 & 1 & - & - & - & 1 & - & - & - & - & - & - & - & - & - & - & - & - & - & - & - \\ 
2 & - & 1 & - & - & 1 & - & - & - & - & - & - & - & - & - & - & - & - & - & - & - \\ 
3 & - & 1 & - & - & 2 & - & - & 1 & - & - & - & - & - & - & - & - & - & - & - & - \\ 
4 & - & 2 & - & - & 4 & 1 & - & 3 & 1 & 1 & - & - & - & - & - & - & - & - & - & - \\ 
5 & - & 3 & - & - & 10 & 3 & - & 12 & 1 & 5 & - & 1 & - & - & - & - & - & - & - & - \\ 
6 & - & 7 & 1 & - & 25 & 12 & - & 50 & 10 & 30 & 2 & 8 & 1 & 1 & - & - & - & - & - & - \\ 
7 & - & 14 & 1 & - & 74 & 41 & 1 & 208 & 37 & 185 & 13 & 63 & 2 & 10 & - & 1 & - & - & - & - \\ 
8 & - & 40 & 2 & 1 & 229 & 168 & 1 & 953 & 244 & 1176 & 114 & 572 & 31 & 119 & 3 & 14 & 1 & 1 & - & - \\ 
9 & - & 106 & 4 & - & 796 & 717 & 19 & 4688 & 1475 & 8198 & 1082 & 5396 & 305 & 1531 & 40 & 211 & 3 & 17 & 1 & - \\ 
\bottomrule
\end{tabular}
\caption{Number of equivalent indecomposable CSS $[[n,k,d]]$ codes.}
\endgroup
\end{table}
\end{landscape}

\begin{landscape}
    \begin{table}
    \centering
    \begingroup
    \footnotesize
\begin{tabular}{l|llll|lll|ll|ll|ll|ll|ll|l|l|l}
\toprule
$k$ & \multicolumn{4}{c|}{0} & \multicolumn{3}{c|}{1} & \multicolumn{2}{c|}{2} & \multicolumn{2}{c|}{3} & \multicolumn{2}{c|}{4} & \multicolumn{2}{c|}{5} & \multicolumn{2}{c|}{6} & \multicolumn{1}{c|}{7} & \multicolumn{1}{c|}{8} & \multicolumn{1}{c}{9} \\ 
$n/d$ & 1 & 2 & 3 & 4 & 1 & 2 & 3 & 1 & 2 & 1 & 2 & 1 & 2 & 1 & 2 & 1 & 2 & 1 & 1 & 1 \\ 
\midrule
1 & - & - & - & - & - & - & - & - & - & - & - & - & - & - & - & - & - & - & - & - \\ 
2 & - & 1 & - & - & - & - & - & - & - & - & - & - & - & - & - & - & - & - & - & - \\ 
3 & - & - & - & - & 1 & - & - & - & - & - & - & - & - & - & - & - & - & - & - & - \\ 
4 & - & 1 & - & - & - & - & - & 1 & 1 & - & - & - & - & - & - & - & - & - & - & - \\ 
5 & - & - & - & - & 1 & - & 1 & - & - & 2 & - & - & - & - & - & - & - & - & - & - \\ 
6 & - & 1 & - & 1 & - & - & - & 2 & 2 & - & - & 2 & 1 & - & - & - & - & - & - & - \\ 
7 & - & - & - & - & 2 & - & 2 & - & - & 4 & 1 & - & - & 3 & - & - & - & - & - & - \\ 
8 & - & 2 & - & 1 & - & - & - & 4 & 5 & - & - & 5 & 4 & - & - & 3 & 1 & - & - & - \\ 
9 & - & - & - & - & 3 & - & 4 & - & - & 10 & 5 & - & - & 9 & 2 & - & - & 4 & - & - \\ 
\bottomrule
\end{tabular}
    \caption{Number of equivalent GF(4)-linear $[[n,k,d]]$ codes.}\label{table:GF4countfineindecom}
\endgroup
\end{table}

\begin{table}
    \centering
    \begingroup
    \footnotesize
\begin{tabular}{l|llll|lll|ll|ll|ll|ll|ll|l|l|l}
\toprule
$k$ & \multicolumn{4}{c|}{0} & \multicolumn{3}{c|}{1} & \multicolumn{2}{c|}{2} & \multicolumn{2}{c|}{3} & \multicolumn{2}{c|}{4} & \multicolumn{2}{c|}{5} & \multicolumn{2}{c|}{6} & \multicolumn{1}{c|}{7} & \multicolumn{1}{c|}{8} & \multicolumn{1}{c}{9} \\ 
$n/d$ & 1 & 2 & 3 & 4 & 1 & 2 & 3 & 1 & 2 & 1 & 2 & 1 & 2 & 1 & 2 & 1 & 2 & 1 & 1 & 1 \\ 
\midrule
1 & - & - & - & - & - & - & - & - & - & - & - & - & - & - & - & - & - & - & - & - \\ 
2 & - & 1 & - & - & - & - & - & - & - & - & - & - & - & - & - & - & - & - & - & - \\ 
3 & - & - & - & - & - & - & - & - & - & - & - & - & - & - & - & - & - & - & - & - \\ 
4 & - & - & - & - & - & - & - & - & 1 & - & - & - & - & - & - & - & - & - & - & - \\ 
5 & - & - & - & - & - & - & 1 & - & - & - & - & - & - & - & - & - & - & - & - & - \\ 
6 & - & - & - & 1 & - & - & - & - & 1 & - & - & - & 1 & - & - & - & - & - & - & - \\ 
7 & - & - & - & - & - & - & 1 & - & - & - & 1 & - & - & - & - & - & - & - & - & - \\ 
8 & - & - & - & 1 & - & - & - & - & 3 & - & - & - & 2 & - & - & - & 1 & - & - & - \\ 
9 & - & - & - & - & - & - & 2 & - & - & - & 3 & - & - & - & 2 & - & - & - & - & - \\ 
\bottomrule
\end{tabular}
    \caption{Number of equivalent indecomposable GF(4)-linear $[[n,k,d]]$ codes.}
\endgroup
\end{table}
\end{landscape}

\begin{landscape}

\begin{table}[!htbp]
\centering
\begingroup
    \footnotesize
\begin{tabular}{l|lllllllll|lllllll|llll|l}
\toprule
$d$ & \multicolumn{9}{c|}{1} & \multicolumn{7}{c|}{2} & \multicolumn{4}{c|}{3} & \multicolumn{1}{c}{4} \\ 
$n/k$ & 0 & 1 & 2 & 3 & 4 & 5 & 6 & 7 & 8 & 0 & 1 & 2 & 3 & 4 & 5 & 6 & 0 & 1 & 2 & 3 & 0 \\ 
\midrule
0 & - & - & - & - & - & - & - & - & - & - & - & - & - & - & - & - & - & - & - & - & - \\ 
1 & 1 & - & - & - & - & - & - & - & - & - & - & - & - & - & - & - & - & - & - & - & - \\ 
2 & - & 1 & - & - & - & - & - & - & - & 1 & - & - & - & - & - & - & - & - & - & - & - \\ 
3 & - & 2 & 1 & - & - & - & - & - & - & 1 & - & - & - & - & - & - & - & - & - & - & - \\ 
4 & - & 4 & 3 & 1 & - & - & - & - & - & 2 & 2 & 1 & - & - & - & - & - & - & - & - & - \\ 
5 & - & 12 & 16 & 6 & 1 & - & - & - & - & 3 & 4 & 2 & - & - & - & - & 1 & 1 & - & - & - \\ 
6 & - & 35 & 82 & 48 & 9 & 1 & - & - & - & 9 & 27 & 25 & 5 & 1 & - & - & 1 & 1 & - & - & 1 \\ 
7 & - & 140 & 545 & 494 & 125 & 13 & 1 & - & - & 22 & 128 & 209 & 62 & 6 & - & - & 4 & 16 & - & - & - \\ 
8 & - & 646 & 4858 & 7373 & 2579 & 295 & 18 & 1 & - & 85 & 964 & 3450 & 2043 & 255 & 11 & 1 & 11 & 157 & 20 & 1 & 5 \\ 
9 & - & 4337 & 69122 & 202670 & 107191 & 13095 & 656 & 24 & 1 & 363 & 9395 & 94048 & 128405 & 23844 & 757 & 12 & 69 & 3411 & 4425 & 221 & 8 \\ 
\bottomrule
\end{tabular}
\caption{Number of classes of indecomposable codes of given distances}
\endgroup
\end{table}

\begin{table}[!htbp]
\centering
\begingroup
    \footnotesize
\begin{tabular}{l|lllllllll|lllllll|ll|l}
\toprule
$d$ & \multicolumn{9}{c|}{1} & \multicolumn{7}{c|}{2} & \multicolumn{2}{c|}{3} & \multicolumn{1}{c}{4} \\ 
$n/k$ & 0 & 1 & 2 & 3 & 4 & 5 & 6 & 7 & 8 & 0 & 1 & 2 & 3 & 4 & 5 & 6 & 0 & 1 & 0 \\ 
\midrule
0 & - & - & - & - & - & - & - & - & - & - & - & - & - & - & - & - & - & - & - \\ 
1 & 1 & - & - & - & - & - & - & - & - & - & - & - & - & - & - & - & - & - & - \\ 
2 & - & 1 & - & - & - & - & - & - & - & 1 & - & - & - & - & - & - & - & - & - \\ 
3 & - & 2 & 1 & - & - & - & - & - & - & 1 & - & - & - & - & - & - & - & - & - \\ 
4 & - & 4 & 3 & 1 & - & - & - & - & - & 2 & 1 & 1 & - & - & - & - & - & - & - \\ 
5 & - & 10 & 12 & 5 & 1 & - & - & - & - & 3 & 3 & 1 & - & - & - & - & - & - & - \\ 
6 & - & 25 & 50 & 30 & 8 & 1 & - & - & - & 7 & 12 & 10 & 2 & 1 & - & - & 1 & - & - \\ 
7 & - & 74 & 208 & 185 & 63 & 10 & 1 & - & - & 14 & 41 & 37 & 13 & 2 & - & - & 1 & 1 & - \\ 
8 & - & 229 & 953 & 1176 & 572 & 119 & 14 & 1 & - & 40 & 168 & 244 & 114 & 31 & 3 & 1 & 2 & 1 & 1 \\ 
9 & - & 796 & 4688 & 8198 & 5396 & 1531 & 211 & 17 & 1 & 106 & 717 & 1475 & 1082 & 305 & 40 & 3 & 4 & 19 & - \\ 
\bottomrule
\end{tabular}
\caption{Number of classes of indecomposable CSS codes of given distances}
\endgroup
\end{table}

\begin{table}[!htbp]
\centering
\begingroup
    \footnotesize
\begin{tabular}{l|lllllll|llll|l}
\toprule
$d$ & \multicolumn{7}{c|}{2} & \multicolumn{4}{c|}{3} & \multicolumn{1}{c}{4} \\ 
$n/k$ & 0 & 1 & 2 & 3 & 4 & 5 & 6 & 0 & 1 & 2 & 3 & 0 \\ 
\midrule
0 & - & - & - & - & - & - & - & - & - & - & - & - \\ 
1 & - & - & - & - & - & - & - & - & - & - & - & - \\ 
2 & 1 & - & - & - & - & - & - & - & - & - & - & - \\ 
3 & - & - & - & - & - & - & - & - & - & - & - & - \\ 
4 & - & - & 1 & - & - & - & - & - & - & - & - & - \\ 
5 & - & - & - & - & - & - & - & - & 1 & - & - & - \\ 
6 & - & - & 1 & - & 1 & - & - & - & - & - & - & 1 \\ 
7 & - & - & - & 1 & - & - & - & - & 1 & - & - & - \\ 
8 & - & - & 3 & - & 2 & - & 1 & - & - & 1 & - & 1 \\ 
9 & - & - & - & 3 & - & 2 & - & - & 2 & - & 1 & - \\ 
\bottomrule
\end{tabular}
\caption{Number of classes of indecomposable GF(4)-linear codes of given distances}
\endgroup
\end{table}

\end{landscape}
\begin{figure}
\centering
    \centering
    \includegraphics[width=\textwidth]{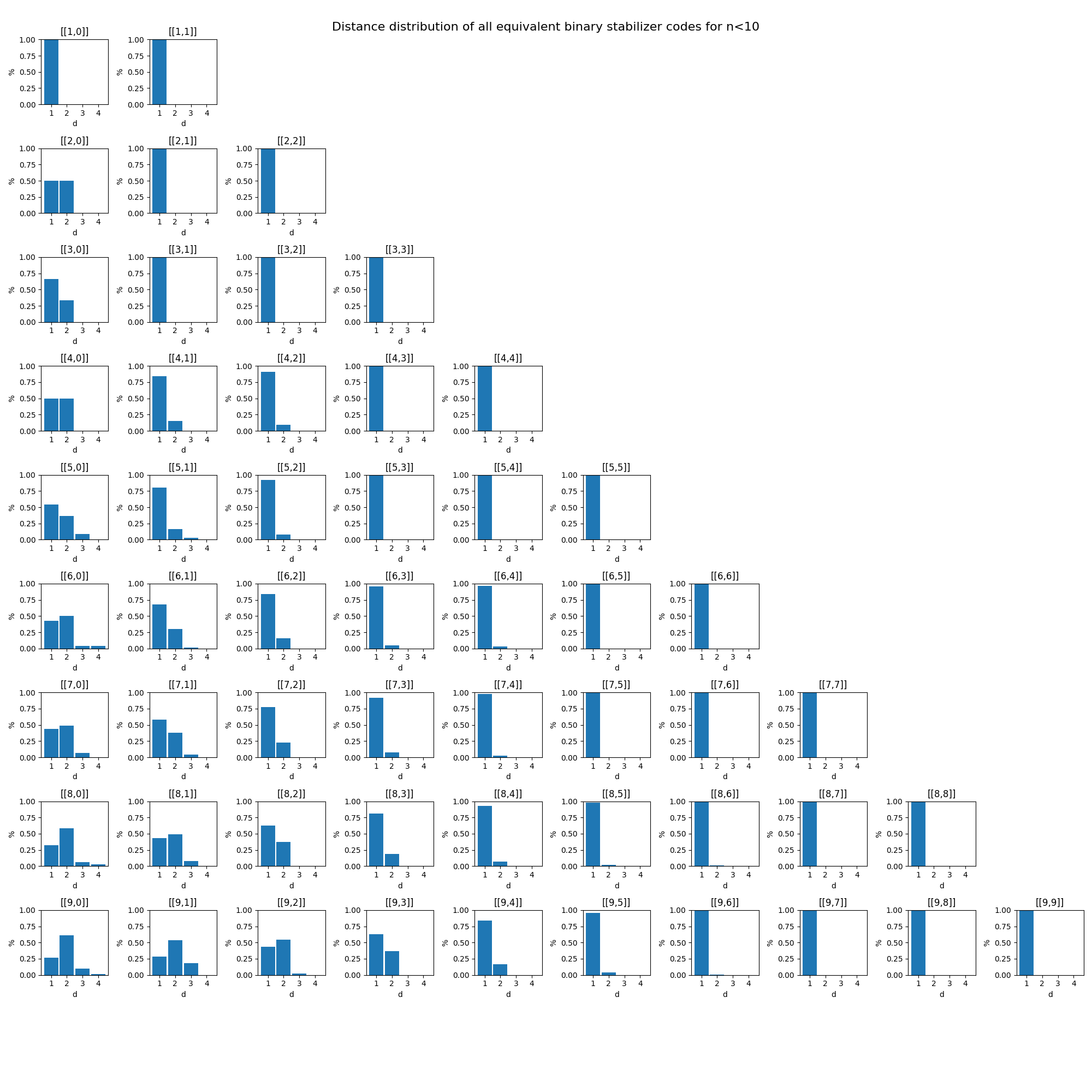}
    \caption{Distributions of minimum distances of all non-equivalent binary stabilizer code for $n<10$.
    \label{fig:distdistall}}
\end{figure}

\begin{figure}
\centering
    \centering
    \includegraphics[width=\textwidth]{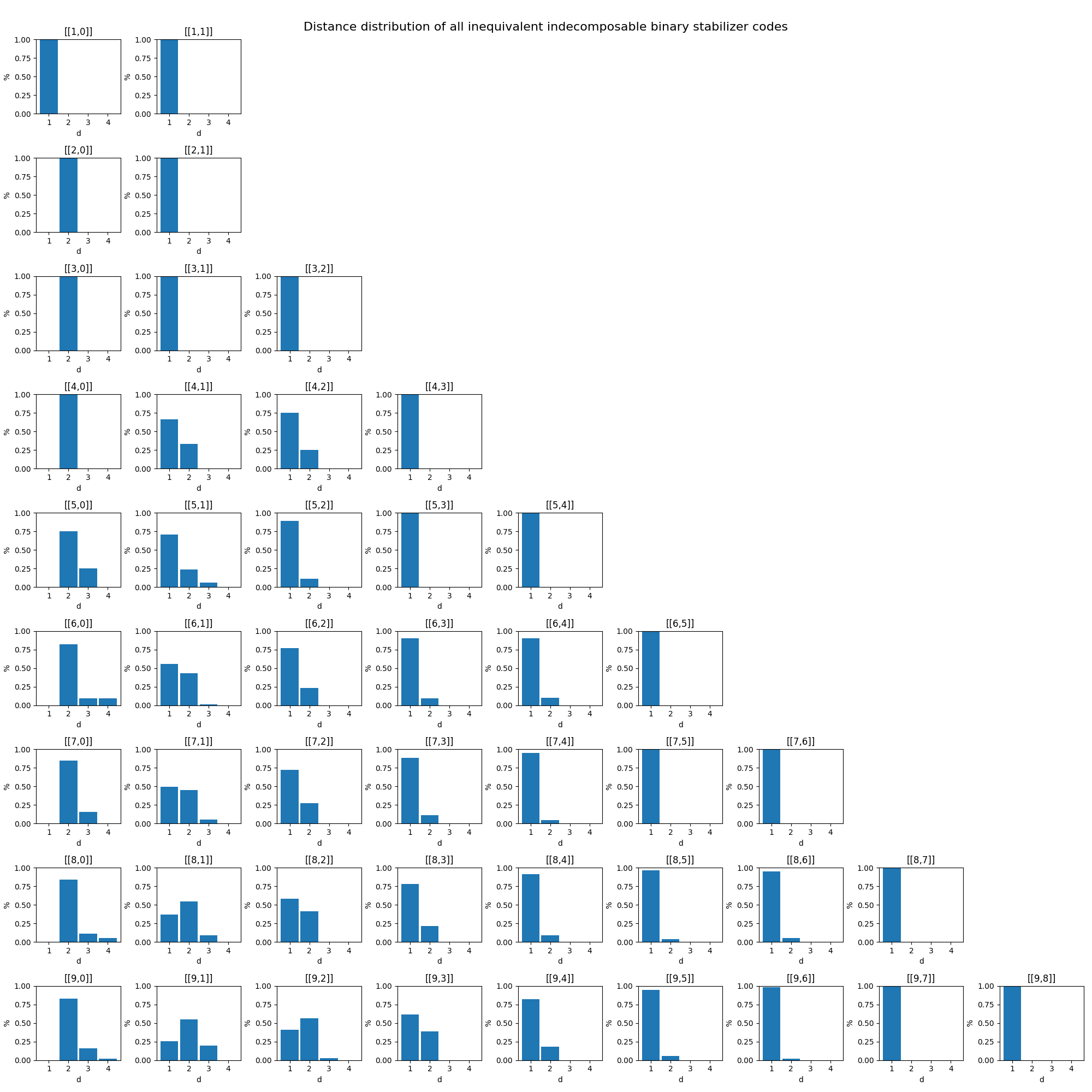}
    \caption{Distributions of minimum distances of all non-equivalent indecomposable binary stabilizer code for $n<10$.
    \label{fig:distdistindec}}
\end{figure}

\begin{landscape}
\begin{table}
\centering
\begingroup
\footnotesize 

\captionsetup{font=small}
\caption{Representatives for every equivalence class of indecomposable codes, sorted by $n$ then $k$ and then by automorphism group order for $[[6,4,1]]$, $[[6,4,2]]$ and $[[6,5,1]]$. \label{tab:all_code_list_ind_6_4_to_6_5_d1_II}}
\endgroup
\end{table}
\end{landscape}

\end{document}